\documentclass[11pt]{article}
\usepackage[noparskip,biblatex,noslantedineq]{time-reversal}
\usepackage{morecommands}

\title{The power of time reversal in Hamiltonian property testing}
\author{Richard R. Allen\footnote{Center for Theoretical Physics --- a Leinweber Institute, MIT, USA; \href{mailto:allenr@mit.edu}{allenr@mit.edu}}\and Matthias C. Caro\footnote{Department of Computer Science, University of Warwick, UK; \href{mailto:matthias.caro@warwick.ac.uk}{matthias.caro@warwick.ac.uk}}\and Yanlin Chen\footnote{QuICS, University of Maryland, USA; \href{mailto:yanlin@umd.edu}{yanlin@umd.edu}}\and Tom Gur\footnote{Department of Computer Science and Technology, University of Cambridge, UK; \href{mailto:tom.gur@cl.cam.ac.uk}{tg508@cam.ac.uk}}\and Angus Lowe\footnote{Center for Theoretical Physics --- a Leinweber Institute, MIT, USA; \href{mailto:alowe7@mit.edu}{alowe7@mit.edu}}}
\date{}

\begin{document}
\maketitle
\begin{abstract}
Quantum algorithms enable Heisenberg-limited scaling across many tasks in sensing and learning, achieving precision $\varepsilon$ using $O(1/\varepsilon)$ total evolution time. However, in Hamiltonian property testing, the best known algorithms have $O(1/\varepsilon^2)$ total evolution time, unless it is possible to \emph{reverse} the flow of time under the Hamiltonian. In this work, we prove that time reversal is necessary to close this gap. We establish evolution time lower bounds for Hamiltonian certification and locality testing, for a continuous family of error metrics ranging from average- to worst-case. We also prove matching upper bounds for Hamiltonian certification in all cases, both with and without time reversal, showing that the advantage from time reversal persists. Our results follow from a new Fourier-analytic approach to quantum lower bounds in continuous time, which is sensitive to time reversal and applicable beyond Hamiltonian property testing. This gives an alternative proof that inverses are required for amplitude amplification and estimation, improving the lower bounds of Tang and Wright~\cite{tang_amplitude_2025} and extending to fractional queries. We also show that, for unstructured search, uncertainty in the phase of the oracle destroys the quantum speedup without inverse queries, even when the oracle is fixed and coherent. Our work rigorously establishes time reversal as a resource governing both the precision of quantum measurement and the robustness of quantum algorithms.
\end{abstract}
\section{Introduction}
Reversing time evolution is a ubiquitous and often underappreciated step in many tasks across quantum information science. In quantum algorithms, inverses of unitary operations enable a suite of tools, from basic primitives like amplitude amplification~\cite{BHMT00AmplitudeEstimation} and uncomputation~\cite{bennett1973reversibility}, to state-of-the-art methods for Hamiltonian simulation via quantum signal processing~\cite{Gily_n_2019}. 
Time reversal can also be useful as a resource in quantum metrology~\cite{li_metrology_2023,kobrin2024universal}, while in many-body physics it underpins signatures of chaos and information scrambling like out-of-time-ordered correlators~\cite{swingle_scrambling_2016,Yan_2020,google_interference_2025}.

More broadly, a central goal in the field is to understand which resources confer an advantage in information processing tasks.
In quantum learning theory, prior work has established rigorous separations in the complexity of characterizing quantum systems with and without quantum memory~\cite{huang2022quantumadvantagelearningfromexperiment,chen2022exponential,chen2024tightboundspauli}, adaptive measurements~\cite{chen2023doesadaptivityhelpquantum,chen2024adaptivityhelpexponentiallyshadow}, or state preparation unitaries~\cite{huggins_expectations_2022,vanapeldoorn_tomography_2023}. Yet, the necessity of time-reversed dynamics in learning and testing of quantum systems has been largely unexplored.


Recent work addresses a variant of this question by showing that computational tasks become fundamentally harder without inverses of unitary oracles~\cite{fefferman2018subset,cotler2023hardness,schuster2025lowdepth}. A prominent example is the work of Tang and Wright~\cite{tang_amplitude_2025}, which proves that inverse queries are necessary for amplitude amplification and estimation: forward-only algorithms require $\Omega(1/\eps^2)$ queries to achieve precision $\eps$, whereas inverse access reduces this cost to $O(1/\eps)$. However, in this and all prior examples, the separation between forward and inverse access is established only for \emph{discrete} queries, where a fixed unitary oracle $U$ (or $U^\dagger$) is applied an integer number of times. By contrast, in realistic quantum experiments, unitary evolution is generated by continuous-time Hamiltonian dynamics.

Given this difference, the power of time reversal remains unclear in more physical settings. Indeed, it is not obvious that separations in the discrete query model should still hold in the continuous time setting. 
For example, with the ability to perform short-time evolution, Hamiltonian learning is possible using $O(1/\eps)$ total evolution time~\cite{Huang_2023}, whereas with only long-time evolution this is no longer the case\footnote{Unless one is willing to pay a cost that scales exponentially in the number of terms in the Hamiltonian~\cite{shin2026heisenberglimitedhamiltonianlearningshorttime}.}~\cite{Dutkiewicz2024advantageofquantum, hu2025ansatzfreehamiltonian}. We therefore ask: \emph{Does time reversal enable a provable advantage in continuous-time quantum experiments?}

The emerging field of Hamiltonian property testing provides a natural setting to explore this question~\cite{Bluhm_2026,arunachalam2025testinglearningstructuredquantum}. For nearly all problems studied here, the best-known algorithms without time reversal achieve $O(1/\eps^2)$ total evolution time, the \emph{standard quantum limit}, even allowing short-time evolution, ancillas, and arbitrary quantum control~\cite{kallaugher_locality,sinha2025improvedhamiltonianlearningsparsity}. By contrast, with time reversal, it is often possible to restore the \emph{Heisenberg limit} of $O(1/\eps)$ scaling~\cite{kallaugher_locality, Gao_2026}.\footnote{We note that Heisenberg-limited scaling for Hamiltonian certification is also achievable without time reversal, but only under additional locality assumptions on the unknown Hamiltonian~\cite{bluhm2026certifyinglearninglocalquantum, shin2026hamiltonianlearningcertificationeigenphase}, see~\Cref{sec:discussion}.}

In this work, we show that time reversal is necessary to close this gap in general. To prove our results, we develop a new Fourier-analytic approach to quantum lower bounds which works naturally in continuous time and distinguishes forward from reverse dynamics. We anticipate this framework may have broader applications, for example, to studying the complexity of other inherently quantum problems~\cite{she2023unitary, bostanci2025unitarycomplexityuhlmanntransformation}.

\subsection{Main results}\label{sec:main_results}
\paragraph{Hamiltonian certification.} For an unknown Hamiltonian $H$, we first establish lower bounds on the total evolution time for \emph{Hamiltonian certification}~\cite{Gao_2026}, the task of deciding whether $H$ is equal to or far from a known target Hamiltonian:
\begin{problem}[Hamiltonian certification]\label{problem:ham_cert}
     Let $\eps \in (0,1)$, and let $H_0\in \C^{d\times d}$ be a known target Hamiltonian such that $\|H_0\| \le 1$ and $\tr(H_0)=0$. Given black-box access to time evolution under an unknown Hamiltonian $H\in\C^{d\times d}$ such that $\|H\|\le 1$ and $\tr(H)=0$, perform the following:
    \begin{itemize}
        \item (completeness) if $H = H_0$, accept with probability at least $99/100$;
        \item (soundness) if $\frac{1}{\sqrt{d}} \norm*{H - H_0}_2 \ge \eps$, accept with probability at most $1/100$.
    \end{itemize}
\end{problem}
\noindent 
As is standard, this task is defined with respect to normalized Frobenius norm $\| \cdot \|_2/\sqrt{d}$, which is naturally interpreted as an average-case rate of distinguishability between the pair of Hamiltonians (see~\Cref{app:op_meaning}). 
We assume $\tr(H) = \tr(H_0) = 0$ without loss of generality since Hamiltonian trace is undetectable. The assumption on operator norm is also without loss of generality, since one can rescale the units of energy and time to accommodate the general case. We show:

\begin{theorem}\label{thm:zero_testing_lb}
  Any quantum algorithm without time reversal for Hamiltonian certification (\Cref{problem:ham_cert}) requires a total evolution time
  \begin{align}
      \Tfor = \Omega\bigg( \mathrm{min}\bigg\{ \frac{1}{\eps^2}, \frac{\sqrt{d}}{\eps} \bigg\} \bigg).
  \end{align}
  \vspace{1em}
\end{theorem}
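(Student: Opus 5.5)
We reduce certification to a distinguishing problem and use the Fourier-analytic viewpoint promised in the introduction. Take $H_0=0$. The alternative is a random ensemble $H=\eps\, U D U^\dagger$, where $U$ is Haar-random (or drawn from a unitary design) and $D$ is a fixed traceless diagonal with entries $\pm 1$. Then $\|H\|_2/\sqrt d=\eps$ and $\|H\|\le\eps\le 1$, so every member of the ensemble is a valid soundness instance. A second, cheaper option is $D=\mathrm{diag}(\theta_1,\dots,\theta_d)$ with $\theta_j=\pm\eps$ i.i.d. random signs, used together with the random basis $U$.

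A forward-only algorithm interleaves arbitrary control unitaries $V_k$, possibly with ancillas, with evolutions $e^{-iHt_k}$, where $t_k\ge 0$ and $\sum_k t_k=T$. Under the null hypothesis every evolution is the identity. So it suffices to show that the final state averaged over the ensemble has fidelity $1-o(1)$ with the null final state whenever $T\ll\min\{1/\eps^2,\sqrt d/\eps\}$.

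\emph{Key step: Fourier expansion with one-sided frequencies.} In the eigenbasis of $H$,
\[
e^{-iHt}=\sum_j e^{-i\eps\theta_j t}\,|u_j\rangle\langle u_j| .
\]
The final amplitude is therefore a sum over paths $(j_1,\dots,j_K)$ of $\prod_k e^{-i\eps\theta_{j_k}t_k}$ times path amplitudes. Time reversal is what normally allows $e^{+iHt}$ to cancel the phase and produce a coherent linear-in-$t$ signal. Without it, all frequencies in the amplitude have a fixed sign. I would then average over the random signs, or over $U$. The mean evolution $\mathbb E[e^{-iHt}]$ is $\phi(t)\,I$, with $\phi(t)=\cos(\eps t)$ or approximately $e^{-\eps^2t^2/2}$ in the Haar case. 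The first-order term thus contributes only a global phase, $1-O(\eps^2 t^2)$, and this damping is the deviation of order $\eps^2 T$.

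The remaining contributions are correlations between different time slices that share an eigenvector. Their operator-norm size is governed by the overlap between the random basis and the fixed subspace probed by the controls, which is of order $1/d$ per pair. Summed over the pairs of slices with $\sum_k t_k=T$, this gives a correction of order $(\eps T)^2/d$.

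\emph{Combining the bounds.} A hybrid argument replaces $e^{-iHt_k}$ by its average one slice at a time. I expect the bound
\[
1-F \;\lesssim\; \eps^2 T+\frac{\eps^2T^2}{d}.
\]
The $\eps^2 T$ part comes from the diagonal second-moment term, which is linear rather than quadratic in $T$. This linearity is exactly what the lack of reversal buys, because forward-only phases add incoherently in the averaged channel: the averaged channel is a dephasing channel with rate $\eps^2$. Requiring the right-hand side to be $\Omega(1)$ forces $T=\Omega(1/\eps^2)$ or $T=\Omega(\sqrt d/\eps)$, which gives the minimum in the statement.

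\emph{Main obstacle.} The hardest step is proving that the averaged channel behaves like dephasing with total error linear in $T$ against adaptive control with ancillas, rather than the naive $(\eps T)^2$. Coherent adaptive strategies could try to re-phase contributions across slices. My first attempt would be a purification argument: treat the random signs/basis as an environment register and show that the environment's state after each slice moves by at most $O(\eps^2 t_k)$ in Bures distance. This monotone accumulation should hold because the frequency spectrum $\{\eps\theta_j\}$ with only forward $t$ has a positive-type (Bochner) structure, so no interference can cancel accumulated decoherence. The $\sqrt d/\eps$ branch should follow from the $1/d$ Weingarten-type suppression of cross terms.
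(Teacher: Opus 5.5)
\textbf{There is a genuine gap: your alternative ensemble is not a hard instance.} The inequality $1-F\lesssim\varepsilon^2T+\varepsilon^2T^2/d$ you are aiming for is false for it.

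Every $H=\varepsilon UDU^\dagger$ with $D$ a traceless $\pm1$ diagonal satisfies $H^2=\varepsilon^2I$, so $\tr(e^{-iHt})/d=\cos(\varepsilon t)$ for every $U$. Consider the forward-only tester that prepares $\ket{\Phi}=\sum_i\ket{ii}/\sqrt d$, applies $e^{-iHt}\otimes I$ once with $t=\pi/(2\varepsilon)$, and measures $\{\ketbra{\Phi}{\Phi},I-\ketbra{\Phi}{\Phi}\}$. It accepts with certainty when $H=0$ and rejects with certainty on every member of your ensemble. Its total time is $\pi/(2\varepsilon)=o(\min\{1/\varepsilon^2,\sqrt d/\varepsilon\})$. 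Your random-sign variant, after recentering, fails the same way up to an $O(1/d)$ correction. At that time $e^{-iHt}=-iUDU^\dagger$ is a reflection, which is exactly the self-inverse operator the paper's random angles are introduced to exclude.

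The step that breaks is your claim that the damping is ``of order $\varepsilon^2T$''. In fact $1-\cos(\varepsilon t)\approx\varepsilon^2t^2/2$ is quadratic in the length of a single slice, and nothing forces $t_k\lesssim1$. Already for $q=1$ the infidelity is $\sin^2(\varepsilon t)$. Your sketch also never pins down where $t_k\ge0$ enters. Per-slice bounds that are invariant under $t_k\mapsto-t_k$, as in the standard hybrid argument, cannot beat $\varepsilon^2T^2$, because $O(1/\varepsilon)$ is achievable with reversal.

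\textbf{What is missing} is an ensemble that is both sparse and spectrally spread, together with a lemma that turns the spread into linear-in-$T$ behavior using $t_k\ge0$. The paper takes $H_1=\sum_{j=1}^{\deff}h_j\Pi_j$ (recentered), with rank-$\lfloor d/\deff\rfloor$ diagonal blocks and
\begin{itemize}
\item $h_j=\beta_j\theta_j$, where $\beta_j\sim\mathrm{Ber}(50/\deff)$ and $\theta_j\sim\mathrm{Unif}([-\sigma,\sigma])$;
\item $\deff=\min\{d,\lceil1/\varepsilon^2\rceil\}$ and $\sigma\asymp\varepsilon\sqrt{\deff}$.
\end{itemize}
For $\varepsilon\gg1/\sqrt d$, only an $O(\varepsilon^2)$ fraction of the spectrum moves, by random $\Theta(1)$ amounts that dephase on an $O(1)$ timescale. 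A single Bell shot therefore rejects with probability $O(\varepsilon^2)$ on average, however long you evolve. In the high-precision regime, the block rank decouples $\mathsf p$ from $d$.

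The proof then proceeds in four steps.
\begin{enumerate}
\item The anchored Efron--Stein inequality splits the squared trace distance into one term per block, rather than one per query.
\item A coarse-grained path sum writes each term as $\mathsf p$ times a quadratic form in $K(x,y)=\E(1-e^{ix\omega})(1-e^{-iy\omega})$, evaluated at partial sums $\mathsf t_z\ge0$.
\item Plancherel applied to $f=\sum_jc_j\mathbf 1[0\le x\le x_j]$ gives $\sum_{i,j}c_i^*c_jK(x_i,x_j)\le\pi\sigma\sum_{i,j}c_i^*c_j\min(x_i,x_j)$. Two things are essential here: the bounded density $1/(2\sigma)$ of $\omega$, which your point masses at $\pm\varepsilon$ lack, and $x_j\ge0$.
\item The $\min$-kernel form telescopes to $\sum_jt_j\langle\psi^{(j-1)}|\Pi_i|\psi^{(j-1)}\rangle$. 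This gives squared trace distance $O(\sigma T/\deff)$, i.e.\ $O(\varepsilon^2T)$ or $O(\varepsilon T/\sqrt d)$ in the two regimes.
\end{enumerate}
This yields both branches of the minimum without any Weingarten-type cross-term analysis.
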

\noindent
We prove this is tight, see below. It should be emphasized that~\Cref{thm:zero_testing_lb} holds for any quantum algorithm without time reversal, including algorithms using arbitrarily many ancilla, interleaved unitaries, and time evolution queries of any non-negative duration. Comparing~\Cref{thm:zero_testing_lb} with the main result of~\cite{Gao_2026}, which gives an $O(1/\eps)$ time Hamiltonian certification algorithm \emph{with} time reversal, this proves a separation between quantum algorithms with and without time reversal.

\paragraph{From average- to worst-case norms.} While the normalized Frobenius norm has become the standard error metric in Hamiltonian property testing, in many settings an average-case guarantee is not sufficient. On the opposite end of the spectrum is the operator norm $\| \cdot \|$, which characterizes the worst-case rate of distinguishability. This is reminiscent of the error metrics used in learning and tomography, where the goal is to obtain a complete description of the queried object. In these settings, Heisenberg-limited scaling typically \emph{is} attainable without time reversal or inverses~\cite{Huang_2023, Haah_2023}. One may therefore ask whether the separation between Hamiltonian property testing with and without time reversal persists for other choices of norm, or if it is particular to the use of an average-case norm.

We prove the separation persists, for a continuous family of norms interpolating between average- and worst-case; namely, the normalized Schatten $p$-norms $\| \cdot \|_p / d^{1/p}$ with $p \in [2,\infty]$.\footnote{Here and throughout the paper, we adopt the conventions $1/\infty = 0$, $x^\infty = 0$ for $|x| < 1$, $x^\infty = \infty$ for $x > 1$, etc. We further treat $p<\infty$ as a constant in all asymptotic bounds.}
We give tight upper and lower bounds for Hamiltonian certification in all such norms, both with and without time reversal. Letting $T_{\rightarrow}$ and $T_{\leftrightarrow}$ denote the total evolution time for Hamiltonian certification without and with time reversal, we prove, for any $p \in [2,\infty]$,
\begin{align}
    \Tfor = \Theta\bigg(\mathrm{min}\bigg\{\frac{1}{\eps^p},\frac{d^{1-1/p}}{\eps}\bigg\}\bigg) \quad \text{and} \quad  \Trev = \Theta\bigg(\mathrm{min}\bigg\{\frac{1}{\eps^{p/2}},\frac{d^{1/2-1/p}}{\eps}\bigg\}\bigg).
\end{align}
This extends the known upper bounds for Hamiltonian certification in normalized Frobenius norm ($p=2$) with~\cite{Gao_2026} and without~\cite{sinha2025improvedhamiltonianlearningsparsity} time reversal to the high-precision regime $\eps \ll 1/\sqrt{d}$. It also resolves the former work's open question on the difficulty of Hamiltonian certification in norms other than normalized Frobenius norm. With these tight bounds, we fully characterize the time complexity of Hamiltonian certification, from average- to worst-case error.

\paragraph{Hamiltonian locality testing.} Beyond certification, our techniques can be used to prove lower bounds for other problems in Hamiltonian property testing. In particular, we consider \emph{Hamiltonian locality testing}~\cite{Bluhm_2026, arunachalam2025testinglearningstructuredquantum, kallaugher_locality}, the task of deciding whether an unknown Hamiltonian $H$ is $k$-local or far from all $k$-local Hamiltonians:

\begin{problem}[Hamiltonian locality testing]\label{problem:ham_locality_test} Let $\eps \in (0,1)$, and let $k$ and $n$ be positive integers with $k \le n$. Given black-box access to time evolution under an unknown $n$-qubit Hamiltonian $H \in \C^{2^n \times 2^n}$  such that $\| H \| \le 1$ and $\tr(H) = 0$, perform the following:
\begin{itemize}
    \item (completeness) if $H$ is $k$-local, accept with probability at least $99/100$;
    \item (soundness) if $\frac{1}{\sqrt{d}} \| H - H_{\mathrm{loc}} \|_2 \ge \eps$ for all $k$-local $H_{\mathrm{loc}}$, accept with probability at most 1/100.
\end{itemize}
\end{problem}

We show the following lower bound for this task.
\begin{theorem}\label{thm:locality_testing_lb}
    For constant $k\geq 1$, any quantum algorithm without time reversal for Hamiltonian locality testing (\Cref{problem:ham_locality_test}) requires a total evolution time
  \begin{align}
    \Tfor = \Omega\bigg(\mathrm{min}\bigg\{\frac{1}{\eps^2},\frac{\sqrt{d}}{\eps}\bigg\}\bigg).
  \end{align}
\end{theorem}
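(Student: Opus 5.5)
The plan is to reduce locality testing to a hard distinguishing problem of the same shape as the one behind \Cref{thm:zero_testing_lb}: distinguish a fixed $k$-local Hamiltonian (we take $H=0$, or a fixed $H_0$ of small norm if the hard instance needs a nonzero base point) from a random ensemble of Hamiltonians that are, with high probability, $\eps$-far in normalized Frobenius norm from \emph{every} $k$-local Hamiltonian. The yes instance is then $k$-local and the no instances are far from the $k$-local subspace. Any locality tester therefore distinguishes the two cases, and the lower bound follows from the Fourier-analytic indistinguishability bound used for certification, provided the ensemble of far Hamiltonians has the same spectral structure.

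\emph{Construction.} Let $\Pi$ be the orthogonal projection, in Hilbert--Schmidt inner product, onto the span of Pauli strings of weight $\le k$. For constant $k$ this subspace has dimension $\sum_{j\le k}\binom{n}{j}3^j = \mathrm{poly}(n) = \mathrm{polylog}(d)$. Start from the random hard ensemble $H=\eps\, U D U^\dagger$ (or whichever ensemble the certification lower bound uses), where $D$ is traceless with $\pm1$ entries on the relevant scale and $U$ is Haar random. Then set $H' = H - \Pi(H)$. Since $H-H_{\mathrm{loc}}$ is minimized over $k$-local $H_{\mathrm{loc}}$ at $H_{\mathrm{loc}}=\Pi(H)$, the distance from $H'$ to the $k$-local set is exactly $\|H'\|_2/\sqrt d$. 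By Haar concentration, each of the $\mathrm{poly}(n)$ local Pauli coefficients of $H$ has magnitude about $\eps/\sqrt d$. Hence $\|\Pi(H)\|_2/\sqrt d=O(\eps\sqrt{\mathrm{poly}(n)/d})$, which is negligible, so $H'$ is $\Omega(\eps)$-far with high probability. Similarly, $\|\Pi(H)\|\le \|\Pi(H)\|_2/\sqrt d\cdot\sqrt{\mathrm{poly}(n)}$ per Pauli term, which is tiny, so $\|H'\|\le1$ holds after a constant rescaling.

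\emph{Transfer of the lower bound.} For any forward-only algorithm with total time $T$, the output states under $H'$ and $H$ differ in trace distance by at most $T\|\Pi(H)\|$. This follows from the standard Duhamel/hybrid bound $\|e^{-iH't}-e^{-iHt}\|\le t\|H-H'\|$ applied query by query. We have
\begin{align}
\|\Pi(H)\|\le \sum_{|P|\le k}|c_P| = O\big(\mathrm{poly}(n)\,\eps/\sqrt d\big)
\end{align}
with high probability. For $T=o(\min\{1/\eps^2,\sqrt d/\eps\})$ this gives $T\|\Pi(H)\|\le \mathrm{poly}(n)\,T\eps/\sqrt d = o(\mathrm{poly}(n))$, which is not small enough on its own. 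To fix this, we either absorb the polylog factor by treating it inside the $\Omega$ (not allowed), or we avoid the perturbation argument entirely. The better route is to build the ensemble directly inside the orthogonal complement of the local subspace. Concretely, we choose $D$ and the random unitary so that $H$ has no weight-$\le k$ Pauli components. One option is to conjugate by a random element of a group (e.g.\ random Cliffords restricted, or random permutation-with-signs twirl) that maps weight-$\le k$ components to zero in expectation and exactly preserves the spectrum. An alternative is to take $H=\eps\, Z_1\cdots Z_{n}$-type high-weight structures conjugated by random unitaries on a subsystem, which keeps $\tr(PH)=0$ exactly for low-weight $P$.

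\emph{Main obstacle.} The difficult step is producing a far ensemble that simultaneously (i) lies exactly, or up to an error $o(1/T)$ in operator norm, in the non-local complement, and (ii) retains the spectral/Fourier indistinguishability from $0$ that drives \Cref{thm:zero_testing_lb}. That bound presumably relies on Haar-like averaging of $e^{-iHt}$. I would first try a symmetrization in which the Haar unitary is replaced by one drawn from a unitary design supported on $n-k-1$ qubits and tensored so that every low-weight Pauli expectation vanishes, and then check that the moment bounds in the certification argument only require low-order design properties. If that fails, I would fall back on the perturbation route with a sharper, $\mathrm{poly}(n)$-free estimate of $\|\Pi(H)\|$ using matrix concentration.
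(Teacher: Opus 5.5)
\textbf{There is a genuine gap, in the step that carries the proof.} You never exhibit an ensemble that is provably hard for forward-only algorithms, and the one you write down is not hard.

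For $H=\eps\,UDU^\dagger$ with $D$ a traceless $\pm1$ diagonal, $\tr(e^{-iHt})=d\cos(\eps t)$. So evolving half of $\ket{\Phi}$ forward for time $\pi/(2\eps)$ yields a state orthogonal to $\ket{\Phi}$, while under $H=0$ the state stays $\ket{\Phi}$. One Bell measurement therefore distinguishes perfectly with total time $O(1/\eps)$. A fixed, known spectrum lets forward evolution produce a self-inverse operator, and the paper's construction is designed to prevent exactly this.

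The paper's hard ensemble (\Cref{problem:two_point}) randomizes the \emph{eigenvalues} in a fixed diagonal basis, not the eigenbasis: $\bm{H}_1=\sum_j\bm{\beta}_j\bm{\theta}_j\Pi_j$ with rank-$\lfloor d/\deff\rfloor$ diagonal projectors, sparse activations $\bm{\beta}_j\sim\mathrm{Ber}(50/\deff)$, and continuous angles $\bm{\theta}_j\sim\mathrm{Unif}([-\sigma,\sigma])$. Indistinguishability comes from the anchored Efron--Stein inequality (\Cref{lem:anc_efron_stein}) over the independent block entries, combined with the kernel bound (\Cref{lem:kernel_bound}). The kernel bound is where $t_j\ge 0$ enters, and its constant $(\theta_0+\sigma)^2/\sigma$ diverges as $\sigma\to 0$ at fixed $\theta_0>0$, i.e.\ for a deterministic nonzero spectrum like yours. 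Haar or design averaging plays no role. So the ``main obstacle'' you name, and the fixes you suggest (Clifford twirls, designs on $n-k-1$ qubits), target the wrong mechanism, and the indistinguishability step is left unproven.

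\textbf{The projection and perturbation steps are unnecessary.} Completeness only needs the null $H=0$ to be $k$-local. Soundness only needs the alternative to be far from the $k$-local set with constant probability; it need not lie in the orthogonal complement.

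The paper's diagonal ensemble depends only on the first $r=\min\{n,\lceil\log_2(1/\eps_0^2)\rceil\}$ qubits, and \Cref{lem:alternative_far_from_k_local} expands it in the $Z$-Fourier basis. The low-degree weight $\|\hat{\bm{h}}_{\le k}\|_2^2$ has mean $(N_k(r)/2^r)\,\mathsf{p}\sigma^2/3$, with $N_k(r)/2^r\le e^{-8}$ once $r\ge 4k+16$. The total weight concentrates near $\mathsf{p}\sigma^2/3$. Markov plus Chebyshev then give distance at least $\eps_0/10$ with probability at least $4/5$. Feeding $\theta_0=0$, $\sigma=\eps_0\sqrt{\deff}/4$, $\deff=2^r$ into \Cref{thm:two_point_lb} gives $T=\Omega(\deff/\sigma)=\Omega(\sqrt{\deff}/\eps_0)$, which is the claimed bound.

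As a side remark, your Pauli-coefficient estimate is off by a factor $\sqrt d$. Under Haar conjugation, $|c_P|=|\tr(PH)|/d\approx\eps/d$, so the $\mathrm{poly}(n)$ obstruction in your transfer step is an artifact. In fact, Haar-conjugating the paper's ensemble keeps it hard. For each fixed $U$, absorb $U$ into the interleaving unitaries; the null state is unchanged, and trace distance is convex. Conjugation also already makes the ensemble $\Omega(\eps)$-far from $k$-local without any projection. That would be a legitimate variant, but only once the correct random-eigenvalue ensemble is in place.
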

\noindent
\Cref{thm:locality_testing_lb} shows that the  $O(1/\eps^2)$ time locality testing algorithm without time reversal in~\cite{kallaugher_locality} is optimal (in the low-precision regime $\eps \gg 1/\sqrt{d}$). That work additionally gives an $O(1/\eps)$ time algorithm with time reversal, so~\Cref{thm:locality_testing_lb} again proves a separation between quantum algorithms with and without time reversal. 

\paragraph{Amplitude amplification and estimation.} Though our techniques are well-suited to proving lower bounds in a continuous time setting, they are also applicable in the traditional unitary oracle model. Here, the analogous question to the role of time reversal is whether or not access to the inverse unitary oracle offers an advantage. Using our new methods, we revisit the question posed in~\cite{tang_amplitude_2025} on the power of inverses for amplitude amplification and estimation.
\begin{problem}[Amplitude amplification and estimation]\label{problem:amp_est} Let $\eps \in (0,1)$, $\Pi \in \C^{d \times d}$ be a known orthogonal projector, and $\ket{0} \in \C^d$ be a fixed initial state. Let $U$ be a $d$-dimensional unitary operator such that
\begin{align}
    U \ket{0} = a \ket{\phi_{\Pi}} + \sqrt{1 - a^2} | \phi_{\Pi^\perp} \rangle
\end{align}
where $\Pi \ket{\phi_{\Pi}} = \ket{\phi_{\Pi}}$ and $\Pi | \phi_{\Pi^\perp} \rangle = 0$. Given black-box access to $U$, the goal of \emph{amplitude amplification} is to output a state which is $1/100$-close to $| \phi_{\Pi} \rangle$ in trace distance, and the goal of \emph{amplitude estimation} is to output an estimate $\hat{a}$ which is $\eps$-close to $a$ with probability at least $99/100$.
\end{problem}
We give an alternative proof of the main result in~\cite{tang_amplitude_2025}, which also improves upon their bound in the high-precision regime:
\begin{theorem}\label{thm:amp_est}
    Given only the ability to apply the $d$-dimensional unitary $U$ and not its inverse, any algorithm for amplitude amplification or estimation (\Cref{problem:amp_est}) requires
    \begin{align}
        \Omega\bigg(\mathrm{min}\bigg\{\frac{1}{a^2},\frac{\sqrt{d}}{a}\bigg\}\bigg) \quad \mathit{or} \quad
        \Omega\bigg(\mathrm{min}\bigg\{\frac{1}{\eps^2},\frac{\sqrt{d}}{\eps}\bigg\}\bigg)
    \end{align}
    applications of $U$, respectively, even allowing for fractional queries.
\end{theorem}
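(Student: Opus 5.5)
We reduce both tasks to a single distinguishing problem and then apply the Fourier-analytic continuous-time lower bound that underlies \Cref{thm:zero_testing_lb}. Fix the target subspace to be spanned by a single basis vector, and set $U = e^{-i\theta K}$, where $K$ is a Hamiltonian chosen so that $U$ rotates $\ket{0}$ into a random direction. Fractional queries $U^t$ with $t\ge 0$ then correspond exactly to forward time evolution under $\theta K$, so any algorithm using $q$ (possibly fractional) forward queries is a forward-only algorithm with total evolution time $q$.

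Concretely, we take $K$ to be a random rank-two generator $K_{\psi} = i(\ket{\psi}\!\bra{0} - \ket{0}\!\bra{\psi})$, with $\ket\psi$ drawn Haar-randomly from $\ket{0}^\perp$, and let $\Pi$ project onto a fixed direction. A cleaner alternative is to let the marked state itself be random, with $\Pi = \ket{\psi}\!\bra{\psi}$ for a random $\psi$ unknown to the algorithm. This is problematic, however, because $\Pi$ is supposed to be known. We therefore instead randomize over a Haar-random unitary $V$ that fixes both $\ket0$ and $\Pi$, conjugating $U \mapsto VUV^\dagger$; this preserves the amplitude $a$ and the validity of \Cref{problem:amp_est}.

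We then compare two ensembles: amplitude $a$ versus amplitude $a' = a+2\eps$ for estimation, and amplitude $a$ versus amplitude $0$ for amplification, since an amplifier can be turned into a detector by measuring $\Pi$. In each case the generators $H_1$ and $H_0$ differ in normalized Frobenius norm by $\Theta(\theta/\sqrt d)$. This is the wrong scaling. To match the hard instances of \Cref{thm:zero_testing_lb}, where $\|H-H_0\|_2/\sqrt d\approx\eps$, we instead take $U = e^{-iH}$ with $H$ drawn from the same random ensemble used there, spread over about $d$ directions. We then read off the amplitude as $|\bra{\phi}\,U\ket0|$, which is linear in $\eps$ to first order for a suitable relation between $\ket0$ and the ensemble.

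The main step, and the main obstacle, is the following. The Fourier-analytic lower bound controls the distinguishability after total forward time $T$ by a quantity of the form
\[
\sum_{\text{paths}} \Bigl|\,\mathbb E_V\,\widehat{\cdot}\,\Bigr| \;\lesssim\; T\eps^2 + T\eps/\sqrt d .
\]
The reason is that, without inverses, only positive frequencies appear, so averaging over the random phase or unitary kills the first-order (linear-in-$\eps$) interference terms. We must verify that the amplitude-estimation instance inherits this cancellation. Concretely, the averaged output state must depend on the random perturbation only through second-order terms (magnitude $\eps^2$ per unit time), together with a residual first-order term suppressed by $1/\sqrt d$ from the finite dimension. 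Requiring constant distinguishability then gives $T \ge \Omega(\min\{1/\eps^2,\sqrt d/\eps\})$, and substituting $\eps\to a$ yields the amplification bound.

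The delicate point is choosing the ensemble so that the amplitude $a$ is deterministic while the unitary remains random enough for the phase averaging to apply. I expect to obtain this by adding a uniformly random global-phase-like rotation $e^{i\varphi\Pi}$ after the generator, which leaves $a$ unchanged and supplies exactly the phase uncertainty that forward-only algorithms cannot undo.
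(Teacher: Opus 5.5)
Your endpoint is the right one: an oracle $e^{-ic\bm{H}}$ with $\bm{H}$ drawn from \Cref{problem:two_point}, so that fractional forward queries are forward time evolution, giving squared distinguishability of order $T\eps^2+T\eps/\sqrt{d}$. But the proposal never supplies the step the paper's proof consists of, namely a choice of initial state and known projector that turns this ensemble into an amplitude-estimation instance.

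\textbf{The reduction step is missing.} You leave it as ``a suitable relation between $\ket{0}$ and the ensemble.'' The choice you do state, a target spanned by a single basis vector, carries essentially no signal for a diagonal ensemble. From a basis-vector start, $U\ket{0}\propto\ket{0}$. From the uniform superposition, the overlap with any fixed basis vector has modulus exactly $1/\sqrt{d}$ for every diagonal $U$. The paper instead takes $\ket{0}=\ket{s}=\frac{1}{\sqrt{d}}\sum_j\ket{j}$, $\Pi=I-\ketbra{s}{s}$ and $U=e^{-i\pi\bm{H}/2}$. Then
\[
a^2=1-|\langle s|U|s\rangle|^2\ge\Var_{\ket{s}}[\bm{H}]=\|\bm{H}\|_2^2/d
\]
by \Cref{lem:traceless_ham_transition}, using $\cos(\pi x/2)\le 1-x^2/2$ and tracelessness. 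By \Cref{lem:cert_red} this is at least $(\eps_0/16)^2$ with probability $4/5$ under the alternative, while $a=0$ under the null.

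\textbf{Nothing further needs to be verified.} There is no ``inherited cancellation'' to check. $\ket{0}$ and $\Pi$ are fixed and independent of $\bm{b}$, so an estimator whose fractional queries have total duration $q$ is literally a forward-only algorithm for \Cref{problem:two_point} with evolution time $\pi q/2$. \Cref{thm:two_point_lb} with $\deff=\min\{d,\lceil 1/\eps_0^2\rceil\}$, $\mathsf{p}=50/\deff$, $\sigma=\eps_0\sqrt{\deff}/4$ then gives $\Omega(\deff/\sigma)=\Omega(\sqrt{\deff}/\eps_0)$ directly. Two parts of your heuristic are also off. The $\sqrt{d}/\eps$ branch comes from capping $\deff$ at $d$ (blocks of rank $\lfloor d/\deff\rfloor$), not from a residual first-order term. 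The bound itself rests on the anchored Efron--Stein inequality plus \Cref{lem:kernel_bound} for nonnegative times, not on averaging away first-order interference.

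\textbf{Your last paragraph targets a non-issue, and the fix would break the argument.} The amplitude need not be deterministic. The reduction only needs $a=\Omega(\eps)$ with constant probability under the alternative and $a=0$ under the null. The phase uncertainty that forward-only algorithms cannot undo is already supplied by the independent uniform angles $\bm{\theta}_j$. Appending $e^{i\varphi\Pi}$ leaves the distribution of $a$ unchanged, as you note. Meanwhile the new oracle $e^{i\varphi\Pi}e^{-ic\bm{H}}$ has three problems:
\begin{itemize}
\item it is no longer forward evolution under a Hamiltonian from \Cref{problem:two_point};
\item it is nontrivial under the null;
\item for the relevant non-diagonal $\Pi=I-\ketbra{s}{s}$, its principal fractional powers do not factor.
\end{itemize}
\Cref{thm:two_point_lb} then no longer applies, since its proof uses the diagonal structure through $\ket{\varphi_z^{(i)}(h)}=e^{-ih_i\mathsf{t}_z}\ket{\varphi_z^{(i)}(h^{(i\to 0)})}$. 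The rank-two generator and Haar conjugation from your second paragraph play no role and should be dropped.

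\textbf{The amplifier-to-detector step fails as stated.} Measuring $\Pi$ after an amplifier does not distinguish the hypotheses, because on the null instance $a=0$ and every output state is valid. The paper flags this and handles it by giving the null instance a small nonzero good component.
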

\noindent
In the high-precision regime ($a,\eps \ll 1/\sqrt{d}$),~\Cref{thm:amp_est} improves over the $\Omega(\min(1/a^2, d))$ and $\Omega(\min(1/\eps^2, d))$ lower bounds in~\cite{tang_amplitude_2025}, restoring the dependence on precision in all regimes. Moreover, whereas the techniques in~\cite{tang_amplitude_2025} do not easily generalize to fractional queries  (see \Cref{sec:discussion}), our results naturally extend to this setting.

\paragraph{Search with coherent errors.} Finally, we demonstrate an application where our techniques can be used to analyze the robustness of oracle speedups without inverse queries:
\begin{problem}[Search with coherent errors]\label{problem:coherent_error}
     Let $d\geq 1$ and $0\leq M\leq d$ be integers, and let $\Delta \in [0,\pi)$. Let $O^{\theta}_f\in \C^{d\times d}$ be a unitary of the form
     \begin{align}
         O^{\theta}_f = \sum_{x=1}^d e^{i\theta_x f(x)}\ketbra{x}{x} 
     \end{align}
     for unknown angles $\theta_x\in [\pi-\Delta,\pi+\Delta]$ and unknown $f\colon [d]\to \{0,1\}$ that marks $M$ elements, i.e., $|f^{-1}(1)| = M$. Given black-box access to $O_f^\theta$:
     \begin{itemize}
         \item if $|f^{-1}(1)|=0$ output $\mathrm{NONE}$;
         \item otherwise, output any $x$ such that $f(x)=1$,
     \end{itemize}
     with failure probability at most $1/100$ in each case.
\end{problem}

In the case $\Delta = 0$, we have $\theta_x=\pi$ for each $x\in [d]$, and $O_f^\theta = O_f\coloneq \sum_{x=1}^d (-1)^{f(x)}\ketbra{x}{x}$; this is the usual phase oracle for a Boolean function $f$ (when $d=2^n$ and we identify the Hilbert space with that of $n$ qubits). On the other hand, if $\Delta=\pi$, then $O_f^\theta$ could potentially apply no phase to the marked elements, rendering it impossible to detect. 

Between these two extremes, one may ask how much of the usual Grover speedup is preserved. If the phases being applied by the oracle are known in advance, then prior work shows that the speedup can remain completely intact~\cite{hoyerArbitraryPhasesQuantum2000}. By contrast, we show that when the phases are unknown---but still fixed---the quantum speedup is destroyed:
\begin{theorem}\label{thm:imperfect_oracle_search}
  Given only the ability to apply $O_f^\theta$, and not its inverse, any algorithm for search with coherent errors (\Cref{problem:coherent_error}) with $M=\Theta(1)$ requires at least $\Omega(d\Delta)$ applications of $O_f^\theta$.
\end{theorem}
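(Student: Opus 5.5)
The plan is a Jensen-formula (subharmonicity) argument, which is one concrete form of the Fourier-analytic idea in this setting. First reduce to a decision problem with $M=1$, the $M=\Theta(1)$ case being analogous. Consider the instance $f\equiv 0$, where $O_f^\theta=\mathbb{1}$ and the phases are irrelevant, against the instances $f=\mathbb{1}_{\{x\}}$ for $x\in[d]$. For the latter, draw $\theta_x$ uniformly from $[\pi-\Delta,\pi+\Delta]$. An algorithm solving \Cref{problem:coherent_error} must output $\mathrm{NONE}$ w.p.\ $\ge 99/100$ on $f\equiv 0$ and output $x$ w.p.\ $\ge 99/100$ on $\mathbb{1}_{\{x\}}$. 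Hence for every $x$, the averaged final state $\rho_x=\mathbb{E}_\theta\ketbra{\psi_x(\theta)}{\psi_x(\theta)}$ must have constant trace distance from $\ketbra{\psi_\emptyset}{\psi_\emptyset}$. Purify ancillas and intermediate measurements, so each run with $T$ queries (fractional powers of $O_f^\theta$ can be handled the same way) is a pure state. Write $z=e^{i\theta}$. Because only $O_f^\theta$ and never its inverse is applied, the final state for marked $x$ has the form
\begin{align}
\ket{\psi_x(z)}=\sum_{k=0}^{T} z^k \ket{v^x_k},
\end{align}
where $\ket{v^x_k}$ collects the branches that hit $x$ exactly $k$ times. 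The essential point is that only nonnegative powers of $z$ appear, so $p_x(z):=\braket{\psi_\emptyset}{\psi_x(z)}$ is a polynomial, hence analytic on the unit disk, with $|p_x(z)|\le 1$ on the unit circle.

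Second, I would lower bound $|p_x(0)|=|\braket{\psi_\emptyset}{v^x_0}|$. The vector $\ket{v_0^x}$ is obtained by running the unmarked algorithm and projecting out $\ket{x}$ at every query. Let $q_t^x$ be the query weight on $x$ at step $t$ in the unmarked run. A telescoping argument, where each projection removes only the weight currently on $x$, should give $1-|p_x(0)|\lesssim\sum_t q_t^x$. Note that this bound is linear in $\sum_t q_t^x$, not in its square root as in the usual BBBV hybrid bound. Since $\sum_x q^x_t\le 1$ for each $t$, we get $\sum_x (1-|p_x(0)|)=O(T)$.

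Third, apply Jensen's inequality for the subharmonic function $\log|p_x|$:
\begin{align}
\log|p_x(0)|\le \frac{1}{2\pi}\int_0^{2\pi}\log|p_x(e^{i\theta})|\,d\theta\le \frac{\Delta}{\pi}\,\mathbb{E}_{\theta\sim[\pi-\Delta,\pi+\Delta]}\log|p_x(e^{i\theta})|.
\end{align}
The second inequality holds because $\log|p_x|\le 0$ off the arc. Distinguishability forces $\mathbb{E}_\theta|p_x(e^{i\theta})|^2=\bra{\psi_\emptyset}\rho_x\ket{\psi_\emptyset}\le 1-c$ for a constant $c$. I then need to turn this into $\mathbb{E}_\theta\log|p_x|\le -c'$ on the arc, which follows from $\log u\le u-1$ applied to $u=|p_x|^2$. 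Combining, $\log|p_x(0)|\le -c'\Delta/\pi$, so $1-|p_x(0)|=\Omega(\Delta)$ for every $x$. Summing over $x$ and using the second step gives $d\,\Omega(\Delta)\le O(T)$, i.e.\ $T=\Omega(d\Delta)$.

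The main obstacle is the second step: proving a bound on $1-|p_x(0)|$ that is linear in the total query weight and survives interleaved unitaries, ancillas, and fractional queries. This is where the absence of inverses is used, since with $O^\dagger$ negative powers of $z$ would appear and the analyticity used in Jensen's inequality would fail. I would first try to prove it by induction on queries. The inductive step compares $\|\Pi_{\neq x}$-projected$\|$ overlaps using $|\langle a|(\mathbb{1}-\ketbra{x}{x})|a\rangle|=1-|\langle x|a\rangle|^2$. Errors between the projected and unprojected runs would be controlled via the real part of the overlap rather than the norm of the difference.
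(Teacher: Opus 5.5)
Your argument is correct, and it takes a genuinely different route from the paper.

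\textbf{How the two routes differ.} The paper gets this theorem by reduction (Lemma~\ref{lem:large_angle_reduction}) to the block distinguishing task with $\theta_0=\pi$, $\sigma=\Delta$, $\mathsf{p}=50/d$, and then invokes Theorem~\ref{thm:two_point_lb}. That proof uses an anchored Efron--Stein inequality over a Bernoulli-masked random set of marked items, a Plancherel bound on the kernel $K$ (the only place $t_j\ge 0$ enters), and a telescoping of the $\min$-kernel into total query weight. You replace all of this with two ingredients. First, the overlap $p_x(z)$ is analytic in the disk, and Jensen's inequality at $z=0$ turns constant distinguishability on an arc of length $2\Delta$ into $1-|p_x(0)|=\Omega(\Delta)$. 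Second, you compare the run with the absorbing oracle $I-|x\rangle\langle x|$ (the value at the centre) to the unmarked run.

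\textbf{What each route buys.} Yours is elementary and gives explicit constants. It also covers the full range $\Delta\in[0,\pi)$, whereas the paper's reduction as written takes $\Delta<1/2$. It extends to any fixed $M$ if you give all $M$ marked items a common phase, so the overlap stays univariate, and sum over a partition of $[d]$ into $\lfloor d/M\rfloor$ blocks; this gives $\Omega(d\Delta/M)$. The paper's route is a single statement that also covers the small-angle regime behind certification, locality testing and amplitude estimation. There distinguishability is only collective, and there is no arc on which $|p|$ is bounded away from $1$. The paper's route is also natively continuous-time. Your remark that fractional queries are ``handled the same way'' is too quick, since $z^t$ is not analytic in the disk for non-integer $t$. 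You would need to work with $\theta$ in the upper half-plane. This is not needed for the statement as posed, which counts applications of $O_f^\theta$.

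\textbf{Closing the step you left open.} The step you flag as the main obstacle is true and closes in a few lines. It does not follow from inducting on $\|\phi_t-\phi'_t\|$: that only reproduces the BBBV bound $(\sum_t\sqrt{q^x_t})^2$, which after summing over $x$ gives merely $T=\Omega(\sqrt{d\Delta})$. Instead, let $\phi_t,\phi'_t$ be the unmarked and projected states just before query $t$, and set $P_x=|x\rangle\langle x|\otimes I$, $Q_x=\sum_t q^x_t$, and $a_t=\langle\phi_t|\phi'_t\rangle$. The interleaved unitaries cancel, so $a_{t+1}=a_t-\langle\phi_t|P_x|\phi'_t\rangle$, and by Cauchy--Schwarz
\begin{align}
|1-a_{T+1}|\le \sqrt{Q_x}\,\Big(\sum_{t}\|P_x\phi'_t\|^2\Big)^{1/2}.
\end{align}
The projected run loses exactly $\|P_x\phi'_t\|^2$ of squared norm at step $t$. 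Hence
\begin{align}
\sum_t\|P_x\phi'_t\|^2=1-\|\phi'_{T+1}\|^2\le 1-|a_{T+1}|^2\le 2(1-|a_{T+1}|).
\end{align}
Combining this with $1-|a_{T+1}|\le|1-a_{T+1}|$ gives the self-bounding inequality $(1-|p_x(0)|)^2\le 2Q_x(1-|p_x(0)|)$. Therefore $1-|p_x(0)|\le 2Q_x$, and $\sum_x Q_x=T$, which is exactly what your summation over $x$ needs. The bound is essentially tight for Grover's algorithm, where $1-|p_x(0)|\approx(2T-1)/d$.
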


On the other hand, we show in \Cref{sec:inexact_search_ub} that $\widetilde{O}(\sqrt{d})$ queries to $O_f^\theta$ and $(O_f^\theta)^\dag$ suffice to solve \Cref{problem:coherent_error}. Thus, our result highlights a scenario in which inverses are a necessary ingredient for robust quantum algorithmic speedups.

\subsection{Proof overview}\label{sec:proof_overview}
We now describe the key technical ingredients for our lower bounds in the absence of time reversal. These are handled in a unified manner, via a reduction from a common one-versus-many distinguishing task: We exhibit a simple Hamiltonian (null hypothesis), together with an ensemble of Hamiltonians (mixture of alternatives), and show that distinguishing between the two cases without time reversal requires a long total evolution time.

\subsubsection{Distinguishing task}
In our construction, we let the null hypothesis $H_0$ be the zero Hamiltonian.
As we explain in \Cref{sec:ham_cert_w_time_reversal}, this choice still encompasses the hardest instances of the Hamiltonian certification problem. For our mixture of alternatives, we define a simple random Hamiltonian $\bm{H}_1=\sum_{i}\bm{h}_i\Pi_i$ for some random $\bm{h}_i\in \R$ and fixed diagonal projectors $\Pi_i\in \C^{d\times d}$, which we describe shortly. To complete the reduction to property testing, we need only prove that, with high probability, such a random Hamiltonian does not have the property of interest, while the zero Hamiltonian does.\footnote{This ought to be true of most reasonable properties, since a random Hamiltonian of this form is highly unstructured.}

We let the entries $\bm{h}_i=\bm{\beta}_i\bm{\theta}_i$ of our random Hamiltonian be independent, uniformly random angles $\bm{\theta}_i\sim \textnormal{Unif}([-\sigma,\sigma])$, each of which is multiplied by an independent Bernoulli random variable $\bm{\beta}_i\sim \textnormal{Ber}(\mathsf{p})$. The random angles are essential, since they exclude the possibility of engineering a self-inverse operator by evolving for some fixed amount of time. Otherwise, an algorithm could potentially invoke amplitude amplification without explicit access to time reversal.

We describe the problem in full detail in \Cref{problem:two_point}.

\paragraph{Decoupling dimension from precision.} For $\bm{H}_1\in \C^{d\times d}$ to be detectable by Hamiltonian certification against the zero Hamiltoniain $H_0$, we must ensure it has norm at least $\eps$ with high probability. In particular, if we had rank-1 projectors then we would require $\mathsf{p}\gtrsim 1/d$ at a minimum, forcing the parameter choices $d$ and $\mathsf{p}$ to be coupled. This is at tension with the fact that $\mathsf{p}$ controls the distinguishability of the ensemble from zero, and thus a strong lower bound arises from choosing $\mathsf{p}$ as small as possible, perhaps as small as $\eps^2$. The resulting bound is therefore not applicable in the high-precision regime where $\eps^2\ll 1/d$, and a similar obstruction appears in the argument in~\cite{tang_amplitude_2025}. To circumvent this issue, we decouple these parameters by picking the projectors $\Pi_i$ to have a rank which depends on the choice of $d$ and the accuracy parameter $\eps$. This trick allows us to handle both the high- and low-precision regimes in Hamiltonian property testing and amplitude estimation.

\paragraph{Altered ensemble for quantum search lower bounds.}
To prove our lower bound for search with coherent errors, \Cref{problem:coherent_error}, we modify the above ensemble by shifting the random angles to $\bm{\theta}_i\sim \textnormal{Unif}([\pi-\sigma,\pi+\sigma])$. This also changes the dependence of the hardness on the parameter $\sigma$ compared to the previous setting. For a mean-zero angle as above, a larger deviation $\sigma$ corresponds to a signal that can be detected. Now, with the shift, the distinguishing task becomes more difficult as $\sigma$ increases; we are adding inexactness to the phases of an otherwise well-behaved oracle. The lower bound in \Cref{thm:two_point_lb} captures both regimes.

\subsubsection{Lower bound method}\label{sec:lower_bound_method}
Throughout this discussion, for simplicity, we assume $\eps^2\gg 1/d$ and assume $\sigma\sim 1$ and $\mathsf{p} \sim \eps^2$ in our ensemble, which turn out to be valid choices for the parameters in the distinguishing task.

Recall that the random Hamiltonian $\bm{H}_1$ is specified by a collection of i.i.d. real-valued random variables $\bm{h}_1,\dots,\bm{h}_{m}$. Suppose a quantum algorithm makes $q$ black-box queries for times $t_1,\dots,t_q \ge 0$ to time evolution governed by $\bm{H}_1$, for a total evolution time of $T\coloneq \sum_{j=1}^q t_j$. This results in a random final state $\ket{\psi^{(q)}(\bm{h}_1,\dots,\bm{h}_m)}$ determined by the values of the entries $\bm{h}_i$. Our goal is to prove that the corresponding mixed state is indistinguishable from the null state $\ket{\psi^{(q)}_0}\coloneq\ket{\psi^{(q)}(0,\dots,0)}$ unless $T\gtrsim 1/\eps^2$. For this, it suffices to upper bound the expected transition probability under the null hypothesis,
\begin{align}
    \mathfrak{p}&\coloneq\E {\norm*{(I-\ketbra{\psi^{(q)}_0}{\psi^{(q)}_0})\ket{\psi^{(q)}(\bm{h}_1,\dots,\bm{h}_m)}}}_2^2,\label{eq:transition_prob}
\end{align}
by $O(\eps^2T)$. In particular, the trace distance between the two possible states is upper bounded by $\sqrt{\mathfrak{p}}$.

\paragraph{Hybrid attempt.} A naive first attempt at bounding $\mathfrak{p}$ from above would be to use a variant of the hybrid method~\cite{BBBV97Hybrid}. This would involve writing a telescoping sum over the queries to obtain $\mathfrak{p}\leq (\sum_{j=1}^{q} \sqrt{a_j})^2$ where $a_j=\E\norm{(e^{-i \bm{H}_1 t_{j}}-I)\ket{\psi^{(j-1)}_0}}_2^2$, with $\ket{\psi^{(j-1)}_0}$ the state of the algorithm just before the $j$th query given the null hypothesis. However, there is a glaring mismatch between our goal and this route: it is not sensitive to the fact that $t_1,\dots,t_q \geq 0$ for the evolution times in each query. Indeed, the quantity $a_j$ is invariant under $t_j\mapsto -t_j$. Carrying out the argument therefore leads to a ``trivial'' upper bound of $\mathfrak{p}\leq O(\eps^2 T^2)$, rather than the desired $\mathfrak{p}\leq O(\eps^2 T)$. This problem appears as a byproduct of analyzing the algorithm ``one query at a time'' by its deviation from the identity.

An alternative to this strategy, which was employed successfully in \cite{tang_amplitude_2025} for amplitude estimation, would be to use the compressed oracle method. However, this method appears to break down in the setting of continuous time. We discuss this point further in \Cref{sec:discussion}.

\paragraph{Efron-Stein decomposition.} The presence of these obstacles invites new techniques. Our first idea is to analyze the probability in \cref{eq:transition_prob} \textit{one random variable at a time}, rather than one query at a time. To accomplish this, we leverage an idea from Boolean analysis, namely, the \emph{Efron-Stein decomposition}, which is a generalization of the Fourier transform. (See, for example,~\cite[Chapter~8]{odonnell2021analysisbooleanfunctions}, or \Cref{sec:proof_of_anchored_efron_stein}.) 

A well-known consequence of this decomposition is the \emph{Efron-Stein inequality}, which states that for random variables $\bm{x}_i\in \R$ we have
\begin{align}
    \Var[f(\bm{x}_1,\dots,\bm{x}_n)]\leq \frac{1}{2}\sum_{i=1}^n\E \big[\big(f(\bm{x}_1,\dots,\bm{x}_i,\dots,\bm{x}_n)-f(\bm{x}_1,\dots,\bm{x}_i^\prime,\dots,\bm{x}_n)\big)^2\big] \, ,
\end{align}
where $f\colon \R^n\to\R$ and $\bm{x}_i^\prime$ is a fresh copy of the random variable $\bm{x}_i$. We use a vector-valued analogue of this inequality (\Cref{lem:anc_efron_stein}) to bound the probability in \cref{eq:transition_prob} by a sum of $m$ 
contributions. Specifically, we show that $\mathfrak{p}\lesssim\sum_{i=1}^m\E\norm{\ket{\bm{\Delta}_i}}_2^2$, where
\begin{align}\label{eq:diff_state}
    \ket{\bm{\Delta}_i} = \ket{\psi^{(q)}(\bm{h}_1,\dots,\bm{h}_i,\dots,\bm{h}_m)} - \ket{\psi^{(q)}(\bm{h}_1,\dots,0,\dots,\bm{h}_m)}.
\end{align}
Each of the terms has a clear interpretation: it is the distinguishability between the true output state (in the alternative case) and the state obtained by the algorithm when we fix a \emph{single} entry $\bm{h}_i$ in the Hamiltonian to be zero. We proceed to bound each of these contributions to the overall distinguishability, as described below.

\paragraph{Coarse-grained sum-over-paths.} Having fixed a single entry $i$ of the random Hamiltonian to be zero, our task is now to compare the resulting state to the unaltered one. A standard sum-over-paths decomposition facilitates this comparison; however, for our purposes, we need only keep track of whether the $i$-th subspace was visited at a given step along the path. For the true output state, we can express such a decomposition as:
\begin{align}
     \ket{\psi^{(q)}(\bm{h}_1,\dots, \bm{h}_m)} = \sum_{z\in \{0,1\}^q}\ket{\varphi^{(i)}_z(\bm{h}_1,\dots, \bm{h}_m)}
\end{align}
for some unnormalized $\ket{\varphi^{(i)}_z(\bm{h}_1,\dots, \bm{h}_m)}$, where $z$ is a description of the path taken over the $q$ queries: a ``$1$'' indicates that the $i$th subspace (corresponding to $\Pi_i$) was visited. For rigorous definitions, see \Cref{sec:two_point_lb_proof}.
The true output state accumulates phases along the path: the two states in \cref{eq:diff_state} are then related via
\begin{align}
    \ket{\varphi^{(i)}_z(\bm{h}_1,\dots,\bm{h}_i,\dots, \bm{h}_m)} &= e^{-i\bm{h}_i \sum_{j:z_j=1}t_j} \ket{\varphi^{(i)}_z(\bm{h}_1,\dots,0,\dots, \bm{h}_m)}\, .
\end{align}
Letting $\mathsf{t}_z\coloneq \sum_{j:z_j=1}t_j$ and plugging this into \cref{eq:diff_state}, we see that
\begin{align}
    \E\norm{\ket{\bm{\Delta}_i}}_2^2 &= \sum_{z,z^\prime\in \{0,1\}^q}c_{z,z^\prime}\underbrace{\E_{\bm{h}_i}(e^{i\bm{h}_i \mathsf{t}_{z^\prime}}-1)(e^{-i\bm{h}_i \mathsf{t}_z}-1)}_{\eqcolon K(\mathsf{t}_{z^\prime},\mathsf{t}_{z})}
\end{align}
for some coefficients $c_{z,z^\prime}$. Hence, a primary object of interest is the positive-definite \textit{kernel} $K(\cdot,\cdot)$, and our aim is to bound it from above in the positive semidefinite ordering.

\paragraph{Bound on a positive-definite kernel.} The bound on the kernel $K(\cdot,\cdot)$, which we give in \Cref{lem:kernel_bound}, is the crucial (and only) place where we use the assumption that all evolution times are non-negative, $t_j\geq 0$. Specifically,
we control the behavior of the kernel through
\begin{align}\label{eq:tech-overview-kernel-bound}
    \sum_{z,z^\prime}c_z^*c_{z^\prime}K(\mathsf{t}_z,\mathsf{t}_{z^\prime}) \leq \pi \eps^2\sum_{z,z^\prime}c_z^*c_{z^\prime}\min(\mathsf{t}_z,\mathsf{t}_{z^\prime})
\end{align}
for arbitrary $c_z\in \C$, \textit{so long as $\mathsf{t}_z\geq 0$}. To prove this relation we analyze the quadratic form on the left-hand side using a Fourier transform.

Notably, this bound ceases to hold without assuming $\mathsf{t}_z\geq 0$, which we demonstrate with an explicit counterexample following the proof of \Cref{lem:kernel_bound}. Hence, the kernel being constrained in this manner is a footprint of the lack of time-reversed dynamics.

From \cref{eq:tech-overview-kernel-bound}, straightforward algebra with the $\min(\cdot,\cdot)$ kernel reveals that $\mathfrak{p}$ in \cref{eq:transition_prob} is $O(\eps^2 T)$. This yields the $T=\Omega(1/\varepsilon^2)$ total evolution time lower bound for the low-precision regime. The same proof strategy yields a bound in the high-precision regime where $\eps^2\ll 1/d$ by choosing different values of $\sigma$ and $\mathsf{p}$.

\subsection{Discussion}\label{sec:discussion}

\paragraph{Comparison to compressed oracle method.} 
Our work highlights a potential limitation of the compressed oracle method~\cite{zhandry2019howtorecord,foxmanQuantumLazySampling2026}. This technique has recently been applied to many problems in quantum query complexity~\cite{liu2019onfinding,tang_amplitude_2025,jefferyCompressedOracleWorthy2025} as well as in cryptography~\cite{chung2021onthecompressed,carolan2026oncompressed} and the construction of pseudorandom unitaries~\cite{ma2025howtoconstruct}.
However, the method struggles in the setting of continuous time, which prevents a straightforward extension of the proof strategy from \cite{tang_amplitude_2025} to Hamiltonian property testing. Our Fourier-analytic approach overcomes this barrier.

For concreteness,
recall the starting point in the compressed oracle method: view the randomness in a random unitary as arising from a purification of the system to a larger Hilbert space.
Since isometries acting on the purifying system are undetectable, a key step in the method is to demonstrate that two states are related by such an isometry (or approximately related, as in \cite{tang_amplitude_2025}). In many cases, this isometry is recognizable as a group Fourier transform~\cite{foxmanQuantumLazySampling2026}.

Unfortunately, fractional queries can destroy this group structure and make the existence of such an isometry impossible. For example, if the group is $\Z_2^N$, an emblematic set of states are of the form
\begin{align}
    \frac{1}{\sqrt{2^N}}\sum_{f\in \{0,1\}^N} (-1)^{f(x_1)+\dots +f(x_t)}\ket{f}
\end{align}
for some collection of \textit{distinct} $x_1,\dots,x_t\in \{1,\dots, N\}$. These states appear in the construction of~\cite{ma2025howtoconstruct}, in the purifying register, after querying a phase oracle for $f$ some number of times $t$. Applying $H^{\otimes N}$ gives $\ket{z_1\dots z_N}$, where $z_x=1$ if $x\in \{x_1,\dots, x_t\}$. Hence, the original collection of states is isomorphic to the set of basis states $\ket{\{x_1,\dots,x_t\}}$ indexed by the possible sets of distinct $x_1,\dots, x_t$.

Now suppose instead the original states are of the form
\begin{align}
    \frac{1}{\sqrt{2^N}}\sum_{f\in \{0,1\}^N} i^{f(x_1)+\dots +f(x_t)}\ket{f},
\end{align}
once again for distinct $x_1,\dots,x_t$. This would be the case if we were allowed queries to the square root of the phase oracle for $f$. These states are not even necessarily orthogonal, and so in particular there is no isometry to basis states indexed by the possible $x_1,\dots, x_t$. A similar phenomenon, for $\Z_q$ instead of $\Z_2^N$, causes the argument in~\cite{tang_amplitude_2025} to break down for fractional queries, even given access just to the principal square root $\sqrt{U}$ of the unitary $U$ in amplitude estimation.

\paragraph{Implications for reversing time evolution.} Our lower bounds implicitly constrain the cost of engineering time reversal, i.e., accurately approximating reverse time evolution (e.g., in diamond norm) using only forward evolution. However, reproducing the time-reversed dynamics on arbitrary input states is considerably more demanding than the property testing problems studied in our work, so the optimal reversal cost presumably exceeds the bounds inherited from testing. A promising connection is to the literature on inverting unknown unitaries, where inversion requires $\Theta(d^2)$ queries even at constant error~\cite{chen2024reversing,odake2025analyticallower,chen2026approximation}.  Extending these discrete-query bounds to continuous time is a natural direction for which to employ our techniques.

Beyond arbitrary Hamiltonians, applications in quantum metrology and Hamiltonian engineering motivate studying time reversal of local Hamiltonians~\cite{li_metrology_2023, kobrin2024universal}.
It is already known how to reverse local Hamiltonians with polynomial overhead in the number of qubits, assuming fast control~\cite{wocjan_network_simulation_2002,odake_hamiltonian_transformations_2024,chowdhury2025controlizationschemesbasedorthogonal}. 
Notably, our certification lower bound does not apply in this setting, since we consider non-local Hamiltonians; indeed, Heisenberg-limited certification of local Hamiltonians is already possible using only forward time evolution~\cite{bluhm2026certifyinglearninglocalquantum, shin2026hamiltonianlearningcertificationeigenphase}. 
It would be interesting to determine how the cost of time reversal depends on both the interaction structure of the Hamiltonian and the available controls.

\paragraph{Quantum search with errors.} Our lower bound in \Cref{thm:imperfect_oracle_search} adds to a line of work studying the effect of imperfect oracles on quantum speedups for unstructured search~\cite{regevImpossibilityQuantumSpeedUp2008,temmeRuntimeUnstructuredSearch2014,rosmanisQuantumSearchNoisy2023} and other problems~\cite{harrowUselessnessOracleModel2014, shen2026oracledistillation}. Most prior impossibility results focus on models where a noisy channel is applied between successive calls to the oracle; this includes standard dephasing noise (as in~\cite{regevImpossibilityQuantumSpeedUp2008,temmeRuntimeUnstructuredSearch2014}), where fresh randomness is used at each application of the channel. Our lower bound applies in a different, fully unitary setting. 
Namely, in our work, we study the impact of coherent errors, unknown deviations in the ideal phases of the oracle which are nevertheless consistent across all queries. 
Coherent errors are a ubiquitous noise source in quantum computation, due to systemic gate miscalibrations and non-Markovian drifts~\cite{Burnett_2019, Shaw_2024}, but their impact on oracle speedups remains comparatively understudied. 
We show that no Grover-like quantum speedup is possible for search with coherent errors. On the other hand, a Grover speedup is possible if one has access to the inverse of the inexact oracle; with inverses, one may engineer the QSP signal processing operator and mitigate the errors by a polynomial transformation of the (cosines of the) unknown phase angles. 

Our lower bound is reminiscent of no-go results for general correction of QSP sequences with unknown, but static, perturbations of the signal angles~\cite{tan2023error}. We view it as an interesting open question to investigate whether our techniques, or some variant of them, could be used to explore the robustness of these and other quantum algorithms to coherent errors. 

\paragraph{Note added.} During the preparation of this manuscript, we received a draft of related work by Escudero Gutiérrez, Lee, and Zur~\cite{gutierrez2026hamiltonian}. Their work independently establishes an $\Omega(1/\eps^2)$ lower bound---the same scaling as in our lower bounds for the low-precision regime---on the total evolution time required by any algorithm without time reversal for Hamiltonian locality testing and certification as well as for amplitude estimation with fractional queries. Their proofs use a different approach via an adversary argument.

\section{Lower bounds without time reversal}\label{sec:lower_bounds_without_time_reversal}
In this section, we prove our main lower bounds without time reversal. We begin with a description of a hard distinguishing task in \Cref{sec:reductions}, where we also show that the problems we study reduce to this task. This section also contains the proofs of our main results assuming the hardness of that task. We then prove a lower bound on the evolution time required to solve the distinguishing task in \Cref{sec:two_point_lb_proof}. Finally, in \Cref{sec:beyond_average} we show how these results extends to all Schatten $p$-norms.

\paragraph{Computational model.} We first formally define the computational model we consider throughout this work, which encompasses all possible quantum algorithms which can be executed given black-box queries to a family of unitary operators $U(t) \in \mathrm{U}(d)$. In the setting of Hamiltonian property testing of an unknown Hamiltonian $H$, we have $U(t) = \exp(-i H t)$. In the setting of fractional queries to a unitary oracle $U$, we have $U(t) = U^t$ for $U^t$ the $t$-th principal power of $U$. A quantum algorithm \emph{without} time reversal, or without inverse queries, can only choose $t \ge 0$, whereas a quantum algorithm \emph{with} time reversal, or with inverse queries, can choose $t \in \R$. 

We consider quantum algorithms with access to an ancillary Hilbert space $\C^{d_{\mathrm{anc}}}$ of arbitrarily large dimension $d_{\mathrm{anc}}$. The algorithm is then specified by a pure initial state $\ket{\psi} \in \C^{d} \otimes \C^{d_{\mathrm{anc}}}$, a sequence of query durations $t_1, \dots, t_q$, a sequence of interleaving unitaries $V_0, \dots, V_q \in \mathrm{U}(d \cdot  d_{\mathrm{anc}})$, and a final POVM $\calF$ on $\C^{d} \otimes \C^{d_{\mathrm{anc}}}$. The output of the algorithm is the result of performing the POVM $\calF$ on the state
\begin{align}
    V_q (U(t_q)\otimes I) V_{q-1} (U(t_{q-1})\otimes I)\dots V_1 (U(t_1)\otimes I)V_0\ket{\psi},
\end{align}
where $I$ denotes the identity on $\C^{d_{\mathrm{anc}}}$.

A few remarks on the generality of this definition are in order. First, it suffices to consider pure initial states and unitary operations, since mixed initial states can be purified and quantum channels can be dilated to unitary operators on a larger Hilbert space, simply by increasing the dimension of the ancillary system. Similarly, this model encompasses adaptive algorithms which perform multiple intermediate measurements and apply unitary operations conditional on the measurement outcomes, again by increasing the dimension of the ancillary system and deferring all measurements.

Throughout this work, our figure of merit is the total evolution time (or total duration of fractional queries), defined as $T = \sum_{k=1}^q |t_k|$.

\subsection{Reductions from a hard distinguishing task}\label{sec:reductions}

The proof of \Cref{thm:zero_testing_lb} is based on the following distinguishing task which, with an appropriate choice of the parameters $\deff,\mathsf{p},\sigma,\theta_0$ appearing therein, reduces to several other tasks including Hamiltonian certification and locality testing in $d$ dimensions.
\begin{problem}[Block distinguishing task]\label{problem:two_point} Let $\mathsf{p},\sigma \in (0,1/2]$, let $\theta_0\in [0,2\pi)$, let $\deff,d\geq 100$ be positive integers such that $\deff \leq d$, and let $\bm{b}\in \{0,1\}$ be a uniformly random bit. Let $\bm{H}\in \mathrm{L}(\C^d)$ be a random Hamiltonian such that:
\begin{itemize}
\item (null) conditioned on $\bm{b}=0$ we have $\bm{H}=0$ with certainty; and
\item (alternative) conditioned on $\bm{b}=1$ we have $\bm{H} = \bm{H}_1 - \mathrm{tr}(\bm{H}_1) \frac{I}{d}$ for $\bm{H}_1\coloneq \sum_{j=1}^{\deff} \bm{h}_j \Pi_j$,
\end{itemize}
where $\{\Pi_j\}\subset \mathrm{L}(\C^d)$ is a fixed set of rank-$\lfloor d/\deff\rfloor$ diagonal projectors onto $
\deff$ orthogonal subspaces, and for each $j\in [\deff]$ we take $\bm{h}_j$ to be an independent copy of $\bm{\beta}\bm{\theta}$ for independent $\bm{\beta}\sim \textnormal{Ber}(\mathsf{p})$ and $\bm{\theta}\sim \textnormal{Unif}([\theta_0 - \sigma,\theta_0 + \sigma])$.
Given black-box access to time evolution under $\bm{H}$, the goal is to output $\hat{\bm{b}} \in \{0,1\}$ such that $\hat{\bm{b}} = \bm{b}$ with probability at least $3/5$.
\end{problem}
Note, we have $\deff \leq d$ because the number of blocks should not exceed the dimension. We would also like the random Hamiltonian to be traceless and have bounded operator norm, to form a valid input to the Hamiltonian certification problem. These are ensured by construction, due to the explicit centering and the assumption on the possible values of $\sigma$.

Our main theorems are consequences of the following lower bound on the total evolution time required to solve \Cref{problem:two_point} without access to time reversal:
\begin{theorem}\label{thm:two_point_lb}
   Any quantum algorithm which solves \Cref{problem:two_point} with $\mathsf{p}=\Theta(1/\deff)$ by making oracle calls only to forward time evolution under $\bm{H}$ (i.e., $t_1,\dots,t_q \ge 0$) requires a total evolution time
  \begin{align}\label{eq:two_point_lb}
      \Tfor = \Omega\left(\min\left\{\frac{\deff}{\sigma},\frac{\deff\sigma}{\theta_0^2}\right\}\right).
  \end{align}
\end{theorem}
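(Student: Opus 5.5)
We follow the route sketched in \Cref{sec:lower_bound_method}: reduce success to a bound on the expected transition probability $\mathfrak{p}$ of \cref{eq:transition_prob}, and show $\mathfrak{p} \lesssim T\cdot\max\{\sigma,\theta_0^2/\sigma\}/\deff$. If the algorithm succeeds with probability $3/5$, the trace distance between the averaged alternative state and the null state is $\Omega(1)$. That distance is at most $\sqrt{\mathfrak{p}}$, so the claimed bound follows.

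\textbf{Preliminary reductions.} The centering term $-\tr(\bm{H}_1)I/d$ only contributes a global phase to each query, since it commutes with everything. A global phase does not change the final mixed state, so we may work with $\bm{H}_1$ directly. Next, we apply the vector-valued Efron--Stein inequality (\Cref{lem:anc_efron_stein}) to the map $(\bm{h}_1,\dots,\bm{h}_{\deff})\mapsto\ket{\psi^{(q)}(\bm{h})}$, anchored at the null state. This gives $\mathfrak{p}\lesssim\sum_{i=1}^{\deff}\E\norm{\ket{\bm{\Delta}_i}}_2^2$, where $\ket{\bm{\Delta}_i}$ is the difference obtained by zeroing the $i$-th entry, as in \cref{eq:diff_state}.

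\textbf{Path decomposition.} Fix $i$ and condition on all $\bm{h}_j$ with $j\neq i$. We insert $\Pi_i\otimes I$ and $(I-\Pi_i)\otimes I$ at each query and write the output as $\sum_{z\in\{0,1\}^q}\ket{\varphi_z}$. Because the projectors are diagonal and commute with $\bm{H}_1$, each branch picks up the phase $e^{-i\bm{h}_i\mathsf{t}_z}$ with $\mathsf{t}_z=\sum_{j:z_j=1}t_j\ge 0$. This gives
\begin{align}
\E_{\bm{h}_i}\norm{\ket{\bm{\Delta}_i}}_2^2=\sum_{z,z'}\braket{\varphi_z|\varphi_{z'}}\,K(\mathsf{t}_z,\mathsf{t}_{z'}),
\end{align}
where $K(s,t)=\E(e^{i\bm{h}s}-1)(e^{-i\bm{h}t}-1)=\mathsf{p}\,\E_{\bm\theta}(e^{i\bm\theta s}-1)(e^{-i\bm\theta t}-1)$.

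\textbf{Kernel bound, the crux.} The key step is to show, in the PSD order on nonnegative times, that
\begin{align}
K(s,t)\preceq C\,\mathsf{p}\,\max\{\sigma,\theta_0^2/\sigma\}\min(s,t).
\end{align}
This is where $t_j\ge0$ enters.

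To prove it, write $\min(s,t)=\int_0^\infty \mathbf{1}[u<s]\mathbf{1}[u<t]\,du$, so the right-hand quadratic form is $\int_0^\infty|\sum_z c_z\mathbf{1}[u<\mathsf{t}_z]|^2du$. On the left, write $e^{-i\theta t}-1=-i\theta\int_0^t e^{-i\theta u}du$. The left quadratic form then becomes $\E_{\bm\theta}\,\bm\theta^2\big|\int_0^\infty e^{-i\bm\theta u}g(u)\,du\big|^2$, where $g(u)=\sum_z c_z\mathbf{1}[u<\mathsf{t}_z]$ is supported on $[0,\infty)$. This is a Fourier transform $|\hat g(\bm\theta)|^2$ averaged against the density $\bm\theta^2/(2\sigma)$ on $[\theta_0-\sigma,\theta_0+\sigma]$.

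That density is pointwise at most $(|\theta_0|+\sigma)^2/(2\sigma)$, which is $O(\max\{\sigma,\theta_0^2/\sigma\})$. Extending the integral over $\theta$ to all of $\R$ and applying Plancherel gives
\begin{align}
\int|\hat g|^2\,d\theta=2\pi\norm{g}_{L^2}^2,
\end{align}
which is exactly the right-hand form. So the causal structure, namely that $g$ is one-sided because every $\mathsf{t}_z\ge0$, is what converts the naive $\min(s,t)^2$-type bound into the linear $\min(s,t)$.

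\textbf{Summing up.} We plug the kernel bound back in. Since $\{\ket{\varphi_z}\}$ come from the same circuit, $\sum_{z,z'}\braket{\varphi_z|\varphi_{z'}}\min(\mathsf{t}_z,\mathsf{t}_{z'})=\int_0^\infty\norm{\sum_{z:\mathsf{t}_z>u}\ket{\varphi_z}}^2du$. Ordering the queries, we expect this to equal $\sum_j t_j\,\norm{(\Pi_i\otimes I)\ket{\phi^{(j)}_{\cdot}}}^2$-type quantities, where $\ket{\phi^{(j)}}$ is an intermediate state with the $i$-th entry partially zeroed. Carrying out this bookkeeping cleanly is the second delicate point.

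Summing over $i$, the orthogonality $\sum_i\norm{\Pi_i\phi}^2\le1$ gives $\sum_i(\cdots)\le T$. Hence
\begin{align}
\mathfrak{p}\lesssim\mathsf{p}\,T\max\{\sigma,\theta_0^2/\sigma\}=O\big(T\max\{\sigma,\theta_0^2/\sigma\}/\deff\big),
\end{align}
using $\mathsf{p}=\Theta(1/\deff)$. Requiring $\mathfrak{p}=\Omega(1)$ then gives the stated lower bound on $\Tfor$.

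The main obstacle is the kernel bound, and in particular making the Fourier/Plancherel step rigorous for the nonuniform density $\theta^2$.
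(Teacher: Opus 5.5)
Your route is the paper's: anchored Efron--Stein, the coarse-grained $i$/not-$i$ path decomposition, and the Fourier kernel bound. Your kernel argument (bound the density $\omega^2/(2\sigma)$ pointwise by $(\theta_0+\sigma)^2/(2\sigma)$, extend to $\R$, apply Plancherel) is exactly the paper's kernel lemma and is already rigorous, so it is not the real obstacle. The gap is the step you leave as ``we expect''. You need an exact evaluation of $S_q=\sum_{z,z'}\min(\mathsf{t}_z,\mathsf{t}_{z'})\langle\varphi_{z'}|\varphi_z\rangle$, and the clean way to get it is induction on $k$. Write $z=(y,b)$. Branches differing in the last bit are orthogonal, $\langle\varphi_{(y',b')}|\varphi_{(y,b)}\rangle=\delta_{bb'}\langle\varphi_{y'}|P_b\otimes I|\varphi_y\rangle$, because $V_k(e^{-iHt_k}\otimes I)$ is unitary. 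Also $\min(\mathsf{t}_y+bt_k,\mathsf{t}_{y'}+bt_k)=\min(\mathsf{t}_y,\mathsf{t}_{y'})+bt_k$. Together these give $S_k=S_{k-1}+t_k\langle\psi^{(k-1)}|\Pi_i\otimes I|\psi^{(k-1)}\rangle$, and hence $S_q=\sum_j t_j\lVert(\Pi_i\otimes I)|\psi^{(j-1)}\rangle\rVert_2^2$. Here $|\psi^{(j-1)}\rangle$ is the intermediate state with $h_i$ set \emph{entirely} (not partially) to zero, i.e.\ evaluated at $\bm{h}^{(i\to 0)}$.

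This $i$-dependence is why your final step fails as written. The bound $\sum_i\lVert\Pi_i\phi\rVert^2\le 1$ needs one state $\phi$ for all $i$, but each $i$ has its own state $|\psi^{(j-1)}_{\bm{h}^{(i\to 0)}}\rangle$. The missing idea is a change of measure. Since $\bm{h}^{(i\to 0)}$ has the law of $\bm{h}$ conditioned on $\bm{h}_i=0$, we have $\E F(\bm{h}^{(i\to 0)})=\E[\mathbf{1}[\bm{h}_i=0]F(\bm{h})]/(1-\mathsf{p})$. After this rewrite every $i$ uses the common random state $|\psi^{(j-1)}_{\bm{h}}\rangle$, and $\sum_i\mathbf{1}[\bm{h}_i=0]\Pi_i\preceq I$ bounds the total by $T/(1-\mathsf{p})$.

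Your opening reduction also needs one more line, since Efron--Stein bounds a variance, not $\mathfrak{p}$. Apply it to $f(h)=\Pi_0^\perp|\psi^{(q)}_h\rangle$, which vanishes at $h=0$. Then $\lVert\E f\rVert^2\le\Pr[\bm{h}\neq 0]\,\E\lVert f\rVert^2$, and $\Pr[\bm{h}=0]=(1-\mathsf{p})^{\deff}=\Omega(1)$, which yields $\mathfrak{p}\le c_0\sum_i\E\lVert|\bm{\Delta}_i\rangle\rVert^2$. This atom at zero is what ties the mean state to the null state. It is also a second place where $\mathsf{p}=O(1/\deff)$ is used.
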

The proof of this theorem is the key technical contribution of our work, which we present in \Cref{sec:two_point_lb_proof}. We now briefly discuss how to interpret \cref{eq:two_point_lb}. On the one hand, when $\theta_0=0$, we are attempting to distinguish identity operations from small random phases being applied in the standard basis. Hence, as the deviations in the phases $\sigma\to 0$, the distinguishing task becomes impossible. The first expression in the lower bound reflects this limiting case. On the other hand, if $\theta_0 > 0$ is some fixed constant, then large deviations $\sigma$ are actually harmful for the distinguisher, since they drown out the signal provided by the phase $\theta_0$. This can be seen from the second expression in the bound.

To prove our lower bounds for each of the problems in \Cref{sec:main_results}, we make use of reductions from the distinguishing task in \Cref{problem:two_point}. The following two lemmas state that the alternative hypothesis is detectable by solving the main problems described in \Cref{sec:main_results}, given appropriate choices of the parameters therein. Hence, \Cref{thm:two_point_lb} implies the lower bound results in \Cref{sec:main_results}---namely, \Cref{thm:zero_testing_lb,thm:locality_testing_lb,thm:amp_est,thm:imperfect_oracle_search}.
\begin{lemma}[Small-angle reductions]\label{lem:small_angle_reductions}
    Let $d\geq 100$ and $\eps_0\in [0,1/10)$. Consider the distinguishing task in \Cref{problem:two_point} with $\deff = \min\{d, \lceil 1/\eps_0^2\rceil\}$, $\mathsf{p}=50/\deff$, $\sigma=\eps_0 \sqrt{\deff}/4$, and $\theta_0=0$, where we have black-box access to time evolution under the Hamiltonian $\bm{H}\in\C^{d\times d}$. With these choices, \Cref{problem:two_point} can be solved using a single call to a successful algorithm for any of the following problems, taking $d$ and $\eps=\eps_0/40$ to be their dimension and accuracy parameters, respectively:
    \begin{enumerate}
        \item Hamiltonian certification (\Cref{problem:ham_cert}) if we let $H_0=0$ and $\bm{H}$ be the input Hamiltonian;\label{item:zero_testing_reduction}
        \item Hamiltonian locality testing\footnote{Here, we are identifying $\C^d\cong (\C^2)^{\otimes n}$ in the standard manner for $n$ qubits, assuming $d=2^n$, and using $\deff=\min\{2^n, 2^{\lceil \log_2(1/\eps_0^2)\rceil}\}$ due to the switch to qubits, but we omit this detail in rounding convention for clarity. We also need the additional assumptions on $n$ and $\eps_0$ which depend on $k$; see \Cref{lem:alternative_far_from_k_local}.} (\Cref{problem:ham_locality_test}) if we let $\bm{H}$ be the input Hamiltonian;\label{item:locality_testing_reduction}
        \item Amplitude estimation\footnote{The reduction for amplitude amplification follows nearly identically to the reduction for amplitude estimation, see~\Cref{sec:proof_of_reductions_ae}.} (\Cref{problem:amp_est}) if we let $e^{-i\pi \bm{H}/2}$ be the unitary oracle and $\Pi=I-\ketbra{s}{s}$ be the known projector, where $\ket{s}=\tfrac{1}{\sqrt{d}}\sum_{j=1}^d\ket{j}$.\label{item:ae_reduction}
    \end{enumerate}
\end{lemma}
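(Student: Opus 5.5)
The strategy is the same for all three items. Run the given algorithm once on the oracle for $\bm{H}$ and output $\hat{\bm b}=0$ if it accepts (or reports "close"), and $\hat{\bm b}=1$ otherwise. Under the null hypothesis $\bm H=0$, completeness of each problem gives the correct answer with probability at least $99/100$. It then suffices to show that under the alternative, with probability at least $9/10$ over the draw of $\bm h$, the instance is a valid "far" instance. Soundness then yields success probability at least $\tfrac12\cdot\tfrac{99}{100}+\tfrac12(\tfrac{9}{10}\cdot\tfrac{99}{100})\ge 3/5$.

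We also check that $\bm H$ is a legal input. It is traceless by construction. For the operator norm, $|\bm h_j|\le \sigma=\eps_0\sqrt{\deff}/4$. Since $\deff\le\lceil1/\eps_0^2\rceil\le 2/\eps_0^2$, this gives $\sigma\le \sqrt2/4<1/2$, and therefore $\|\bm H\|\le 2\sigma\le1$. We also get $\sigma\in(0,1/2]$, as \Cref{problem:two_point} requires.

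\textbf{Certification (Item 1).} Let $\bm N=\sum_j\bm\beta_j$ be the number of active blocks. Each block has rank $r=\lfloor d/\deff\rfloor\ge d/(2\deff)$. Since centering can only decrease the Frobenius norm by removing the mean, we have
\[
\|\bm H\|_2^2 = r\sum_j \bm h_j^2 - \tr(\bm H_1)^2/d .
\]
Write $\bm h_j^2=\bm\beta_j\bm\theta_j^2$, where $\E\bm\theta^2=\sigma^2/3$, and note $\E\bm N=50$. By a Chernoff/Markov-type argument, with probability at least $0.95$ we have $\bm N\ge 25$, and with good probability a constant fraction of the active $\bm\theta_j$ satisfy $|\bm\theta_j|\ge\sigma/2$. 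This gives
\[
r\sum_j\bm h_j^2\gtrsim \frac{d}{\deff}\,\sigma^2 = \frac{d\,\eps_0^2}{16}.
\]
The trace term is $\tr(\bm H_1)^2/d=r^2(\sum_j\bm h_j)^2/d$. Its expectation is $r^2\,\E\bm N\,\sigma^2/(3d)$, because the $\bm\theta_j$ are symmetric and mean zero, and this is $O(d\sigma^2/\deff^2)$, which is smaller by a factor $1/\deff\le 1/100$. Markov's inequality therefore makes it negligible. Hence $\|\bm H\|_2/\sqrt d\ge c\,\eps_0\ge\eps_0/40$ with probability at least $9/10$, once the constants are tracked.

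\textbf{Locality (Item 2).} Far-from-local reduces to certification via the projection onto the $k$-local Pauli span, which is orthogonal in Frobenius norm. We need
\[
\|\bm H - P_{\le k}\bm H\|_2\ge \|\bm H\|_2/2 .
\]
The $\Pi_j$ are diagonal, so $\bm H$ lies in the $Z$-type Pauli span, and $\|P_{\le k}\bm H\|_2^2$ equals the weight of the degree-$\le k$ Walsh coefficients of the diagonal function $x\mapsto\bm H_{xx}$. Choose the blocks to be subcubes, fixing $\log_2\deff$ coordinates. The block indicator then has low-degree Fourier weight exactly on the subsets of the fixed coordinates, and the random signs and magnitudes of the $\bm h_j$ spread the weight so that, in expectation, only a fraction about $\binom{\log\deff}{\le k}/\deff$ sits at degree $\le k$. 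For constant $k$ and $\deff$ large this fraction is $o(1)$; this is where the extra conditions on $n$ and $\eps_0$ (\Cref{lem:alternative_far_from_k_local}) enter. Markov's inequality then finishes the item. I expect this computation of the low-degree Walsh weight of a random block function to be the main obstacle.

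\textbf{Amplitude estimation (Item 3).} Take $U=e^{-i\pi\bm H/2}$ and $\ket0=\ket s$. The amplitude on $\Pi=I-\ketbra ss$ is
\[
a=\sqrt{1-|\bra s U\ket s|^2} .
\]
Under the null hypothesis $a=0$. Under the alternative, $\bra sU\ket s=\frac1d\sum_x e^{-i\pi\lambda_x/2}$ with $|\lambda_x|\le1$, so $1-|\bra sU\ket s|^2$ is, up to constants, the variance of $\pi\lambda/2$ under the uniform distribution on the eigenvalues. That variance is $\Theta(\|\bm H\|_2^2/d)$, and using $1-\cos x\ge x^2/\pi^2$ on $[-\pi,\pi]$ gives $a\gtrsim \|\bm H\|_2/\sqrt d\ge\eps_0/C$ by Item 1's estimate. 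An $\eps=\eps_0/40$ estimator thresholded at $\eps_0/80$ therefore distinguishes the two cases, provided the constant works out; if it does not, I would adjust via the bound from Item 1. For amplification, the output state in the null case is $\ket s\perp\Pi$, so measuring $\Pi$ again decides the case with constant bias.
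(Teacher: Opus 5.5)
\textbf{Items 2 and 3.} These follow the paper's route. For locality you take subcube blocks, note that the expected low-degree Walsh weight is the fraction $\sum_{j\le k}\binom{r}{j}/2^r$ (with $r=\log_2\deff$) of the total, and apply Markov. That is the paper's argument; the paper controls $\|\hat{\bm h}\|_2^2$ directly by Chebyshev instead of reusing Item 1, which is cosmetic. For amplitude estimation you bound $a^2=1-|\bra s U\ket s|^2$ below by the variance of the diagonal entries, as the paper does.

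\textbf{Item 1.} Here your route genuinely differs.
\begin{itemize}
\item The paper never touches the centering term. With failure probability at most $2(1-\mathsf{p}/4)^{\deff}\le 2e^{-25/2}$, some $\bm h_j\ge\sigma/2$ and some $\bm h_{j'}\le-\sigma/2$. These two block eigenvalues differ by at least $\sigma$, and this survives the shift by $\tr(\bm H_1)/d$. So one of them has magnitude at least $\sigma/2$ with multiplicity $\lfloor d/\deff\rfloor$, which gives $\eps_0/16$.
\item You instead compute $\|\bm H\|_2^2=r\sum_j\bm h_j^2-\tr(\bm H_1)^2/d$ exactly and control the second term by Markov. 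This works and captures the typical size, with a better constant.
\item However, $\deff$ can be as small as $100$, so Markov only gives a constant failure probability. For example, $\Pr[\tr(\bm H_1)^2/d\ge 2r\sigma^2]\le 25/(3\deff)\approx 0.08$. So ``negligible'' overstates it, and the constants genuinely need tracking.
\item The bare conclusion $\|\bm H\|_2/\sqrt d\ge\eps_0/40$ is also not enough downstream. Item 2 loses a factor $2$, and Item 3 needs $a>2\eps$. You must carry the larger constant your computation actually yields.
\item You can drop Markov entirely. Since $r/d\le 1/\deff$, you have $\|\bm H\|_2^2\ge r\sum_j(\bm h_j-\bar{\bm h})^2\ge r(\bm h_a-\bm h_b)^2/2$ for any pair $a\neq b$. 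This is precisely the paper's eigenvalue-gap argument.
\end{itemize}

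\textbf{Threshold in Item 3.} Thresholding the estimate at $\eps_0/80$ is wrong. In the null case $\hat a$ may lie anywhere in $[0,\eps_0/40]$. The threshold must therefore lie in $(\eps,\,a_{\min}-\eps)$, which requires $a_{\min}>2\eps=\eps_0/20$.
\begin{itemize}
\item Using $1-\cos x\ge 2x^2/\pi^2$ on $[-\pi,\pi]$ (equivalently the paper's $\cos(\pi x/2)\le 1-x^2/2$ on $[-2,2]$) gives $a^2\ge\|\bm H\|_2^2/d$. Hence $a\ge\eps_0/16$, and a threshold of $\eps_0/32$ works.
\item With your weaker inequality $1-\cos x\ge x^2/\pi^2$, you lose a factor $\sqrt2$. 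You then need Item 1 to deliver $\|\bm H\|_2/\sqrt d>\sqrt2\,\eps_0/20$.
\end{itemize}

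\textbf{Amplification (footnote).} Your one-line treatment has a real gap. In the null case $U=I$ and $a=0$, so the amplification guarantee is vacuous: $\ket{\phi_\Pi}$ can be any vector in the range of $\Pi$. Nothing forces the algorithm's null-case output to be $\ket s$. It may well lie in the range of $\Pi$, and then measuring $\Pi$ gives the same outcome under both hypotheses. The paper flags exactly this pathology and repairs it by giving the null instance a small nonzero good component.
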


We also have the following reduction when $\theta_0 > 0$, which is relevant for proving the lower bound for unstructured search with coherent errors.
\begin{lemma}[Large-angle reduction]\label{lem:large_angle_reduction}
  Let $\Delta\in [0,1/2)$ be a fixed constant and let $d\geq 99$. Consider the distinguishing task in \Cref{problem:two_point} with $\deff = d$, $\mathsf{p}=50/\deff$, $\sigma=\Delta$, and $\theta_0=\pi$, where we have black-box access to time evolution under $\bm{H}\in \C^{d\times d}$. With these choices, \Cref{problem:two_point} can be solved using one call to a successful algorithm for search with coherent errors (\Cref{problem:coherent_error}) with input parameters $d$ and $\Delta$, if we let $e^{-i\bm{H}}$ be the unitary oracle.
\end{lemma}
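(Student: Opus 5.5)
The plan is to use the search algorithm as a black box and check that every instance of \Cref{problem:two_point} with these parameters is, up to an undetectable global phase, a valid instance of \Cref{problem:coherent_error}. The correctness guarantee then transfers instance by instance, and averaging over the random instance gives success probability at least $3/5$.

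\textbf{Matching the oracle to the search problem.} With $\deff = d$, each projector $\Pi_j$ has rank $\lfloor d/\deff\rfloor = 1$, so $\Pi_j = \ketbra{j}{j}$ in the standard basis. Under the alternative hypothesis, $\bm{H}_1 = \sum_{x=1}^d \bm{\beta}_x\bm{\theta}_x \ketbra{x}{x}$ and
\begin{align}
  e^{-i\bm{H}} = e^{i\tr(\bm{H}_1)/d}\sum_{x=1}^d e^{-i\bm{\beta}_x\bm{\theta}_x}\ketbra{x}{x}.
\end{align}
Define $f(x) \coloneq \bm{\beta}_x$ and $\theta_x \coloneq 2\pi - \bm{\theta}_x$. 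Since $\bm{\theta}_x \in [\pi-\Delta,\pi+\Delta]$ and this interval is symmetric about $\pi$, we get $\theta_x \in [\pi-\Delta,\pi+\Delta]$. Because $f(x)\in\{0,1\}$, we also have $e^{-i\bm{\beta}_x\bm{\theta}_x} = e^{i\theta_x f(x)}$. Hence $e^{-i\bm{H}} = e^{i\phi}O_f^\theta$ with $\phi = \tr(\bm{H}_1)/d$.

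Under the null hypothesis, $e^{-i\bm{H}} = I = O_f^\theta$ with $f\equiv 0$ and any admissible angles.

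\textbf{Removing the global phase.} In the computational model, each query applies $U(t)\otimes I$ without control, interleaved with unitaries $V_k$. Replacing the oracle by $e^{i\phi}O^\theta_f$ therefore multiplies the final state by $e^{iq\phi}$, which leaves every POVM outcome distribution unchanged. So the search algorithm, run on $e^{-i\bm{H}}$, behaves exactly as it does on $O_f^\theta$.

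\textbf{The decision rule and its success probability.} Run the search algorithm once and output $\hat{\bm b}=0$ if it returns $\mathrm{NONE}$, and $\hat{\bm b}=1$ otherwise.
\begin{itemize}
  \item Under the null, $M=0$, so the algorithm returns $\mathrm{NONE}$ with probability at least $99/100$.
  \item Under the alternative, $M = \sum_x \bm{\beta}_x \sim \mathrm{Bin}(d,50/d)$, so $\Pr[M=0] = (1-50/d)^d \le e^{-50}$. Conditioned on any fixed instance with $M\ge 1$, the algorithm outputs some $x$ (and never $\mathrm{NONE}$) with probability at least $99/100$.
\end{itemize}
Averaging over the independent randomness of the instance, the alternative is detected with probability at least $\tfrac{99}{100}(1-e^{-50})$. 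The overall success probability is thus at least $\tfrac12\cdot\tfrac{99}{100} + \tfrac12\cdot\tfrac{99}{100}(1-e^{-50}) > 3/5$.

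\textbf{Main obstacle.} I expect the only delicate point to be the guarantee on the number of marked items $M$. \Cref{problem:coherent_error} as stated allows any $M$, but \Cref{thm:imperfect_oracle_search} is used with $M=\Theta(1)$. If the algorithm is only promised to work for, say, $1\le M\le C$, I would handle this with a Chernoff bound: $\Pr[M > C] = o(1)$ for a suitable constant $C$ (for example $C=200$). Instances outside $[1,C]$ are then counted as failures in the alternative case, which costs only a small constant in the success probability and still leaves it above $3/5$. The remaining checks are immediate:
\begin{itemize}
  \item $\sigma=\Delta<1/2$ satisfies the requirement $\sigma\le 1/2$.
  \item $\mathsf p = 50/d \le 1/2$ holds since $d \ge 100$.
  \item The off-by-one between $d\ge 99$ and $d\ge 100$ is handled by the constraint of \Cref{problem:two_point} itself.
\end{itemize}
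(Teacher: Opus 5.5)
Your proposal is correct and follows the paper's proof. You realize $e^{-i\bm{H}}$ as a valid coherent-error oracle up to a global phase, output $\hat{\bm{b}}=0$ exactly when the search returns $\mathrm{NONE}$, and use a Chernoff bound to keep the number of marked items $\Theta(1)$; the paper does this with $20\le|\bm{S}|\le 80$ holding with probability $0.99$. Your explicit sign matching $\theta_x = 2\pi-\bm{\theta}_x$ and your global-phase argument are details the paper leaves implicit. One small correction: $\Pr[M>C]$ is a small constant uniformly in $d$ (exponentially small in $C$), not $o(1)$.
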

\Cref{lem:small_angle_reductions,lem:large_angle_reduction} are proven in \Cref{sec:proof_of_reductions}. The main lower bound results are now immediate based on these lemmas, by repeatedly invoking \Cref{thm:two_point_lb}.

\subsection{Proof of lower bound on hard distinguishing task}\label{sec:two_point_lb_proof}
In this section we prove \Cref{thm:two_point_lb}, which implies all the main lower bound results in this work, as explained in \Cref{sec:reductions}. The proof is based on a new framework for quantum query lower bounds using tools from Boolean Fourier analysis, namely, the Efron-Stein decomposition. (See~\cite[Chapter~12]{odonnell2021analysisbooleanfunctions}.) In particular, a well-known consequence of this decomposition in discrete probability is the Efron-Stein inequality. The first ingredient in the proof of \Cref{thm:two_point_lb} is a variant of this inequality for vector-valued random variables.
\begin{lemma}[Vector-valued, anchored Efron-Stein inequality]\label{lem:anc_efron_stein}
  Let $f\colon \R^n\to \C^m$ be a function and let $\mu$ be a probability measure on $\R$. It holds that
  \begin{align}\label{eq:anc_efron_stein}
    \E_{\bm{x}\sim \mu^{\otimes n}}{\big\lVert f(\bm{x})\big\rVert}_2^2 - \big\lVert\E_{\bm{x}\sim \mu^{\otimes n}} f(\bm{x}){\big\rVert}_2^2 \leq \sum_{i=1}^n \E_{\bm{x}\sim \mu^{\otimes n}}\big\lVert f(\bm{x}) - f(\bm{x}^{(i\to 0)}) {\big\rVert}_2^2\, ,
  \end{align}
  where $\bm{x}^{(i\to 0)}\coloneq (\bm{x}_1,\dots, \bm{x}_{i-1},0,\bm{x}_{i+1},\dots,\bm{x}_n)$ for each $i\in [n]$.
\end{lemma}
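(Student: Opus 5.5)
The approach is to prove the vector-valued, anchored Efron--Stein inequality (\Cref{lem:anc_efron_stein}) by the standard martingale (Doob) decomposition of the variance, carried out coordinate by coordinate in $\C^m$. Then each martingale increment is compared with the anchored difference $f(\bm{x})-f(\bm{x}^{(i\to 0)})$ using the fact that a conditional expectation is the best $L^2$ approximation among functions not depending on $\bm{x}_i$. Since $\|\cdot\|_2^2$ on $\C^m$ is a sum of $|\cdot|^2$ over coordinates, and all the operations involved (conditional expectations, orthogonal projections) act coordinatewise, it is enough to treat scalar complex-valued $f$ and sum. I will nevertheless phrase everything with Hilbert-space-valued $L^2$ so the argument is uniform. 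I will also assume $f\in L^2(\mu^{\otimes n})$; otherwise the left side is either undefined or the right side is infinite, and this can be noted at the start.

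\textbf{Step 1 (martingale decomposition).} Let $\mathcal{F}_i$ be the $\sigma$-algebra generated by $\bm{x}_1,\dots,\bm{x}_i$, and set $E_i f \coloneq \E[f(\bm{x})\mid \mathcal{F}_i]$, so that $E_0 f=\E f$ and $E_n f = f$. Writing $D_i = E_i f - E_{i-1} f$, the increments are mutually orthogonal in $L^2(\mu^{\otimes n};\C^m)$: for $i<j$ one has $\E\langle D_i, D_j\rangle = \E\langle D_i, \E[D_j\mid\mathcal{F}_{j-1}]\rangle = 0$. Hence the left-hand side of \cref{eq:anc_efron_stein} equals $\E\|f-\E f\|_2^2 = \sum_{i=1}^n \E\|D_i\|_2^2$.

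\textbf{Step 2 (each increment is dominated by the anchored difference).} Let $E^{(i)}$ denote the operator that integrates out $\bm{x}_i$ alone, i.e.\ conditioning on all coordinates except the $i$-th. By independence, $E_{i-1} f = E_i E^{(i)} f$. Therefore $D_i = E_i\bigl(f - E^{(i)}f\bigr)$, and since $E_i$ is an $L^2$ contraction, $\E\|D_i\|_2^2 \le \E\|f - E^{(i)} f\|_2^2$. Next, $E^{(i)}$ is the orthogonal projection onto the closed subspace of $L^2$ functions not depending on $\bm{x}_i$, and $g(\bm{x})\coloneq f(\bm{x}^{(i\to 0)})$ lies in that subspace. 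So
\begin{align}
\E\|f-E^{(i)}f\|_2^2 \le \E\|f - g\|_2^2 = \E\|f(\bm{x})-f(\bm{x}^{(i\to 0)})\|_2^2 .
\end{align}
Summing over $i$ gives \cref{eq:anc_efron_stein}. Note that the anchor $0$ plays no special role: any fixed point works. The constant is $1$ rather than the $1/2$ of the symmetric version because we compare against a fixed replacement rather than an independent copy.

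\textbf{Main obstacle.} The only delicate point is measure-theoretic. If $g$ is not in $L^2$ (i.e.\ $f(\bm{x}^{(i\to0)})$ has infinite second moment), the right-hand side is $+\infty$ and the inequality holds trivially. Otherwise $g$ is a legitimate element of the subspace, and the projection argument goes through. I would handle this with a one-line case split and otherwise regard the proof as routine.
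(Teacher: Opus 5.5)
Your argument is correct, and it takes a genuinely different route from the paper's.

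The paper builds the full Efron--Stein (ANOVA) decomposition $f=\sum_{S\subseteq[n]}f_S$ from the commuting projectors $E_i$ and $L_i=I-E_i$ (the paper's $E_i$ is your $E^{(i)}$). It then proceeds in three steps:
\begin{enumerate}
\item It bounds the left-hand side by $\sum_{S}|S|\,\E\|f_S\|_2^2=\sum_i\sum_{S\ni i}\E\|f_S\|_2^2$.
\item It pads this with the nonnegative terms $\E\|f_S(\bm{x}^{(i\to 0)})\|_2^2$.
\item It shows the padded sum equals the right-hand side exactly. To do so it expands $f-Z_if=\sum_{S\ni i}(f_S-Z_if_S)$, with $Z_if(x)=f(x^{(i\to0)})$, and checks that all the pieces are pairwise orthogonal.
\end{enumerate}
You replace the $2^n$-term decomposition with the $n$-term Doob martingale decomposition. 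You also replace the exact orthogonal expansion of $f-Z_if$ with the best-approximation property of the projection $E^{(i)}$ onto $x_i$-independent functions.

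The two proofs pass through the same intermediate bound, $\E\|f\|_2^2-\|\E f\|_2^2\le\sum_i\E\|f-E^{(i)}f\|_2^2$; in the paper's notation this is $\E\|L_if\|_2^2=\sum_{S\ni i}\E\|f_S\|_2^2$. In both, the slack incurred in moving to the anchor is exactly $\E\|E^{(i)}f-f(\bm{x}^{(i\to0)})\|_2^2$. In your proof this is Pythagoras; in the paper it appears as $\sum_{S\ni i}\E\|Z_if_S\|_2^2$.

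What each route buys:
\begin{itemize}
\item Yours is shorter and needs only the tower property and the contraction and projection properties of conditional expectation.
\item Yours avoids the orthogonality bookkeeping. In particular, the paper's claim $Z_if_S\perp Z_if_{S'}$ is most cleanly justified by noting that $Z_if_S$ is the $(S\setminus\{i\})$-component of $Z_iL_if$.
\item Yours makes plain that the anchor can be any $x_i$-independent function.
\item The paper's version gives an exact accounting of the right-hand side in terms of the components $f_S$, and it stays within the Fourier-analytic framework the paper emphasizes.
\end{itemize}

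One minor point: you assert, rather than prove, that the right-hand side is infinite whenever $f\notin L^2$. The assertion is true, but it is not a one-liner. It is also unnecessary here, since in the application $f$ takes values of norm at most one. You can simply assume $f\in L^2$, as the paper implicitly does.
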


We defer the proof of \Cref{lem:anc_efron_stein} to \Cref{sec:proof_of_anchored_efron_stein}. In the proof of \Cref{thm:two_point_lb}, this inequality is used as a first step to bound the distinguishability of two quantum states: the trace distance between them can be bounded from above by a sum of contributions, each of which corresponds to fixing a single term in the random Hamiltonian that generates the time evolution. It thereby suffices to bound the individual terms in the sum, in a way that is sensitive to whether the evolution times are nonnegative. To this end, we make use of the following result as a second ingredient in our proof of \Cref{thm:two_point_lb}.
\begin{lemma}[Kernel bound]\label{lem:kernel_bound}
  Let $\theta_0,\sigma\in [0,2\pi)$, let $\bm{\omega}\sim \textnormal{Unif}([\theta_0-\sigma,\theta_0+\sigma])$, and let $K(x,y)\coloneq \E_{\bm{\omega}}(1-e^{ix \bm{\omega}})(1-e^{-iy\bm{\omega}})$. For every $n\in\Z_{+}$, $x_1,\dots, x_n \ge 0$, and $c_1,\dots, c_n\in \C$ it holds that
  \begin{align}\label{eq:kernel_ineq}
    \sum_{i,j=1}^n c_i^*c_j K(x_i,x_j)\leq \frac{\pi (\theta_0+\sigma)^2}{\sigma} \sum_{i,j=1}^n c_i^*c_j\min(x_i,x_j).
  \end{align}
\end{lemma}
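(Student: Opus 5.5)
The plan is to write the quadratic form on the left of \cref{eq:kernel_ineq} as an average of squared Fourier transforms of a single step function. Then bound the density of $\bm{\omega}$ and apply Plancherel.

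\emph{Step 1: collapse the form to one squared modulus.} Since $K(x,y)=\E_{\bm{\omega}}(1-e^{ix\bm{\omega}})(1-e^{-iy\bm{\omega}})$ and $(1-e^{ix\omega})=\overline{(1-e^{-ix\omega})}$, the form factorises pointwise in $\omega$:
\begin{align}
\sum_{i,j=1}^n c_i^*c_jK(x_i,x_j)=\E_{\bm{\omega}}\Big|\sum_{j=1}^n c_j\big(1-e^{-ix_j\bm{\omega}}\big)\Big|^2 .
\end{align}

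\emph{Step 2: rewrite each increment as an integral over $[0,x_j]$.} For $x\ge 0$ we have $1-e^{-ix\omega}=i\omega\int_0^x e^{-is\omega}\,ds$. This is the one place where $x_j\ge 0$ is used: it lets the interval of integration be the genuine interval $[0,x_j]\subset[0,\infty)$ with a positive sign. Define the step function $g(s)\coloneq\sum_j c_j\mathbf 1[0\le s\le x_j]$, supported on $[0,\infty)$. Then
\begin{align}
\sum_j c_j\big(1-e^{-ix_j\omega}\big)=i\omega\,\hat g(\omega),\qquad \hat g(\omega)\coloneq\int_{\R}g(s)e^{-is\omega}\,ds ,
\end{align}
and the form equals $\E_{\bm{\omega}}\,\bm{\omega}^2|\hat g(\bm{\omega})|^2$.

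\emph{Step 3: dominate the uniform law by Lebesgue measure.} Write the expectation as $\frac{1}{2\sigma}\int_{\theta_0-\sigma}^{\theta_0+\sigma}\omega^2|\hat g(\omega)|^2\,d\omega$. On this interval $|\omega|\le\max\{|\theta_0-\sigma|,\theta_0+\sigma\}=\theta_0+\sigma$, because $\theta_0,\sigma\ge0$. Since the integrand is nonnegative, we may enlarge the domain to all of $\R$, giving the upper bound $\frac{(\theta_0+\sigma)^2}{2\sigma}\int_{\R}|\hat g(\omega)|^2\,d\omega$.

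\emph{Step 4: Plancherel.} We have $\int_\R|\hat g|^2=2\pi\int_\R|g|^2$, and
\begin{align}
\int_0^\infty|g(s)|^2ds=\sum_{i,j}c_i^*c_j\int_0^\infty\mathbf 1[s\le x_i]\mathbf 1[s\le x_j]\,ds=\sum_{i,j}c_i^*c_j\min(x_i,x_j).
\end{align}
Combining Steps 3 and 4 gives the constant $\frac{(\theta_0+\sigma)^2}{2\sigma}\cdot 2\pi=\frac{\pi(\theta_0+\sigma)^2}{\sigma}$, exactly as in \cref{eq:kernel_ineq}.

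\emph{Main obstacle.} The only real idea is recognising Steps 1–2, i.e.\ that $\min(x,y)$ is the Gram kernel of the indicators $\mathbf 1_{[0,x]}$, and that the increments $1-e^{-ix\omega}$ are, up to the factor $i\omega$, their Fourier transforms. This representation fails for negative $x$: $\mathbf 1_{[x,0]}$ enters with the opposite sign, and $\min(x,y)$ is no longer a Gram kernel. That failure is why the bound is sensitive to time reversal. The rest is routine bookkeeping: the density bound, the domain enlargement and the Plancherel normalisation $2\pi$.
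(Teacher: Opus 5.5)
Your proposal is correct and follows the paper's proof essentially step for step: you collapse the form to $\E_{\bm{\omega}}\bigl|\sum_j c_j(1-e^{-ix_j\bm{\omega}})\bigr|^2$, write it as $i\bm{\omega}$ times the Fourier transform of the step function $\sum_j c_j\mathbf{1}[0\le s\le x_j]$, bound $\bm{\omega}^2\le(\theta_0+\sigma)^2$ against the density $1/(2\sigma)$, enlarge the domain to $\R$, and apply Plancherel to recover the $\min$ kernel, with the same constant. Your observation that $x_j\ge 0$ is exactly what makes $\min(x,y)$ the Gram kernel of the indicators also matches the paper's remark on why the inequality fails for negative times.
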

\begin{proof}
  Define $f\colon \R\to \C$ through $f(x)=\sum_{j=1}^nc_j\mathbf{1}[0\leq x\leq x_j]$ for every $x\in \R$. We compute
  \begin{align}
    \sum_{i,j=1}^n c_i^*c_j K(x_i,x_j) &= \expec_{\bm{\omega}}\sum_{i,j=1}^n c_i^*c_j(1-e^{i x_i \bm{\omega}})(1-e^{-ix_j\bm{\omega}})\\
    &= \E_{\bm{\omega}}\left\lvert\sum_{j=1}^n c_j (1-e^{-i x_j\bm{\omega}})\right\rvert^2\\
    &= \E_{\bm{\omega}}\left\lvert\sum_{j=1}^n c_j(i\bm{\omega})\int_{0}^{x_j}e^{-ix \bm{\omega}} \dd x\right\rvert^2\\
    &= \E_{\bm{\omega}}\left\lvert i\bm{\omega}\int_{-\infty}^{\infty}f(x)e^{-ix \bm{\omega}} \dd x\right\rvert^2\\
    &\leq \frac{(\theta_0+\sigma)^2}{2\sigma} \int_{\theta-\sigma}^{\theta+\sigma} \left\lvert\int_{-\infty}^{\infty}f(x)e^{-ix \omega} \dd x\right\rvert^2\dd \omega\label{eq:cauchy_schwarz_step}\\
    &\leq \frac{(\theta_0+\sigma)^2}{2\sigma} \int_{-\infty}^{\infty} \left\lvert\int_{-\infty}^{\infty}f(x)e^{-ix \omega} \dd x\right\rvert^2\dd \omega\\
    &= \frac{\pi(\theta_0+\sigma)^2}{\sigma} \int_{-\infty}^{\infty} |f(x)|^2\dd x\label{eq:plancherel_step}\\
    &= \frac{\pi(\theta_0+\sigma)^2}{\sigma} \sum_{i,j=1}^n c_i^*c_j\int_{-\infty}^\infty\mathbf{1}[0\leq x\leq x_i]\mathbf{1}[0\leq x\leq x_j]\dd x \\
    &= \frac{\pi (\theta_0+\sigma)^2}{\sigma} \sum_{i,j=1}^n c_i^*c_j\min(x_i,x_j),
  \end{align}
  where \cref{eq:cauchy_schwarz_step} follows from H\"older's inequality and in \cref{eq:plancherel_step} we applied Plancherel's theorem.
\end{proof}
Notably, the inequality in~\cref{eq:kernel_ineq} is  \textit{not} true if $x_1, \dots, x_n$ can be negative. For example, if all the coefficients $c_i$ are equal while half of the points $x_i$ satisfy $x_i = t$ and the other half satisfy $x_i = -t$ for some $t \in \R$, then the left-hand side of~\cref{eq:kernel_ineq} is positive, while the right-hand side is negative. When we apply~\Cref{lem:kernel_bound} below, the $x_1, \dots, x_n$ are given by partial sums of the evolution times under $H$. As mentioned in \Cref{sec:proof_overview}, this is the only place where we use that the algorithm makes only forward time queries.

We can now prove \Cref{thm:two_point_lb}. 
\begin{proof}[Proof of \Cref{thm:two_point_lb}]
  Suppose a quantum algorithm makes $q\in \Z_+$ calls to the forward time evolution under $\bm{H}$ in \Cref{problem:two_point} for times $t_1,\dots, t_q \ge 0$. Let $\mathsf{t}=(t_1,\dots, t_q)$ denote the vector of times. For any $k\in\{0,1,\dots,q\}$ and $z\in \{0,1\}^k$ we let $\mathsf{t}_z\coloneq \sum_{j=1}^k z_jt_j$ (i.e., the sum of the first $k$ times restricted to the subset with indicator $z$), so that $T= \mathsf{t}_{1^q}$ and $\mathsf{t}_{(y,b)}=\mathsf{t}_y+b\,t_k$ for all $y\in\{0,1\}^{k-1}$ and $b\in\{0,1\}$. We will bound the trace distance between the final states of the algorithm in the case where $\bm{b}=0$ and $\bm{b}=1$, respectively. First, we introduce notation for the different possible intermediate states of the algorithm. For $h\in\R^{\deff}$ and any integer $k$ such that $0\leq k \leq q$ we let
  \begin{align}
      \ket{\psi^{(k)}_h} \coloneq V_k (e^{-i H t_k}\otimes I) V_{k-1} (e^{-i H t_{k-1}}\otimes I)\dots V_1 (e^{-i H t_1}\otimes I)V_0\ket{\psi} \quad \textnormal{where}\quad H=\diag(h).
  \end{align}
  (Recall from our discussion at the beginning of \Cref{sec:lower_bounds_without_time_reversal} that $V_j$ is our notation for the unitary operations fixed by the algorithm.) This is the state of the distinguishing algorithm after $k$ calls to the time evolution operator under $\diag(h)$. Let $\bm{h}=(\bm{h}_1,\dots,\bm{h}_{\deff})$ where $\bm{h}_1,\dots,\bm{h}_{\deff}\in \R$ are iid copies of $\bm{\beta\theta}$ for independent $\bm{\beta}\sim \textnormal{Ber}(\mathsf{p})$ and $\bm{\theta}\sim\textnormal{Unif}([\theta_{\min},\theta_{\max}])$, as in \Cref{problem:two_point}. Conditioned on $\bm{b}=0$ the final state of the algorithm is $\ket{\psi^{(q)}_{0}}$ with certainty, and conditioned on $\bm{b}=1$ the final state of the algorithm is the random state $\ket{\psi^{(q)}_{\bm{h}}}$. Our goal is therefore to bound the quantity
  \begin{align}
      \mathrm{d}_{\mathrm{tr}}\big(\ketbra{\psi^{(q)}_0}{\psi^{(q)}_0},\E_{\bm{h}}\ketbra{\psi^{(q)}_{\bm{h}}}{\psi^{(q)}_{\bm{h}}}\big)^2&\coloneq \frac{1}{2}\norm*{\ketbra{\psi^{(q)}_{0}}{\psi^{(q)}_{0}}-\E_{\bm{h}}\ketbra{\psi^{(q)}_{\bm{h}}}{\psi^{(q)}_{\bm{h}}}}_1^2 \\
      &\le \E_{\bm{h}}(1-|\langle\psi^{(q)}_{0} | \psi^{(q)}_{\bm{h}}\rangle|^2)\\
      &= \E_{\bm{h}}\big\lVert\Pi^\perp_0 \ket{\psi^{(q)}_{\bm{h}}}{\big\rVert}_2^2\, ,
  \end{align}
  where in the final line we defined $\Pi^\perp_0\coloneq I-\ketbra{\psi^{(q)}_{0}}{\psi^{(q)}_{0}}$. Specifically, we will show this quantity is at most $o(1)$ so long as $T=o(\sigma\deff/\theta_{\max}^2)$, where here and throughout we use the shorthand $\theta_{\max}\coloneq \sigma+\theta_0$.
  
    By our vector-valued, anchored Efron-Stein inequality (\Cref{lem:anc_efron_stein}), we have 
  \begin{align}
    \E_{\bm{h}}\big\lVert\Pi^\perp_0 |\psi^{(q)}_{\bm{h}}\rangle{\big\rVert}_2^2 - \big\lVert\E_{\bm{h}}\Pi^\perp_0 |\psi^{(q)}_{\bm{h}}\rangle{\big\rVert}_2^2\leq \sum_{i=1}^\deff \E_{\bm{h}}\big\lVert \Pi^\perp_0 |\psi^{(q)}_{\bm{h}}\rangle - \Pi^\perp_0 |\psi^{(q)}_{\bm{h}^{(i\to 0)}}\rangle{\big\rVert}_2^2
  \end{align}
  where $\bm{h}^{(i\to 0)} = (\bm{h}_1,\dots, \bm{h}_{i-1},0,\bm{h}_{i+1},\dots, \bm{h}_{\deff})$ for each $i\in [\deff]$.
  Since $\Pi^\perp_0\ket{\psi_{0}^{(q)}}=0$, we have
  \begin{align}
     \big\lVert\E_{\bm{h}}\Pi^\perp_0 \ket{\psi^{(q)}_{\bm{h}}}{\big\rVert}_2^2 &\leq \Pr[\bm{h}\neq 0]\E_{\bm{h}}\big\lVert\Pi^\perp_0 \ket{\psi^{(q)}_{\bm{h}}}{\big\rVert}_2^2 \leq [1-(1-\mathsf{p})^\deff]\E_{\bm{h}}\big\lVert\Pi^\perp_0 \ket{\psi^{(q)}_{\bm{h}}}{\big\rVert}_2^2\, ,
  \end{align}
  which, along with the fact that $\mathsf{p}=\Theta(1/\deff)$, implies 
  \begin{align}
    \E_{\bm{h}}\big\lVert\Pi^\perp_0 \ket{\psi^{(q)}_{\bm{h}}}{\big\rVert}_2^2\leq c_0\sum_{i=1}^\deff \E_{\bm{h}}\big\lVert \Pi^\perp_0 \ket{\psi^{(q)}_{\bm{h}}} - \Pi^\perp_0 \ket{\psi^{(q)}_{\bm{h}^{(i\to 0)}}}{\big\rVert}_2^2\label{eq:efron_stein_after_applied}
  \end{align}
  for some universal constant $c_0 > 0$.

  Next, we introduce notation to describe a sum-over-paths decomposition for the state after $q$ queries when $\bm{h}=h$. First, fix an index $i\in [\deff]$ which determines our notion of a ``path''; namely, we coarse-grain the labels for the standard basis states, keeping track only of whether the path has value ``$i$'' or ``not $i$'' at each step of the algorithm. Precisely, given $k\in\{0,1,\dots,q\}$, $z\in \{0,1\}^k$, and $h\in \R^{\deff}$, we define the subnormalized state
  \begin{align}
      \ket{\varphi_z^{(i)}(h)} \coloneq V_k (e^{-i H t_k}P^{(i)}_{z_k}\otimes I) V_{k-1} (e^{-i H t_{k-1}}P^{(i)}_{z_{k-1}}\otimes I)\dots V_1 (e^{-i H t_1}P^{(i)}_{z_1}\otimes I)V_0\ket{\psi},
  \end{align}
  where $H=\diag(h)$, $P^{(i)}_1=\Pi_i$, and $P^{(i)}_0=I-\Pi_i$. (So, in the language above, $1$ indicates ``$i$'' and $0$ indicates not ``$i$''.) Note that $\ket{\varphi_z^{(i)}(h)}$ involves only the first $k$ steps of the algorithm; for $k=0$ the only path is the empty string $\varnothing$, and $\ket{\varphi^{(i)}_\varnothing(h)}=V_0\ket{\psi}$. With this notation, we can insert resolutions of identity to conclude
  \begin{align}\label{eq:path_decomposition}
      \ket{\psi^{(k)}_h} = \sum_{z\in \{0,1\}^k}\ket{\varphi^{(i)}_z(h)}\qquad\text{for all } k\in\{0,1,\dots,q\}.
  \end{align}
  One may also readily verify
  \begin{align}\label{eq:zerod_varphi_identity}
      \ket{\varphi^{(i)}_z(h)} &= e^{-ih_i \mathsf{t}_z} \ket{\varphi_z^{(i)}(h^{(i\to 0)})}
  \end{align}
  for all $z\in\{0,1\}^k$, where $h^{(i\to 0)} = (h_1,\dots, h_{i-1},0,h_{i+1},\dots, h_{\deff})$.
  
  With this notation, the $i$th term in the sum in \cref{eq:efron_stein_after_applied} is at most
  \begin{align}
    \E_{\bm{h}}\big\lVert \ket{\psi^{(q)}_{\bm{h}}} -  \ket{\psi^{(q)}_{\bm{h}^{(i\to 0)}}}{\big\rVert}_2^2
    &= \E_{\bm{h}}\big\lVert \sum_{z\in \{0,1\}^q}(e^{-i\bm{h}_i \mathsf{t}_z}-1)|{\varphi}^{(i)}_z(\bm{h}^{(i\to 0)})\rangle{\big\rVert}_2^2\\
    &= \sum_{z,z^\prime\in \{0,1\}^q}\E_{\bm{h}_i}(e^{i\bm{h}_i \mathsf{t}_{z^\prime}}-1)(e^{-i\bm{h}_i \mathsf{t}_z}-1)\E_{\bm{h}^{(i\to 0)}}\langle \varphi^{(i)}_{z^\prime}(\bm{h}^{(i\to 0)})| \varphi^{(i)}_{z}(\bm{h}^{(i\to 0)})\rangle\\
    &= \mathsf{p}\sum_{z,z^\prime\in \{0,1\}^q} K(\mathsf{t}_{z^\prime},\mathsf{t}_z)\E_{\bm{h}^{(i\to 0)}}\langle \varphi^{(i)}_{z^\prime}(\bm{h}^{(i\to 0)})| \varphi^{(i)}_{z}(\bm{h}^{(i\to 0)})\rangle\\
    &\leq \pi \mathsf{p}\cdot \frac{\theta_{\max}^2}{\sigma}\sum_{z,z^\prime\in \{0,1\}^q}\min(\mathsf{t}_z,\mathsf{t}_{z^\prime})\E_{\bm{h}^{(i\to 0)}}\langle \varphi^{(i)}_{z^\prime}(\bm{h}^{(i\to 0)})| \varphi^{(i)}_{z}(\bm{h}^{(i\to 0)})\rangle\\
    &= \frac{\pi \mathsf{p}}{1-\mathsf{p}}\cdot\frac{\theta_{\max}^2}{\sigma}\E_{\bm{h}}\mathbf{1}[\bm{h}_i=0]\sum_{z,z^\prime\in \{0,1\}^q}\min(\mathsf{t}_z,\mathsf{t}_{z^\prime})\langle\varphi^{(i)}_{z^\prime}(\bm{h}) | \varphi^{(i)}_z(\bm{h})\rangle\label{eq:bound_by_min_kernel}
  \end{align}
  where the first line follows from \cref{eq:zerod_varphi_identity}, the third line uses the definition of $K(\cdot,\cdot)$ from \Cref{lem:kernel_bound}, the fourth line follows from \Cref{lem:kernel_bound}, and in the final line we used $1=\Pr[\bm{h}_i=0]/(1-\mathsf{p})$ as well as $h^{(i\to 0)}=h$ if $h_i=0$.
  
  We now introduce the shorthand $\ket{\bm{\varphi}_{z}^{(i)}}\coloneq \ket{\varphi^{(i)}_{z}(\bm{h})}$ for each $z\in\{0,1\}^{k}$ and $k\in\{0,1,\dots, q\}$, and for each such $k$ consider the quantity
  \begin{align}
      S_k(\bm{h}) \coloneq \sum_{z,z^\prime\in\{0,1\}^k}\min(\mathsf{t}_z,\mathsf{t}_{z^\prime})\langle\bm{\varphi}_{z^\prime}^{(i)} | \bm{\varphi}_{z\vphantom{z^\prime}}^{(i)}\rangle.
  \end{align}
  Then we can express the inequality given in \cref{eq:bound_by_min_kernel} and the preceding lines as
  \begin{align}\label{eq:ith_term_summary_1}
      \E_{\bm{h}}\big\lVert \ket{\psi^{(q)}_{\bm{h}}} -  \ket{\psi^{(q)}_{\bm{h}^{(i\to 0)}}}{\big\rVert}_2^2 &\leq \frac{\pi \mathsf{p}}{1-\mathsf{p}}\cdot\frac{\theta_{\max}^2}{\sigma}\E_{\bm{h}}\mathbf{1}[\bm{h}_i=0]S_q(\bm{h})
  \end{align}
  We proceed to bound the right-hand side of the above inequality by writing $S_q(\bm{h})$ in a more transparent form.
  
  Let $1\leq k\leq q$, and write $z=(y,b)$ and $z^\prime=(y^\prime,b^\prime)$ for some $y,y^\prime\in \{0,1\}^{k-1}$ and $b,b^\prime\in\{0,1\}$. By definition, $\ket{\bm{\varphi}^{(i)}_{z}}=V_k(e^{-i\bm{H}t_k}P^{(i)}_{b}\otimes I)\ket{\bm{\varphi}^{(i)}_{y}}$ with $\bm{H}=\diag(\bm{h})$ (up to shift by identity), and $V_k(e^{-i\bm{H}t_k}\otimes I)$ is unitary, so
  \begin{align}
      \langle\bm{\varphi}_{z^\prime}^{(i)} | \bm{\varphi}_{z\vphantom{z^\prime}}^{(i)}\rangle &= \langle\bm{\varphi}^{(i)}_{y^\prime}|P^{(i)}_{b}| \bm{\varphi}^{(i)}_{y\vphantom{y^\prime}}\rangle\delta_{b,b^\prime}.
  \end{align}
  Combining this with $\mathsf{t}_{(y,b)}=\mathsf{t}_y+b\,t_k$, we have
  \begin{align}
    S_k(\bm{h}) &= \sum_{y,y^\prime\in \{0,1\}^{k-1}}\Big[\min(\mathsf{t}_y,\mathsf{t}_{y^\prime})  \langle\bm{\varphi}_{y^\prime}^{(i)} | P^{(i)}_0 | \bm{\varphi}^{(i)}_{y\vphantom{y^\prime}}\rangle + \big(\min(\mathsf{t}_y,\mathsf{t}_{y^\prime})+t_k\big)\langle\bm{\varphi}^{(i)}_{y^\prime} | P^{(i)}_1| \bm{\varphi}^{(i)}_{y\vphantom{y^\prime}}\rangle\Big]\\
    &= \sum_{y,y^\prime\in \{0,1\}^{k-1}}\min(\mathsf{t}_y,\mathsf{t}_{y^\prime}) \langle\bm{\varphi}_{y^\prime}^{(i)} | \bm{\varphi}^{(i)}_{y\vphantom{y^\prime}}\rangle + t_k\sum_{y,y^\prime\in \{0,1\}^{k-1}}\langle\bm{\varphi}^{(i)}_{y^\prime} | \Pi_i | \bm{\varphi}^{(i)}_{y\vphantom{y^\prime}}\rangle\\
    &= S_{k-1}(\bm{h}) + t_k\langle \psi^{(k-1)}_{\bm{h}}|\Pi_i| \psi^{(k-1)}_{\bm{h}}\rangle,
  \end{align}
  where the last line uses the definition of $S_{k-1}(\bm{h})$ and \cref{eq:path_decomposition} with $k-1$ in place of $k$. Since $\mathsf{t}_\varnothing=0$, we have $S_0(\bm{h})=0$, so by induction on $k$ we arrive at
  \begin{align}
    S_q(\bm{h}) &= \sum_{j=1}^q t_j \langle \psi^{(j-1)}_{\bm{h}}|\Pi_i| \psi^{(j-1)}_{\bm{h}}\rangle.
  \end{align}
  Hence, \cref{eq:ith_term_summary_1} is equivalent to
  \begin{align}
     \E_{\bm{h}}\big\lVert \ket{\psi^{(q)}_{\bm{h}}} -  \ket{\psi^{(q)}_{\bm{h}^{(i\to 0)}}}{\big\rVert}_2^2&\leq \frac{\pi \mathsf{p}}{1-\mathsf{p}}\cdot\frac{\theta_{\max}^2}{\sigma}\E_{\bm{h}}\sum_{j=1}^q t_j \langle {\psi}^{(j-1)}_{\bm{h}}|\Pi_i|  {\psi}^{(j-1)}_{\bm{h}}\rangle\cdot\mathbf{1}[\bm{h}_i=0]
  \end{align}
  which in turn implies the right-hand side of \cref{eq:efron_stein_after_applied} is at most
  \begin{align}
     &\frac{c_0\pi \mathsf{p}}{1-\mathsf{p}}\cdot\frac{\theta_{\max}^2}{\sigma}\sum_{i=1}^\deff\E_{\bm{h}}\sum_{j=1}^q t_j \langle \psi^{(j-1)}_{\bm{h}}|\Pi_i| \psi^{(j-1)}_{\bm{h}}\rangle\cdot\mathbf{1}[\bm{h}_i=0]\\
     &\qquad= \frac{c_0\pi \mathsf{p}}{1-\mathsf{p}}\cdot\frac{\theta_{\max}^2}{\sigma}\E_{\bm{h}}\sum_{j=1}^q t_j \langle \psi^{(j-1)}_{\bm{h}}|Q_{\bm{h}}| \psi^{(j-1)}_{\bm{h}}\rangle\\
     &\qquad\leq \frac{c_0\pi \mathsf{p}}{1-\mathsf{p}}\cdot\frac{\theta_{\max}^2}{\sigma}\E_{\bm{h}}\sum_{j=1}^q t_j {\lVert| \psi^{(j-1)}_{\bm{h}}\rangle\rVert}_2^2\\
     &\qquad= \frac{c_0\pi \mathsf{p}}{1-\mathsf{p}}\cdot\frac{\theta_{\max}^2}{\sigma}\cdot\sum_{j=1}^q t_j
  \end{align}
  where in the second line we defined the random projection operator $Q_{\bm{h}} \coloneq \sum_{i=1}^{\deff}\mathbf{1}[\bm{h}_i=0]\Pi_i$. Noting $\deff\geq 2$, $\mathsf{p}=\Theta(1/\deff)$, and $T=\sum_jt_j$, we have
  \begin{align}
      \mathrm{d}_{\mathrm{tr}}\big(\ketbra{\psi^{(q)}_0}{\psi^{(q)}_0},\E_{\bm{h}}\ketbra{\psi^{(q)}_{\bm{h}}}{\psi^{(q)}_{\bm{h}}}\big)^2\leq c_1\cdot\frac{\theta_{\max}^2T}{\sigma \deff}
  \end{align}
  for some universal constant $c_1>0$, which completes the proof.
\end{proof}

\subsection{Beyond average-case error}\label{sec:beyond_average}
In this section, we prove our lower bounds without time reversal extend to \emph{any} normalized Schatten $p$-norm $\| \cdot \|_p/d^{1/p}$, with $p \in [2,\infty]$. Note that 
$p = 2$ corresponds to normalized Frobenius norm, while $p = \infty$ gives the operator norm, so this family of norms interpolates smoothly between average- and worst-case error. Formally, we define Hamiltonian certification in normalized Schatten $p$-norm as follows:
\begin{problem}[Hamiltonian certification in normalized Schatten norms]\label{problem:ham_cert_schatten}
    Let $\eps \in (0,1)$ and let $p \in [2,\infty]$. Let $H_0\in \C^{d\times d}$ be a known target Hamiltonian such that $\|H_0\| \le 1$ and $\tr(H_0)=0$. Given black-box access to time evolution under an unknown Hamiltonian $H\in\C^{d\times d}$ such that $\|H\|\le 1$ and $\tr(H)=0$, perform the following:
    \begin{itemize}
        \item (completeness) if $H = H_0$, accept with probability at least $99/100$;
        \item (soundness) if $\frac{1}{d^{1/p}} \| H - H_0 \|_p \ge \eps$, accept with probability at most $1/100$.
    \end{itemize}
\end{problem}

We extend our lower bounds to all normalized Schatten norms:

\begin{theorem}\label{thm:zero_testing_lb_schatten}
  Let $p \in [2,\infty]$. Any quantum algorithm without time reversal for Hamiltonian certification in normalized Schatten $p$-norm (\Cref{problem:ham_cert_schatten}) requires a total evolution time
  \begin{align}
    \Tfor = \Omega \bigg(
            \mathrm{min}\bigg\{
                \vphantom{\dfrac{d^{1-1/p}}{\eps^p}}
                \dfrac{1}{\eps^p},\,
                \dfrac{d^{\,1-1/p}}{\eps}
            \bigg\}
        \bigg).
\end{align}
\end{theorem}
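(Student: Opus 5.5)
We reduce from the block distinguishing task (\Cref{problem:two_point}) with $\theta_0=0$ and $\mathsf{p}=50/\deff$, and invoke \Cref{thm:two_point_lb}. With $\theta_0=0$ that theorem gives $\Tfor=\Omega(\deff/\sigma)$. The two constraints are $\sigma\le 1/2$, so that $\|\bm{H}\|\le 1$, and the requirement that the alternative is $\eps$-far from zero in normalized Schatten $p$-norm with high probability. The goal is to choose $\deff$ and $\sigma$ so that $\deff/\sigma$ matches $\min\{\eps^{-p}, d^{1-1/p}/\eps\}$.

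First I compute the normalized Schatten norm of the alternative. Let $\bm{s}$ be the number of $j$ with $\bm{\beta}_j=1$; it is $\mathrm{Bin}(\deff,50/\deff)$, so $\bm{s}\ge 25$ with probability close to $1$ by a Chernoff bound. Each active block has rank $r=\lfloor d/\deff\rfloor$, and with constant probability a constant fraction of the active $|\bm{\theta}_j|$ exceed $\sigma/2$. Ignoring the trace shift for now,
\[
\|\bm{H}_1\|_p^p/d \gtrsim \frac{\bm{s}\,r\,\sigma^p}{d}\approx \frac{\sigma^p}{\deff}.
\]
So the normalized $p$-norm is about $\sigma\deff^{-1/p}$, and for $p=\infty$ it is about $\sigma$. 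The trace correction is $\tr(\bm{H}_1)I/d$, whose magnitude is at most $\bm{s}r\sigma/d\approx 50\sigma/\deff$. This is a scalar shift on a space where most of the spectrum is zero. Since the active blocks occupy only an $O(1/\deff)$ fraction of dimensions, I bound the shifted norm from below by its restriction to the active blocks, where the eigenvalues are $\bm{\theta}_j - O(\sigma/\deff)$. This costs only a constant factor once $\deff\ge 100$. To get high probability rather than constant probability, I take an anti-concentration bound over the many active blocks (or simply require that at least a quarter of the active $\bm{\theta}_j$ have $|\bm{\theta}_j|\ge\sigma/2$, which by Chernoff fails with small probability once $\bm{s}\ge 25$). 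The algorithm then errs by at most a small constant on the alternative, and success $3/5$ survives, as in \Cref{lem:small_angle_reductions}.

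Next I set the parameters. Farness requires $\sigma\approx C\eps\,\deff^{1/p}$ together with $\sigma\le 1/2$. The lower bound is then
\[
\frac{\deff}{\sigma}\approx \frac{\deff^{1-1/p}}{\eps},
\]
which increases in $\deff$, so I take $\deff$ as large as possible: $\deff=\min\{d,\lceil (c\eps)^{-p}\rceil\}$. This keeps $\sigma\le 1/2$ and $\deff\le d$.
\begin{itemize}
\item If $\deff=d$: $T=\Omega(d^{1-1/p}/\eps)$.
\item If $\deff\approx\eps^{-p}$: $\sigma$ is a constant and $T=\Omega(\eps^{-p})$.
\end{itemize}
This yields the claimed minimum. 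For $p=\infty$, $\deff^{1/p}=1$ and $\sigma\approx\eps$, so $\deff=d$ is forced (with $\eps^{-\infty}=\infty$ by convention), giving $\Omega(d/\eps)$. When $\deff<100$, i.e.\ large $\eps$, the bound is a constant and holds trivially.

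The main obstacle is the farness step in general $p$: handling the trace subtraction together with the randomness of $\bm{s}$ and $\bm{\theta}$ uniformly in $p\in[2,\infty]$, including the $p=\infty$ case, where only a single large $|\bm{\theta}_j|$ is needed (and this is easy with about $50$ active blocks). I would first follow the $p=2$ argument of \Cref{lem:small_angle_reductions}, replacing second moments by $p$-th moments and using the eigenvalue-wise block lower bound to avoid the trace issue.
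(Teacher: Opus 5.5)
Your proposal is essentially the paper's proof: the same reduction from \Cref{problem:two_point} with $\theta_0=0$, $\mathsf{p}=50/\deff$, $\deff=\min\{d,\lceil \eps_0^{-p}\rceil\}$ and $\sigma=\eps_0\deff^{1/p}/4$, combined with \Cref{thm:two_point_lb} to get $\Omega(\deff^{1-1/p}/\eps)$. The one step to tighten is the trace shift. Your worst-case bound $|\tr(\bm{H}_1)|/d\le \bm{s}\sigma/\deff\approx\sigma/2$ at $\deff\approx 100$ can cancel a block with $|\bm{\theta}_j|\ge\sigma/2$, so ``only a constant factor'' is not justified as written. The paper sidesteps this by showing that, with probability at least $4/5$, some $\bm{h}_j\ge\sigma/2$ and some $\bm{h}_{j'}\le-\sigma/2$; the resulting eigenvalue gap of at least $\sigma$ is invariant under the scalar shift and forces a whole rank-$\lfloor d/\deff\rfloor$ block of eigenvalues of magnitude at least $\sigma/2$, giving normalized $p$-norm at least $\sigma/(2(2\deff)^{1/p})\ge\eps_0/16$.
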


As in the case of normalized Frobenius norm, the proof of~\Cref{thm:zero_testing_lb_schatten} follows via reduction from the distinguishing task in \Cref{problem:two_point}, for a slightly modified choice of parameters. The reduction is captured in the following lemma:

\begin{lemma}\label{lemma:normalized_schatten_reduction}
    Let $d\geq 100$ and $\eps_0\in [0,1/10)$. Let $p \in [2,\infty]$. Consider the distinguishing task in \Cref{problem:two_point} with $\deff = \min\{d, \lceil 1/\eps_0^p\rceil\}$, $p=50/\deff$, $\sigma=\eps_0 \deff^{1/p}/4$, and $\theta_0=0$, where we have black-box access to time evolution under the Hamiltonian $\bm{H}\in\C^{d\times d}$. With these choices, \Cref{problem:two_point} can be solved using a single call to a successful algorithm for Hamiltonian certification in normalized Schatten $p$-norm (\Cref{problem:ham_cert_schatten}), taking $d$ and $\eps=\eps_0/40$ as the dimension and accuracy parameters, respectively.
\end{lemma}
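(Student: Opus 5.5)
We reduce as follows: run the certification algorithm (target $H_0=0$, accuracy $\eps=\eps_0/40$) on the Hamiltonian $\bm{H}$ from \Cref{problem:two_point}, and output $\hat{\bm b}=0$ on accept and $\hat{\bm b}=1$ on reject. Note that the lemma's ``$p=50/\deff$'' means the Bernoulli parameter $\mathsf{p}=50/\deff$. Two things must be checked. First, $\bm{H}$ is a valid input: it is traceless by construction, and $\|\bm{H}\|\le 2\max_j|\bm h_j|\le 2\sigma\le 1$, since $\eps_0\deff^{1/p}/4\le \eps_0(2/\eps_0^p)^{1/p}/4<1/2$ when $\deff\le\lceil 1/\eps_0^p\rceil$. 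Second, we need $\|\bm H\|_p/d^{1/p}\ge \eps$ with probability at least $0.99$ under the alternative. Given both, completeness gives correct output with probability $\ge 0.99$ when $\bm b=0$. When $\bm b=1$ the output is correct with probability $\ge 0.99\cdot 0.99-0.01$. The overall success probability is therefore well above $3/5$.

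For the norm bound we proceed in three steps, all under the alternative.

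\emph{Step 1 (many large entries).} The number of active blocks $\bm N=|\{j:\bm\beta_j=1\}|$ is $\mathrm{Bin}(\deff,50/\deff)$, with mean $50$. Conditioned on $\bm\beta_j=1$, we have $|\bm\theta_j|\ge\sigma/2$ with probability $1/2$. So the number $\bm N'$ of blocks with $|\bm h_j|\ge\sigma/2$ is $\mathrm{Bin}(\deff,25/\deff)$. By Chebyshev (variance $\le 25$), $\bm N'\ge 10$ with probability at least $0.99$, after adjusting constants if necessary.

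\emph{Step 2 (remove the centering).} Write $\bm H=\bm H_1-c I$ with $c=\tr(\bm H_1)/d$, so $|c|\le \bm N\sigma\lfloor d/\deff\rfloor/d\le \bm N\sigma/\deff$. Also $\bm N\le 100$ with probability $\ge 0.99$. Because $\bm H$ is diagonal, $\|\bm H\|_p^p\ge\sum_{j:|\bm h_j|\ge\sigma/2}\lfloor d/\deff\rfloor\,|\bm h_j-c|^p$. When $\deff\ge 400$ we have $|c|\le\sigma/4$, so each such term is at least $(\sigma/4)^p$. For smaller $\deff$, use that for fixed $c$ the event $|\bm\theta_j-c|\ge\sigma/4$ still has conditional probability bounded below; since $c$ depends only weakly on any single entry, a constant fraction of the active blocks survives. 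This gives $\|\bm H\|_p^p\ge 10\lfloor d/\deff\rfloor(\sigma/4)^p$.

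\emph{Step 3 (case split).} Using $\lfloor d/\deff\rfloor\ge d/(2\deff)$,
\begin{align}
\frac{\|\bm H\|_p^p}{d}\ge \frac{5(\sigma/4)^p}{\deff}=\frac{5\,\eps_0^p}{16^p}.
\end{align}
This holds in both cases $\deff=d$ and $\deff=\lceil 1/\eps_0^p\rceil$, because $\sigma^p/\deff=\eps_0^p/4^p$ by the choice of $\sigma$. Taking $p$-th roots gives $\|\bm H\|_p/d^{1/p}\ge \eps_0/16>\eps_0/40$. For $p=\infty$ the argument is simpler: $\|\bm H\|\ge\sigma/4=\eps_0/16$, again with the convention $\deff^{1/\infty}=1$.

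The main obstacle is Step 2, the centering term $c$, in the regime where $\deff$ is a small constant. There $\sigma$ is not small relative to a single active block, so $c$ could in principle cancel an active entry. If the cleaner argument fails, the fallback is to restrict to the event that exactly one or two blocks are active, and lower bound the $p$-norm directly using the inactive blocks, which carry value $-c$.
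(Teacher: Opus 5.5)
Your reduction, the validity check $\|\bm H\|\le 2\sigma\le 1$, and the Step~3 arithmetic are fine, and Step~2 is rigorous when $\deff\ge 400$. The gap is Step~2 for $100\le\deff<400$, which is a genuine part of the lemma's range. Note that $\deff\ge 100$ always, since $d\ge 100$ and $1/\eps_0^p>10^p\ge 100$, so there is no regime where $\deff$ is a small constant. In this range your triangle-inequality bound only gives $|c|\le\bm N\sigma/\deff\lesssim\sigma$. The replacement claim that ``$c$ depends only weakly on any single entry, so a constant fraction of the active blocks survives'' is asserted, not proved. Since $c$ is correlated with every $\bm\theta_j$, you would have to decouple it (e.g.\ via a leave-one-out shift, which moves $c$ by at most $\sigma/\deff$) and then prove concentration of the number of surviving blocks. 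The fallback does not help either: under $\mathrm{Bin}(\deff,50/\deff)$ the event that only one or two blocks are active has probability $O(e^{-50})$, so it cannot support a constant-probability bound.

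The missing idea is a shift-invariant argument, which is what the paper uses. With probability at least $1-2(1-\mathsf p/4)^{\deff}\ge 1-2e^{-25/2}$ there are indices $j,j'$ with $\bm h_j\ge\sigma/2$ and $\bm h_{j'}\le-\sigma/2$. The block eigenvalues $\bm h_j-c$ and $\bm h_{j'}-c$ of $\bm H$ then differ by at least $\sigma$ whatever $c$ is, so one of them has absolute value at least $\sigma/2$ and multiplicity $\lfloor d/\deff\rfloor\ge d/(2\deff)$. One such block already gives $\|\bm H\|_p^p/d\ge(\sigma/2)^p/(2\deff)$, i.e.\ $\|\bm H\|_p/d^{1/p}\ge\eps_0/(8\cdot 2^{1/p})\ge\eps_0/16$, so ten large blocks are unnecessary. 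If you prefer to keep your route, a second-moment bound also closes it uniformly in $\deff\ge 100$: $\E c^2\le 50\sigma^2/(3\deff^2)\le\sigma^2/600$, so $|c|\le\sigma/4$ except with probability $16/600$. Finally, your $0.99$ targets are both unnecessary and not delivered by Chebyshev, which gives only about $0.9$ in Step~1 and $0.98$ for $\bm N\le 100$. Since completeness already gives $0.99$ under the null, being $\eps$-far with any constant probability above roughly $0.22$ under the alternative suffices for overall success $3/5$.
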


Note that~\Cref{lemma:normalized_schatten_reduction} is identical to~\Cref{item:zero_testing_reduction} in~\Cref{lem:small_angle_reductions}, with $\deff$ and $\sigma$ generalized by changing $\min\{d, \lceil 1/\eps_0^2\rceil\}$ to $\min\{d, \lceil 1/\eps_0^p\rceil\}$ and $\eps_0 \sqrt{\deff}/4$ to $\eps_0 \deff^{1/p}/4$, respectively. Accordingly, the proof is identical with these substitutions.

\noindent

\section{Upper bounds with time reversal}\label{sec:upper_bounds_with_time_reversal}

In \Cref{sec:two_point_lb_proof,sec:beyond_average}, we have proven lower bounds for quantum algorithms without time reversal. Now, we demonstrate the power of having time reversal for each of the problems considered so far. Specifically, in \Cref{sec:ham_cert_w_time_reversal}, we prove tight bounds for Hamiltonian certification in all normalized Schatten-$p$ norms $p \in [2,\infty]$, while in \Cref{sec:inexact_search_ub} we prove that, for search with coherent errors, a quadratic quantum advantage can be restored. As it is already well-known that $O(1/\eps)$ queries suffice for performing amplitude estimation to precision $\eps$, we omit further discussion of this task; of course, there is a separation here as well.

\subsection{Hamiltonian certification with time reversal}\label{sec:ham_cert_w_time_reversal}

We begin by proving upper bounds for Hamiltonian certification with time reversal, giving an improvement over~\Cref{thm:zero_testing_lb_schatten} for all choices of normalized Schatten $p$-norm with $p \in [2,\infty]$ to demonstrate that time reversal provides an advantage across average- to worst-case measures of error:
\begin{theorem}\label{thm:reversal_zero_testing_ub_schatten}
    Let $p\in[2,\infty]$. There is a quantum algorithm with time reversal for Hamiltonian certification in normalized Schatten $p$-norm (\Cref{problem:ham_cert_schatten}) which takes total evolution time
    \begin{align}\label{eq:reversal_certification_time}
        \Trev = O\bigg(\mathrm{min}\bigg\{\dfrac{1}{\eps^{p/2}},\,
                \dfrac{d^{\,1/2-1/p}}{\eps}\bigg\}\bigg).
    \end{align}
\end{theorem}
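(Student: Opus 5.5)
We reduce to a single primitive: with time reversal, a Loschmidt-echo style estimator estimates the normalized Frobenius distance at the Heisenberg limit. Concretely, for $\Delta \coloneq H - H_0$, we evolve half of a maximally entangled state $\ket{\Phi}\in\C^d\otimes\C^d$ under $e^{-iHt}$ and then under the known reversed target $e^{iH_0 t}$. By a first-order Trotter/interaction-picture argument, the echo unitary $W_t = e^{iH_0 t}e^{-iHt}$ acts approximately as $e^{-i\Delta t}$ when the evolution is split into short slices $e^{iH_0 \delta}e^{-iH\delta}$. Only the unknown $H$ is ever evolved forward; the reversal is used on the known $H_0$. Alternatively, we can invoke the $O(1/\eps)$ certification algorithm of~\cite{Gao_2026} for $p=2$ as a black box.

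The estimator itself is
\begin{align}
1-|\bra{\Phi}W_t\otimes I\ket{\Phi}|^2 = 1-\big|\tfrac{1}{d}\tr W_t\big|^2 \approx t^2\Big(\tfrac{1}{d}\|\Delta\|_2^2-\big(\tfrac1d\tr\Delta\big)^2\Big).
\end{align}
The trace of $\Delta$ vanishes. Hence $t\sim 1/\eta$ distinguishes $\Delta=0$ from $\|\Delta\|_2/\sqrt d\ge\eta$ with constant advantage, and amplitude amplification on the echo, which again uses reversal, boosts this. The result is a Heisenberg-limited normalized Frobenius tester with total time $O(1/\eta)$.

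For the Schatten-$p$ statement, we convert the $p$-norm soundness condition into a Frobenius one. This is where both branches of the minimum arise. The bounds are $\|\Delta\|\le\|H\|+\|H_0\|\le 2$ and $\|\Delta\|_p^p\le \|\Delta\|^{p-2}\|\Delta\|_2^2$, which give
\begin{align}
\frac{1}{d}\|\Delta\|_2^2 \;\ge\; 2^{2-p}\,\frac{1}{d}\|\Delta\|_p^p \;\ge\; 2^{2-p}\eps^p.
\end{align}
Since $p$ is treated as a constant, running the Frobenius tester with $\eta=\Theta(\eps^{p/2})$ costs $O(1/\eps^{p/2})$. For the other branch, H\"older gives $\|\Delta\|_p\le d^{1/p-1/2}\|\Delta\|_2$ for $p\ge2$. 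Therefore
\begin{align}
\tfrac{1}{\sqrt d}\|\Delta\|_2\ge d^{1/p-1/2}\tfrac{1}{d^{1/p}}\|\Delta\|_p\ge d^{1/p-1/2}\eps,
\end{align}
and running the tester with $\eta=d^{1/p-1/2}\eps$ costs $O(d^{1/2-1/p}/\eps)$. We run whichever choice of $\eta$ is larger. Completeness is immediate because $\Delta=0$ in both cases. This yields~\cref{eq:reversal_certification_time}.

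The main obstacle is the Frobenius primitive in the high-precision regime $\eta\ll 1/\sqrt d$ with only $O(1/\eta)$ time. The echo approximation $W_t\approx e^{-i\Delta t}$ must hold for times $t\sim1/\eta$ that far exceed $1/\|\Delta\|$-type scales, and the Trotter error from noncommuting $H,H_0$ must not swamp the signal. I would try first to avoid Trotterization entirely by working directly with $1-|\tr(e^{iH_0t}e^{-iHt})/d|^2$. I would lower bound this by $\Omega(\min\{1,t^2\|\Delta\|_2^2/d\})$ for $t$ up to a constant, and then use amplitude estimation on the echo overlap. That step costs $O(1/\eta)$ repetitions of unit-time echoes, so the total time is $O(1/\eta)$ without needing large $t$. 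If~\cite{Gao_2026} is only stated for $\eps\gtrsim 1/\sqrt d$, this amplitude-estimation route is exactly what extends it to all $\eta$.
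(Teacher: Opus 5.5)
Your final route is essentially the paper's proof: unit-time Bell-state overlap tests whose rejection amplitude is amplified coherently with $W$ and $W^\dagger$ (this is where reversal of $H$ enters), for total time $O(1/\eta)$, plus converting the Schatten-$p$ soundness condition into a normalized Frobenius one. This matches \Cref{alg:reversal_zero_testing} and \Cref{lem:reversal_overlap_test}, which use randomized $\bm{k}$ instead of amplitude estimation. Three points in your write-up need fixing.

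\textbf{The single-echo claim in your second paragraph is false.} You claim that one echo at $t\sim 1/\eta$ already gives constant advantage, but the quadratic expansion only holds for $t\|\Delta\|\lesssim 1$. Take $H_0=0$ and let $\Delta$ have $d\eta^2/2$ eigenvalues equal to $+1$, $d\eta^2/2$ equal to $-1$, and the rest zero. Then $\tr(e^{-i\Delta t})/d=1-\eta^2+\eta^2\cos t$, so the rejection probability is at most $4\eta^2$ for every $t$, and it vanishes at $t=2\pi$. If your claim held, it would give an $O(1/\eta)$ forward-only tester, contradicting \Cref{thm:zero_testing_lb}. The version in your last paragraph, with constant $t$ followed by amplification, is the correct one.

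\textbf{The obstacle in your last paragraph is not real.} The Trotter error $t^2\|[H_0,H]\|/(2r)$, and its accumulation over the $O(M)$ uses of $W,W^\dagger$, can be driven to zero by increasing $r$. This costs nothing in total evolution time, and it is how the paper reduces to $H_0=0$. After the reduction, $W=e^{-i\Delta/2}$ is exact, and $p_{\mathrm{rej}}=\mathbb{E}(1-\cos((\bm{X}-\bm{Y})/2))\ge\|\Delta\|_2^2/(8d)$ follows from an eigenvalue computation.

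Your Trotter-free lower bound on $1-|\tr(e^{iH_0t}e^{-iHt})/d|^2$ is true for small constant $t$, but you only asserted it. It follows from the interaction-picture Dyson series:
\begin{itemize}
\item the first-order term vanishes by tracelessness;
\item the second-order term is real and at most $-\cos(2t)\,t^2\|\Delta\|_2^2/(2d)$;
\item the remainder is $O(t^3\|\Delta\|\,\|\Delta\|_2^2/d)$.
\end{itemize}
This is extra work that the paper's reduction avoids.

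\textbf{The Hölder step is stated in the wrong direction.} The inequality $\|\Delta\|_p\le d^{1/p-1/2}\|\Delta\|_2$ is false for $p>2$ (take $\Delta$ of rank one); Hölder gives the reverse. Your displayed chain actually uses only the monotonicity $\|\Delta\|_p\le\|\Delta\|_2$, which is what the paper uses, so your choice $\eta=d^{1/p-1/2}\eps$ stands.

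Your first branch, which uses $\|\Delta\|\le 2$, is if anything more careful than the paper. The paper applies $\|H\|\le 1$ after the reduction, even though the effective Hamiltonian $H-H_0$ only satisfies $\|H-H_0\|\le 2$.
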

\noindent
We first observe that it suffices to consider the case that the target Hamiltonian is $H_0 = 0$, since we can simulate $\exp(-i(H-H_0) t)$ to arbitrarily small error via Trotterization, without any overhead in evolution time. Indeed, for $r$ Trotter steps, we have~\cite{Childs_2021}
\begin{align}\label{eq:forward_trotter}
    \Big\| \big(e^{iH_0t/r}e^{-iHt/r}\big)^r-e^{-i(H-H_0)t} \Big\|
    \leq\frac{t^2}{2r}\norm{[H_0,H]}\leq\frac{t^2}{r}.
\end{align}
Increasing $r$ makes the simulation arbitrarily accurate without increasing the total evolution time $t$ under $H$.\footnote{Of course, even though the total evolution time is unchanged, this approach has poor gate complexity. \cite{flammia2026autonomoushamiltoniancertificationchangepoint} demonstrates that this is not fundamental, providing an algorithm for Hamiltonian certification in normalized Frobenius norm which only requires product state preparation, forward time evolution under $H$, and single-qubit measurements, at the expense of a $\log(d)$ overhead in evolution time. Because our primary motivation is to understand time evolution, we do not investigate whether these techniques extend to other normalized Schatten norms.} We therefore restrict to certifying the zero Hamiltonian in what follows, a task which we call \emph{zero testing}.

Our zero testing algorithm is based on repeated Bell measurements, whose success probability is coherently amplified using time reversal. Specifically, we introduce a $d$-dimensional ancillary Hilbert space and prepare the Bell state $\ket{\Phi} = \sum_{i=1}^d \ket{ii}/\sqrt{d}$, then apply the unitary $e^{-i H / 2} \otimes I$. Suppose we perform a two-outcome Bell measurement $\{\Pi_\Phi, I - \Pi_\Phi\}$, where $\Pi_\Phi = |\Phi\rangle\hspace{-.07cm}\langle \Phi |$. If $H = 0$, the measurement always returns $\Pi_\Phi$ (accept). If $H \neq 0$, however, then with probability scaling as $\|H\|_2^2/d$ we measure $I - \Pi_\Phi$ (reject). We can use time reversal to coherently amplify this success probability~\cite{BHMT00AmplitudeEstimation}. We give a complete description of the approach in~\Cref{alg:reversal_zero_testing}. 
To prove its correctness, we first describe how to use time reversal to amplify the rejection probability of a single Bell measurement.

\begin{samepage}
\SetKwInOut{Input}{Input}
\begin{algorithm}[ht!]
\caption{Zero testing with time reversal}\label{alg:reversal_zero_testing}
\LinesNumbered
\Input{Black box access to $U(t)=\exp(-iHt)$ for $t\in\R$, $\tr(H)=0$, $\|H\| \le 1$;\\
Parameters $\eps\in(0,1)$, $p\in[2,\infty]$.}
Set $M\gets\left\lceil2\min\{\eps^{-p/2},d^{1/2-1/p}/\eps\}\right\rceil$\;
Set $W\gets e^{-i H/2} \otimes I$, $R\gets2\Pi_\Phi-I$ for $\Pi_\Phi = |\Phi\rangle\hspace{-.07cm}\langle \Phi |$\;
\For{$N=1,\ldots,20$}{
    Sample $\bm{k}\sim\mathrm{Unif}(\{0,\ldots,M-1\})$\;
    Apply $W$ to the Bell state $\ket{\Phi}$\;
    Apply $W R W^\dagger R$ $\bm{k}$ times\;
    Measure $\{\Pi_\Phi,I-\Pi_\Phi\}$\;
    \If{$I-\Pi_\Phi$}{
        \Return reject\;
    }
}
\Return accept.
\end{algorithm}
\end{samepage}

\begin{lemma}[Bell measurement amplification]\label{lem:reversal_overlap_test}
    Let $H \in \C^{d \times d}$ with $\tr(H)=0$ and $\|H\|\le1$. Define $\Pi_\Phi=\ketbra{\Phi}{\Phi}$, $R=2\Pi_\Phi-I$, and $W=e^{-iH/2}\otimes I$. Draw $\bm{k} \sim \mathrm{Unif}(\{0,\ldots,M-1\})$ for integer $M \ge 1$ and perform a Bell measurement $\{\Pi_\Phi,I-\Pi_\Phi\}$ of the state $(WRW^\dagger R)^{\bm{k}}W\ket{\Phi}$. If $M\|H\|_2/\sqrt{d}\ge2$, the probability $p_{\mathrm{rej}}(M)$ of measuring $I - \Pi_\Phi$ is at least $1/4$.
\end{lemma}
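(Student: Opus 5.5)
The plan is to recognise $WRW^\dagger R$ as a Grover-type product of two reflections, reduce the dynamics to a two-dimensional rotation, apply the standard averaging bound from amplitude estimation~\cite{BHMT00AmplitudeEstimation}, and then lower bound the rotation angle in terms of $\|H\|_2/\sqrt{d}$.

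\textbf{Reduction to a rotation.} Set $\ket{w}\coloneq W\ket{\Phi}$. Then $WRW^\dagger = 2\ketbra{w}{w}-I$, so $G\coloneq WRW^\dagger R$ is the product of the reflections about $\ket{w}$ and about $\ket{\Phi}$. Multiplying $\ket{w}$ by a global phase changes nothing: neither $G$ nor the measurement statistics depend on it. So we may assume $\langle\Phi|w\rangle=\cos\theta\ge 0$ with $\theta\in[0,\pi/2]$. On the real two-dimensional span of $\ket{\Phi}$ and $\ket{w}$, $G$ then acts as a rotation by $2\theta$. As in the standard Grover analysis,
\begin{align}
G^k\ket{w} = \cos((2k+1)\theta)\ket{\Phi}+\sin((2k+1)\theta)\ket{\Phi^\perp},
\end{align}
so the rejection probability for a fixed $k$ is $\sin^2((2k+1)\theta)$. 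Averaging over $\bm{k}$ and using the trigonometric identity of~\cite{BHMT00AmplitudeEstimation},
\begin{align}
p_{\mathrm{rej}}(M)=\frac{1}{M}\sum_{k=0}^{M-1}\sin^2((2k+1)\theta)=\frac12-\frac{\sin(4M\theta)}{4M\sin(2\theta)}.
\end{align}
Hence $p_{\mathrm{rej}}(M)\ge 1/4$ whenever $0<\theta<\pi/2$ and $M\sin(2\theta)\ge 1$.

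\textbf{Bounding the angle.} It remains to bound $\sin\theta$ from both sides. Write $\lambda_1,\dots,\lambda_d\in[-1,1]$ for the eigenvalues of $H$. Then $\cos\theta=\big|\tfrac1d\sum_j e^{-i\lambda_j/2}\big|$, and expanding the modulus squared gives
\begin{align}
\sin^2\theta = 1-\cos^2\theta=\frac{2}{d^2}\sum_{j,l}\sin^2\!\Big(\frac{\lambda_j-\lambda_l}{4}\Big).
\end{align}
Since $\tr H=0$, we have $\sum_{j,l}(\lambda_j-\lambda_l)^2=2d\|H\|_2^2$. The arguments satisfy $|(\lambda_j-\lambda_l)/4|\le 1/2$, and on this range $\sin x\ge 0.95\,x$, while $\sin x\le x$ always. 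Together these give
\begin{align}
0.22\,\frac{\|H\|_2^2}{d}\;\le\;\sin^2\theta\;\le\;\frac{\|H\|_2^2}{4d}\;\le\;\frac14 .
\end{align}
The upper bound gives $\theta\le\pi/6$, so $\cos\theta\ge\sqrt3/2$. The hypothesis $M\|H\|_2/\sqrt d\ge 2$ forces $H\ne0$, so $\theta>0$. Therefore
\begin{align}
M\sin 2\theta = 2M\sin\theta\cos\theta \ge 2\cdot 0.46\cdot\tfrac{\sqrt3}{2}\cdot M\frac{\|H\|_2}{\sqrt d}\ge 1.6\ge 1,
\end{align}
and the averaged bound above yields $p_{\mathrm{rej}}(M)\ge 1/4$.

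\textbf{Main obstacle.} The only delicate step is the two-sided bound on $\theta$. The lower bound must scale linearly in $\|H\|_2/\sqrt d$, which uses tracelessness to turn the pairwise sum into $\|H\|_2^2$ and the small-angle estimate for $\sin$. The upper bound keeps $\theta$ well below $\pi/2$, so that $\sin 2\theta$ is comparable to $\sin\theta$ and does not degenerate; this uses $\|H\|\le 1$.
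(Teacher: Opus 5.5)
Your proof is correct and follows essentially the same route as the paper: you identify $WRW^\dagger R$ as a rotation by $2\theta$, average $\sin^2((2k+1)\theta)$ with the standard identity, and use tracelessness plus a small-angle estimate to get $\sin^2\theta=\Theta(\|H\|_2^2/d)$ with $\sin^2\theta\le 1/4$. The paper phrases this last step via $z^2/4\le 1-\cos z\le z^2/2$, getting the constant $1/8$ where you get $0.22$; the only blemish is that with your rounded constant $0.46$ the final bound is $M\sin 2\theta\ge 1.59$ rather than $1.6$, which of course still exceeds $1$.
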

\begin{proof}
    Let $\bm{X},\bm{Y}$ be independent uniformly random eigenvalues of $H$. Prior to amplification, the state $W \ket{\Phi}$ has Bell measurement rejection probability $p_{\mathrm{rej}}(1)=\E ( 1 - \cos((\bm{X} - \bm{Y})/2) )$. For $|z| \le 1$, we have $z^2/4 \le 1-\cos(z) \le z^2/2$; since $|\bm{X} - \bm{Y}|/2 \le 1$, we have
    \begin{align}\label{eq:reversal_bell_probability}
        \frac{\|H\|_2^2}{8d}\le p_{\mathrm{rej}}(1) \le\frac{\|H\|_2^2}{4d}.
    \end{align}
    We may write $W \ket{\Phi} = \cos(\theta) |\psi_{\Phi}\rangle + \sin(\theta)  |\psi_{\Phi^\perp}\rangle$ for $\Pi_\Phi |\psi_{\Phi}\rangle = |\psi_{\Phi}\rangle$ and $\Pi_\Phi |\psi_{\Phi^\perp}\rangle=0$. Then $p_{\mathrm{rej}}(1) = \sin^2(\theta)$; since $p_{\mathrm{rej}}(1) \le \|H\|_2^2/(4d) \le 1/4$ we may assume without loss of generality that $\theta \in (0,\pi/6]$. $W R W^\dagger R$ is a rotation in the two-dimensional subspace spanned by $|\psi_{\Phi}\rangle$ and $|\psi_{\Phi^\perp}\rangle$ by angle $2 \theta$. Therefore, the rejection probability after $k$ steps is $\sin^2((2k+1) \theta)$. Averaging over uniform $\bm{k}$ we obtain
    \begin{align}
        p_{\mathrm{rej}}(M) = \frac{1}{M} \sum_{k=0}^{M-1} \sin^2((2k+1) \theta) = \frac{1}{2}-\frac{\sin(4M\theta)}{4M\sin(2\theta)}.
    \end{align}
    Finally, $M \sin(2\theta) = 2M \sqrt{p_{\mathrm{rej}}(1)(1-p_{\mathrm{rej}}(1))} \ge M \sqrt{2 p_{\mathrm{rej}}(1)} \ge M \| H \|_2/(2 \sqrt{d}) \ge 1$, where the first inequality is from $p_{\mathrm{rej}}(1) \le 1/4$, the second is~\cref{eq:reversal_bell_probability}, and the third is by assumption. 
\end{proof}

\begin{proof}[Proof of \Cref{thm:reversal_zero_testing_ub_schatten}]
    We prove that~\Cref{alg:reversal_zero_testing} solves Hamiltonian zero testing with probability greater than $99/100$;~\Cref{thm:reversal_zero_testing_ub_schatten} follows by our general Trotterization argument. If $H =0$, then~\Cref{alg:reversal_zero_testing} accepts with certainty, so assume $\|H\|_p/d^{1/p}\ge\eps$. 
    
    For $p \in [2,\infty)$, we have $\|H \|_p \le \|H\|_2$ and, because $\|H \| \le 1$, $\|H\|_p^p \le \|H\|_2^2$. Hence, we have
    \begin{align}
        \frac{\sqrt{d}}{\|H\|_2} \le \min\left\{ \frac{\sqrt{d}}{ \| H \|_p^{p/2}}, \frac{\sqrt{d}}{\|H\|_p} \right\} \le \min\left\{\frac{1}{\eps^{p/2}}, \frac{d^{1/2-1/p}}{\eps} \right\}. 
    \end{align}
    This inequality also holds for $p = \infty$, since $\|H \|_2 \ge \| H \| \ge \eps$. By~\Cref{lem:reversal_overlap_test}, each Bell measurement rejects with probability at least 1/4; therefore, the overall failure probability is at most $(3/4)^{20} < 1/100$, as desired. Finally, the algorithm takes $O(M)$ total evolution time, as desired.
\end{proof}
For completeness, we show that the upper bounds in~\Cref{thm:reversal_zero_testing_ub_schatten} are tight: 
\begin{theorem}\label{thm:reversal_zero_testing_lb_schatten}
    Let $p\in[2,\infty]$. Any quantum algorithm with time reversal for Hamiltonian certification in normalized Schatten $p$-norm (\Cref{problem:ham_cert_schatten}) requires a total evolution time
    \begin{align}
        \Trev = \Omega\bigg(\mathrm{min}\bigg\{\dfrac{1}{\eps^{p/2}},\,
                \dfrac{d^{1/2-1/p}}{\eps}\bigg\}\bigg).
    \end{align}
\end{theorem}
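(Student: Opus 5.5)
The plan is to reduce from a distinguishing problem whose hardness with time reversal is already known, namely unstructured search, or equivalently amplitude estimation with inverse queries. We use the same ensemble idea as in \Cref{lemma:normalized_schatten_reduction}, but with the signal placed in a single hidden block.

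\textbf{Hard instance.} Set $\deff=\min\{d,\lceil 1/\eps_0^p\rceil\}$ and partition $\C^d$ into $\deff$ diagonal blocks $\Pi_1,\dots,\Pi_{\deff}$ of rank $\lfloor d/\deff\rfloor$. The null Hamiltonian is $H_0=0$. The alternatives are $H_j=\lambda(\Pi_j-\tr(\Pi_j)I/d)$, with $\lambda=\Theta(\eps\,\deff^{1/p})\le 1$.

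\textbf{Soundness.} Each $H_j$ satisfies
\[
\|H_j\|_p/d^{1/p}\gtrsim \lambda\,(1/\deff)^{1/p}\approx \eps .
\]
Hence any certification algorithm distinguishes $0$ from every $H_j$.

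\textbf{Hardness of the distinguishing problem.} Time evolution $e^{-iH_jt}$ is, up to a global phase, the phase oracle $e^{-i\lambda t\Pi_j}$ that marks block $j$. The problem is therefore search or detection among $\deff$ blocks, and each query of duration $t$ acts as a fractional phase of angle $\lambda t$.

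For this step I would use the standard continuous-time hybrid argument of BBBV, which remains valid with inverses because it bounds $\|(e^{-iH_jt}-I)\ket{\phi}\|\le \lambda|t|\,\|\Pi_j\ket{\phi}\|$. Summing over queries and averaging over $j$ gives
\[
\sum_j \bigl\|\ket{\psi^{(q)}_{H_j}}-\ket{\psi^{(q)}_0}\bigr\|\le \lambda\sum_k |t_k|\sum_j\|\Pi_j\phi_k\|\le \lambda T\sqrt{\deff}
\]
by Cauchy--Schwarz. Successful distinguishing requires each term to be $\Omega(1)$, so $\lambda T\sqrt{\deff}\gtrsim \deff$, that is, $T\gtrsim\sqrt{\deff}/\lambda$.

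\textbf{Substituting parameters.} Plugging in $\lambda\approx\eps\,\deff^{1/p}$ gives $T\gtrsim \deff^{1/2-1/p}/\eps$.
\begin{itemize}
\item If $\deff=\lceil\eps^{-p}\rceil$, this is $\eps^{-p/2+1}/\eps=\eps^{-p/2}$.
\item If $\deff=d$, this is $d^{1/2-1/p}/\eps$.
\end{itemize}
Together these give the claimed minimum.

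\textbf{Expected obstacle.} The main point to check is the constraint $\lambda\le 1$. It holds because $\deff^{1/p}\lesssim 1/\eps$.

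For $p=\infty$ the argument degenerates to $\deff=d$, $\lambda=\eps$ and $T\gtrsim\sqrt d/\eps$, which is consistent. Also for $p=\infty$, the case $\deff=\lceil\eps^{-p}\rceil$ must be read via the convention $x^\infty$ from the introduction, so that the minimum is always attained by $d$. One should also confirm that the centering term does not spoil the Schatten-norm estimate, since it has norm $\lambda/\deff$, which is negligible.
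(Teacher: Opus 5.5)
Your proposal is correct in substance and is essentially the paper's proof: a BBBV-type hybrid bound (valid for either sign of the $t_k$) against a hidden perturbation of size $\approx\eps\rho^{-1/p}$ on a fraction $\rho\approx\max\{\eps^p,1/d\}$ of the spectrum. The paper differs only in its ensemble and averaging step: it perturbs a random $\ell$-subset of $(+1,-1)$ eigenvalue pairs of a nonzero diagonal $H_0$ to $(1-\eta,-1+\eta)$ and averages via Jensen on $\E_{\bm{S}}(H_0-H_{\bm{S}})^2=\eta^2(\ell/m)I$, where you use a single hidden block with $H_0=0$ and Cauchy--Schwarz.

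The one step to repair is your check of $\|H_j\|\le 1$, which is off by a constant. With $\deff=\lceil\eps^{-p}\rceil$ the block fraction $\lfloor d/\deff\rfloor/d$ can fall a constant factor below $\eps^p$, so soundness can force $\|H_j\|>1$. You should shrink $\deff$ by a constant, e.g.\ $\deff=\min\{d,\lfloor\eps^{-p}/4\rfloor\}$, and handle $\eps^p=\Omega(1)$ separately, since there the claimed bound is only $O(1)$; this changes only constants. The paper's ensemble avoids this issue: perturbing $\pm1$ eigenvalues allows perturbations of size up to $\eta<2$ while keeping $\|H\|\le 1$.
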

\noindent
To prove~\Cref{thm:reversal_zero_testing_lb_schatten}, we let the target Hamiltonian $H_0$ be a diagonal Hamiltonian with an equal number of $\pm 1$ eigenvalues. We construct $\bm{H}$ by randomly selecting $\ell$ pairs of $\pm 1$ eigenvalues and perturbing them by $\pm \eta$. We choose $\ell$ and $\eta$ such that $\| \bm{H}-H_0 \|_p/d^{1/p} \ge \eps$ with certainty. Finally, we use a hybrid argument to bound the trace distance between the state of a tester using random $\bm{H}$ and the state of a tester using $H_0$ to prove the lower bound.

\begin{proof}
    Assume $d$ is even for simplicity; the case that $d$ is odd can be handled similarly. Let $m = d/2$ and let $H_0$ be a fixed diagonal Hamiltonian with an equal number of $+1$ and $-1$ eigenvalues, which we group in pairs and identify with $[m]$. Note that $\tr(H_0) = 0$ and $\|H_0\| \le 1$, by construction. For $\ell \in [m]$ and $\eta \in (0,2]$, choose a random $\ell$-element subset $\bm{S} \subseteq [m]$ and define $H_{\bm{S}}$ as the Hamiltonian with eigenvalue pairs $(1,-1)$ in $\bm{S}$ replaced by $(1-\eta, -1+\eta)$. $H_{\bm{S}}$ continues to satisfy $\tr(H_{\bm{S}})=0$ and $\|H_{\bm{S}}\| \le 1$, by construction. Note that $\| H_0 - H_{\bm{S}}\|_p /d^{1/p}  = \eta (\ell/m)^{1/p}$ with certainty. Furthermore,
    \begin{align}
        \E_{\bm{S}} (H_0 - H_{\bm{S}})^2 = \frac{\eta^2 \ell}{m} I.\label{eq:expected_frob}
    \end{align}
    We now consider a general quantum algorithm for Hamiltonian certification with time reversal. As explained in \Cref{sec:lower_bounds_without_time_reversal}, we can write the final state of the algorithm with unknown Hamiltonian $H$ as
    \begin{align}
        \ket{\psi_H} = V_q (e^{-i H t_q} \otimes I) V_{q-1} (e^{-i H t_{q-1}} \otimes I) \cdots  V_1 (e^{-i H t_1} \otimes I) V_{0} \ket{\psi},
    \end{align} 
    where the time durations $t_1, \dots, t_q$ can be positive or negative and the unitaries $V_j$ are fixed by the algorithm. The trace distance between the final states of the algorithm for $H = H_0$ and $H = H_{\bm{S}}$ is
    \begin{align}
        \mathrm{d}_{\mathrm{tr}}\big( | \psi_{H_0} \rangle \hspace{-.08cm} \langle  \psi_{H_0} | , \E_{\bm{S}} | \psi_{H_{\bm{S}}} \rangle \hspace{-.08cm} \langle  \psi_{H_{\bm{S}}} | \big) \le \E_{\bm{S}} \sum_{k=1}^q \big\| (e^{-i H_0 t_k} \otimes I) \ket{\psi_k} - (e^{-i H_{\bm{S}} t_k} \otimes I) \ket{\psi_k}\big\|_2,
    \end{align}
    by the triangle inequality and the relationship between trace distance and Euclidean distance for pure states, where the states $\ket{\psi_k}$ are independent of the random subset $\bm{S}$. Since $H_0$ and $H_{\bm{S}}$ commute, we have
    \begin{align}
        \E_{\bm{S}} \sum_{k=1}^q \big\| (e^{-i H_0 t_k} \otimes I) \ket{\psi_k} - (e^{-i H_{\bm{S}} t_k} \otimes I) \ket{\psi_k}\big\|_2 &= \E_{\bm{S}} \sum_{k=1}^q \big( 2 \big( 1 - \mathrm{Re}\big\{\langle \psi_k| e^{i (H_0 - H_{\bm{S}}) t_k} | \psi_k \rangle\big\} \big) \big)^{\frac{1}{2}}
        \\
        &\le \E_{\bm{S}} \sum_{k=1}^q \big( \langle \psi_k | (H_0 - H_{\bm{S}} )^2 t_k^2 |\psi_k \rangle \big)^{\frac{1}{2}} \label{eq:hybrid_one}
        \\
        &\le \sum_{k=1}^q \big( \langle \psi_k |  \E_{\bm{S}} (H_0 - H_{\bm{S}} )^2 t_k^2 |\psi_k \rangle \big)^{\frac{1}{2}}\label{eq:hybrid_two}
        \\
        &= \eta \sqrt{\frac{\ell}{m}} \sum_{k=1}^q |t_k| ,\label{eq:hybrid_three}
    \end{align}
    where~\cref{eq:hybrid_one} follows from $\cos(z) \ge 1-z^2/2$,~\cref{eq:hybrid_two} is Jensen's inequality, and the final line uses~\cref{eq:expected_frob}. Choose $\eta = 2 \eps (m/\ell)^{1/p}$ and let $\ell = 1 + \floor{m \eps^p}$. First, we have $\| H_0 - H_S\|_p /d^{1/p}  = 2 \eps > \eps$, so by~\cref{eq:hybrid_three} any algorithm for Hamiltonian certification must satisfy $T = \Omega(\sqrt{m/\ell}/\eta)$, where $T = \sum_{k=1}^q |t_k|$ denotes the total absolute time. Since $\ell \le 2 \max\{1, m \eps^p\}$, we have
    \begin{align}
        T = \Omega\bigg( \frac{1}{\eps} \bigg( \frac{m}{\ell} \bigg)^{1/2-1/p} \hspace{.04cm}\bigg) = \Omega \bigg(\mathrm{min}\bigg\{\dfrac{1}{\eps^{p/2}},\,
                \dfrac{d^{1/2-1/p}}{\eps}\bigg\}\bigg),
    \end{align}
    as desired.
\end{proof}

\subsection{Search with coherent errors and inverses}\label{sec:inexact_search_ub}

We now give an algorithm to solve search with coherent errors, \Cref{problem:coherent_error}, with the usual quadratic quantum speedup, given access to both forward and inverse queries. This algorithm makes use of an ancilla qubit and the quantum signal processing (QSP) toolkit~\cite{lowMethodologyResonantEquiangular2016,LC19hamiltonian,Gily_n_2019,martynGrandUnificationQuantum2021a}.

Our goal throughout the remainder of this section will be to construct an $\eta$-accurate approximation to the standard reflection operator $O_f=\sum_{x=1}^d(-1)^{f(x)}\ketbra{x}{x}$. Specifically, we wish to implement a unitary operator $\widetilde{R}$ acting on $\C^{2}\otimes \C^{d}$ such that $\norm{\widetilde{R}-I\otimes O_f}\leq \eta$ using as few queries to $O_f^\theta$ and $(O_f^\theta)^\dag$ as possible. We may then solve the search problem by running the algorithm of Ref.~\cite{BBHT98Grover} for search with an unknown number of marked items, using $\widetilde{R}$ in place of the ideal oracle $O_f$.

Recall that this algorithm repeatedly applies a randomly chosen number of Grover iterations, drawn from a range that grows exponentially from one round to the next, then measures and checks whether the outcome is marked.
When $M\geq 1$, it finds a marked element using $O(\sqrt{d/M})$ queries in expectation. Stopping after a budget of $O(\sqrt{d})$ queries and outputting $\mathrm{NONE}$ if no marked element has been found yields, by Markov's inequality, an algorithm that solves \Cref{problem:coherent_error} with the ideal oracle $O_f$ for every $M=\Theta(1)$ (including $M=0$) using at most $q=O(\sqrt{d})$ queries. Taking $\eta=O(1/q)$ with a sufficiently small constant therefore gives the desired result. In summary, this shows:
\begin{proposition}\label{prop:search_w_coherent_error}
  There exists an algorithm which solves search with coherent errors, \Cref{problem:coherent_error}, for any fixed $0\leq \Delta < \pi/2$ and $M=\Theta(1)$ using $O\big(\sqrt{d}\log(d)\big)$ queries to $O_f^\theta$ and $(O_f^\theta)^\dag$.
\end{proposition}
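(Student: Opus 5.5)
The plan follows the reduction sketched just before the proposition: build an approximate ideal oracle $\widetilde{R}$ with $\|\widetilde{R}-I\otimes O_f\|\le\eta$ from $O(\log(1/\eta))$ queries to $O_f^\theta$ and $(O_f^\theta)^\dagger$, and then run the algorithm of \cite{BBHT98Grover} with $\widetilde{R}$ in place of $O_f$.

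\textbf{Step 1: block-encode the cosines.} Since $O_f^\theta$ is diagonal, everything decouples over basis states $\ket{x}$. On each $\ket{x}$ the oracle applies the phase $e^{i\phi_x}$, where $\phi_x=0$ if $f(x)=0$ and $\phi_x=\theta_x\in[\pi-\Delta,\pi+\Delta]$ if $f(x)=1$. Controlled on an ancilla qubit, apply $O_f^\theta$ on the $\ket{1}$ branch and $(O_f^\theta)^\dagger$ on the $\ket{0}$ branch, sandwiched between Hadamards. This yields a block encoding of the diagonal operator $\sum_x\cos(\phi_x)\ketbra{x}{x}$ at the cost of one forward and one inverse query. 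The inverse is used essentially here: it is what makes the encoded quantity real and equal to $\cos\phi_x$, with no $\sin\phi_x$ dependence.

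\textbf{Step 2: the resulting gap.} Unmarked elements have $\cos\phi_x=1$. Marked elements have $\cos\phi_x\le\cos(\pi-\Delta)=-\cos\Delta<0$, which is bounded away from zero because $\Delta<\pi/2$ is fixed. So the spectrum of the encoded operator lies in $\{1\}\cup[-1,-\cos\Delta]$.

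\textbf{Step 3: polynomial approximation of the sign function.} Use QSP/QSVT \cite{Gily_n_2019} with an odd real polynomial $P$ of degree $O(\log(1/\eta)/\cos\Delta)$ satisfying $|P|\le1$ on $[-1,1]$ and $|P(a)-\operatorname{sgn}(a)|\le\eta$ for $|a|\ge\cos\Delta$. Applying $P$ gives a block encoding of an operator that is $\eta$-close to $O_f$. To turn this into a genuine unitary on $\C^2\otimes\C^d$ that is $\eta$-close to $I\otimes O_f$, I would use the Hermitian QSVT variant whose output unitary is $e^{i\,\mathrm{arccos}}$-style, or equivalently pick the QSP phases so that the $2\times2$ block on each $\ket{x}$ is near $\pm I$. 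The $2\times2$ QSP unitary on each $\ket{x}$ is $\begin{pmatrix}P&i Q\sqrt{1-a^2}\\ \cdot&\cdot\end{pmatrix}$, and $|P|\approx1$ forces the off-diagonal entries to be $O(\sqrt{\eta})$. Rescaling $\eta\to\eta^2$ costs only a constant factor in the degree, so this is harmless.

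\textbf{Step 4: error accumulation and query count.} Take $\eta=c/q$ with $q=O(\sqrt d)$. The $q$ uses of $\widetilde{R}$ then perturb the final state by at most $q\eta\le c$, so each failure probability changes by at most a small constant, which preserves the required $1/100$ bounds after constant-factor repetition. The total query count is $O(\sqrt d\log d)$ for fixed $\Delta$.

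\textbf{Main obstacle.} The delicate point is Step 3: getting an honest unitary close to $I\otimes O_f$ in operator norm, including the ancilla, rather than merely a block encoding. I would handle it with the observation above that $|P|\approx1$ forces the off-diagonal entries to be $O(\sqrt\eta)$, together with the $\eta\to\eta^2$ rescaling.
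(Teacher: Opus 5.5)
Steps 2--4 and the core of Step 1 match the paper. Your Hadamard-sandwiched operation $|0\rangle\langle 0|\otimes (O_f^\theta)^\dagger + |1\rangle\langle 1|\otimes O_f^\theta$ is exactly the paper's qubitized signal operator $W_X=\sum_x e^{-i\varphi_x X}\otimes|x\rangle\langle x|$. The rest also coincides: an odd sign approximation with gap $\cos\Delta$ and error $\eta^2$; the fact that an $\mathrm{SU}(2)$ block whose diagonal entry is $\eta^2$-close to $\pm1$ is $O(\eta)$-close to $\pm I$; and BBHT with $\eta=O(1/\sqrt d)$. The paper phrases that closeness step via the eigenvalues $e^{\pm i\alpha}$ with $\cos\alpha=g(a_x)$, giving $\sqrt{2(1-s_xg(a_x))}\le\eta$, which is equivalent to your entrywise bound.

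The one step you take for granted is controlled access. \Cref{problem:coherent_error} grants only black-box access to $O_f^\theta$ and its inverse. A controlled version of an unknown unitary is not implementable from the unitary alone in general. Applying $O_f^\theta$ directly to the query register multiplies each $|x\rangle$ by a phase independent of the ancilla, so Step 1 as written presupposes a primitive outside the model. This matters here: the proposition is meant to contrast with the paper's lower-bound model, which allows only uncontrolled queries interleaved with fixed unitaries.

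The paper closes this with Kitaev's trick. Pick $x_0$ uniformly at random; it is unmarked with probability $1-M/d$ since $M=\Theta(1)$. Conditioned on the control qubit being $|0\rangle$, swap the query register with an auxiliary register holding $|x_0\rangle$, apply $O_f^\theta$, then undo the controlled swap. Since $O_f^\theta|x_0\rangle=|x_0\rangle$, this realizes controlled-$O_f^\theta$ exactly with one query, and likewise for $(O_f^\theta)^\dagger$. With that added, your argument goes through at the stated $O(\sqrt d\log d)$ cost.
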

\begin{proof}
  Recall that the oracle with coherent errors, which we call the inexact oracle, is defined as:
  \begin{align}
        O^{\theta}_f = \sum_{x=1}^d e^{i\theta_x f(x)}\ketbra{x}{x} 
    \end{align}
  for some unknown $f\colon [d]\to \{0,1\}$ and $\theta_x\in [\pi-\Delta,\pi+\Delta]$. First, we observe that we can without loss of generality assume controlled versions of the inexact oracles $O_f^\theta$ and $(O_f^\theta)^\dag$, controlled on an ancilla qubit. This is by a standard trick originally due to Kitaev~\cite{kitaev1995quantummeasurementsabelianstabilizer} and also appearing in~\cite{Sheridan2009}. For an unmarked $x_0\in [d]$\footnote{In more detail, fix an $x_0\in [d]$ such that $f(x_0)=0$. Note that this is without loss of generality, since we can always pick $x_0$ at random; this fails with arbitrarily small probability for sufficiently large $d$. Overall, this step adds just a constant number of queries to the overall cost.}, we run the circuit in \Cref{fig:control_op}, which one may verify has the desired action
  \begin{align}
    \ket{0}\ket{x}&\mapsto \ket{0}\ket{x}\\
    \ket{1}\ket{x}&\mapsto e^{i\theta_xf(x)}\ket{1}\ket{x}.
  \end{align}
  ignoring the register prepared in the state $\ket{x_0}$. We can implement controlled access to $(O_f^\theta)^\dag$ in a similar manner.

  \begin{figure}
    \centering
    \begin{subfigure}[b]{0.4\textwidth}
    \centering
        \includegraphics[width=0.7\linewidth]{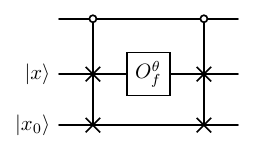}
        \caption{\label{fig:control_op}}
    \end{subfigure}
    \hfill
    \begin{subfigure}[b]{0.4\textwidth}
    \centering
        \includegraphics[width=\linewidth]{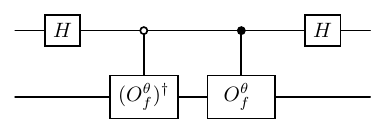}
        \caption{\label{fig:signal_op}}
    \end{subfigure}
    \caption{\label{fig:circuits} (a) Circuit implementing controlled-$O_f^\theta$ operation. The first register is an ancilla qubit. (b) Circuit implementing the signal processing operator $W$. Here, $H$ is the Hadamard operation.}
  \end{figure}

  Using the controlled version of the unitaries $O_f^\theta$ and $(O_f^\theta)^\dag$, we implement the unitary given by the circuit in \Cref{fig:signal_op}, which is the qubitized signal processing operator
  \begin{align}
    W_X &\coloneq \sum_{x=1}^d e^{-i\varphi_x X}\otimes \ketbra{x}{x},\quad\textnormal{where}\ \varphi_x\coloneq \theta_xf(x)\ \textnormal{for all}\ x\in [d].
  \end{align}
  Here and below, let $X,Z$ denote the Pauli-X and Pauli-Z operators, respectively. Now, for any $\vec{\phi}\in [0,2\pi)^{D+1}$  let
  \begin{align}
    U_{\vec{\phi}}\coloneq (e^{i\phi_0 Z}\otimes I)\prod_{k=1}^D W_X (e^{i\phi_k Z}\otimes I).
  \end{align}
  For $a\in[-1,1]$, let $B(a)\coloneq e^{i\phi_0 Z}\prod_{k=1}^D W(a)e^{i\phi_k Z}$, where $W(a)$ is the signal rotation operator in~\cite[Eq.~(1)]{martynGrandUnificationQuantum2021a}. Then
  \begin{align}
    U_{\vec{\phi}} = \sum_{x=1}^d Z^{b_x}B(a_x)Z^{b_x}\otimes \ketbra{x}{x},
  \end{align}
  where $a_x\coloneq \cos\varphi_x$, and $b_x\coloneq 1$ if $\sin\varphi_x>0$ and $b_x\coloneq 0$ otherwise, since $e^{-i\varphi_x X}=Z^{b_x}W(a_x)Z^{b_x}$ and $Z$ commutes with $e^{i\phi_k Z}$.
  
  By Theorem~10 of~\cite{martynGrandUnificationQuantum2021a} (see also~\cite{Gily_n_2019}), for any real polynomial $g$ of degree at most $D$ with parity $D\bmod 2$ such that $|g(a)|\leq 1$ for all $a\in[-1,1]$, we may choose $\vec{\phi}$ such that $\bra{+}B(a)\ket{+}=g(a)$ for all $a\in[-1,1]$. 
  
  Note that $B(a)$ has eigenvalues $e^{\pm i\alpha}$, where $\cos\alpha=\mathrm{Re}\bra{+}B(a)\ket{+}=g(a)$. We take $g$ to be the standard odd approximation to the sign function given in~\cite[Sec.~III]{martynGrandUnificationQuantum2021a}: for any $\Delta'>0$ and $\eps\in(0,\sqrt{2/(e\pi)})$ there is an odd real polynomial $g$ of degree $O(\tfrac{1}{\Delta'}\log(\tfrac{1}{\eps}))$ such that $|g(a)|\leq 1$ for all $a\in[-1,1]$, and $|g(a)-\textnormal{sign}(a)|\leq\eps$ for all $a\in[-1,1]$ such that $|a|\geq\Delta'/2$. Picking $\Delta'=2\cos\Delta$, $\eps=\eta^2/2$ and $D$ odd, we have $|g(a_x)-1|\le\eta^2/2$ if $x$ is unmarked, and $|g(a_x)+1|\le\eta^2/2$ if $x$ is marked, where $D=O(\log(1/\eta)/\cos\Delta)$.

  Now setting $\widetilde{R}\coloneq U_{\vec{\phi}}$ and letting $s_x\coloneq (-1)^{f(x)}$, we compute
  \begin{align}
    \norm{\widetilde{R}-I\otimes O_f} &= \max_{x\in [d]} \norm*{B(a_x) - s_x I} = \max_{x\in [d]}\sqrt{2(1-s_xg(a_x))} \leq \eta
  \end{align}
  by our choice of $g$. Here, the first equality holds because conjugation by $Z^{b_x}$ preserves the norm. Each use of $W$ costs one query to each of $O_f^\theta$ and $(O_f^\theta)^\dag$, so implementing $\widetilde{R}$ costs $2D$ total queries. The truncated algorithm of~\cite{BBHT98Grover} described above uses at most $q=O(\sqrt{d})$ queries to $O_f$, so setting $\eta=O(1/q)$ we can accurately simulate the entire algorithm, using
  \begin{align}
    2Dq = O\Big(\sqrt{d}\frac{\log(\sqrt{d})}{\cos\Delta}\Big) = O\left(\sqrt{d}\log(d)\right)
  \end{align}
  such queries, where we made use of the assumption that $0\leq \Delta < \pi/2$ is some fixed constant.
  \end{proof}
  
\section{Upper bounds without time reversal}

For Hamiltonian certification in normalized Frobenius norm, our main result~\Cref{thm:zero_testing_lb} shows that one cannot attain better than standard quantum limited scaling without time reversal (in the low-precision regime $\eps \gg 1/\sqrt{d}$). This shows that testers based on Bell measurements~\cite{arunachalam2025testinglearningstructuredquantum, kallaugher_locality, sinha2025improvedhamiltonianlearningsparsity} or state certification with random input states~\cite{flammia2026autonomoushamiltoniancertificationchangepoint}, which achieve $O(1/\eps^2)$ scaling, are optimal in the low-precision regime. 

In this section, we extend these results, proving tight upper bounds for Hamiltonian certification in all normalized Schatten $p$-norms for $p \in [2,\infty]$, and in both the high- and low-precision regimes. Together with our matching lower bounds for all such normalized Schatten norms in~\Cref{thm:zero_testing_lb_schatten}, this resolves the time complexity of Hamiltonian certification without time reversal from average- to worst-case. 

\begin{theorem}\label{thm:zero_testing_ub_schatten}
    Let $p \in [2,\infty]$. There is a quantum algorithm without time reversal for Hamiltonian certification in normalized Schatten $p$-norm (\Cref{problem:ham_cert_schatten}) which takes total evolution time
    \begin{align}
    \Tfor = O\bigg(
            \mathrm{min}\bigg\{
                \vphantom{\dfrac{d^{1-1/p}}{\eps^p}}
                \dfrac{1}{\eps^p},\,
                \dfrac{d^{\,1-1/p}}{\eps}
            \bigg\}
        \bigg).
\end{align}
\end{theorem}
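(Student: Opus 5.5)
We will design a forward-only zero-testing algorithm and combine it with the Trotterization reduction of \Cref{sec:ham_cert_w_time_reversal}. That reduction only needs forward evolution under $H$ together with the known $e^{iH_0 t/r}$, which we implement directly and not via queries, so it remains valid without time reversal. The basic primitive is the unamplified Bell test: prepare $\ket{\Phi}$, apply $e^{-iHt}\otimes I$ for a single duration $t\le 1$, and measure $\{\Pi_\Phi, I-\Pi_\Phi\}$. Exactly as in the proof of \Cref{lem:reversal_overlap_test}, for $t\le 1$ the rejection probability is $\E(1-\cos(t(\bm X-\bm Y)))$, which is of order $t^2\|H\|_2^2/d$. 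Repeating $O(d/(t^2\|H\|_2^2))$ times costs total time $O(d/(t\|H\|_2^2))$. Taking $t=\Theta(1)$ and using $\|H\|_p^p\le\|H\|_2^2$ (since $\|H\|\le1$) gives $\|H\|_2^2/d\ge \eps^p$. This yields total time $O(1/\eps^p)$, which handles the low-precision branch.

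The high-precision branch, $\eps^p\lesssim d^{-(p-1)}$, i.e.\ $\eps\lesssim 1/d^{1-1/p}$, requires a different test whose cost scales like $d^{1-1/p}/\eps$. Here $\|H\|_p\ge \eps d^{1/p}$ implies $\|H\|\ge \eps$, and also $\|H\|_2\ge\|H\|\ge \eps d^{1/p}\cdot d^{-1/p}$. A cleaner bound is $\|H\|_2^2\ge \|H\|_p^2 \cdot d^{0}$ for $p\ge 2$ on the relevant range, so $\|H\|_2\ge \eps d^{1/p}$. The plan is to use long evolution, $t\gg1$, to get a Heisenberg-like gain per unit of signal while paying a factor of dimension. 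Evolve the Bell state for time $t$ with $t\|H\|\lesssim 1$ no longer required. For long $t$, the quantity $1-|\langle\Phi|e^{-iHt}\otimes I|\Phi\rangle|^2 = 1-|\tr(e^{-iHt})/d|^2$ becomes constant once $t$ exceeds roughly $\sqrt d/\|H\|_2$ or $1/\|H\|$ (dephasing). More precisely, we choose a random $\bm t\sim\mathrm{Unif}[0,\tau]$ and show that $\E_{\bm t}|\tr(e^{-iH\bm t})/d|^2$ is bounded away from $1$ when $\tau\gtrsim \sqrt{d}/\|H\|_2$. The idea is that, after averaging, the off-diagonal terms $e^{-i(\lambda_a-\lambda_b)t}$ are damped by $\min\{1, 1/(\tau|\lambda_a-\lambda_b|)\}$. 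Using $\tr H=0$, this implies that a constant fraction of the spectral mass has gaps $\gtrsim \|H\|_2/\sqrt d$. Then $O(1)$ repetitions with $\tau=O(\sqrt d/\|H\|_2)\le O(d^{1/2-1/p}/\eps)$ already suffice.

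That bound beats the claimed $d^{1-1/p}/\eps$, which suggests the overlap test is too optimistic for small spectral gaps. Before finalizing the time budget, we must check carefully that the averaged overlap really stays bounded away from $1$; I expect this is the main obstacle. If it fails, the fallback is to spend time $\tau\sim 1/\eps$ per shot, which resolves every eigenvalue of size at least $\eps$. We would then use repeated maximally mixed inputs, i.e.\ measure an eigenvalue-sensitive statistic on random basis states, and argue that the fraction of spectral weight with $|\lambda|\gtrsim\eps$ is at least $\eps^p d\cdot d^{-1}/1$. This would need $O(d^{1-1/p})$ repetitions, giving total time $O(d^{1-1/p}/\eps)$ as stated. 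Concretely, we would apply Markov-type counting to $\sum_a|\lambda_a|^p\ge \eps^p d$ with $|\lambda_a|\le 1$: at least $\eps^p d/2$ eigenvalues exceed $\eps/2^{1/p}$.

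Since \Cref{thm:zero_testing_lb_schatten} gives a matching lower bound for the final combined algorithm, we need not optimize further. The combined algorithm runs whichever of the two branches is cheaper; since $\min\{1/\eps^p, d^{1-1/p}/\eps\}$ is known in advance, the branch can be selected a priori. Completeness is automatic in both branches, because the test accepts with certainty when $H=0$.
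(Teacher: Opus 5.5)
Your Trotter reduction and your low-precision branch are fine. The latter is essentially the paper's $B=1/\eps^p$, $\tau=1$ case: a small-angle bound on the Bell rejection probability combined with $\|H\|_p^p\le\|H\|_2^2$. The gap is the high-precision branch, which the proposal never establishes.

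\textbf{The first attempt is false.} The Bell overlap $|\tr(e^{-iHt})/d|^2=\E\cos((\bm X-\bm Y)t)$ weights uniformly random \emph{pairs} of eigenvalues, not spectral mass. Take $H=\lambda(\ketbra{1}{1}-\ketbra{2}{2})$. Then the rejection probability is at most $8/d$ for every $t$, however $\tau$ is chosen. So no choice of $\tau\gtrsim\sqrt d/\|H\|_2$ makes $O(1)$ shots suffice. As you suspected, such a bound would contradict \Cref{thm:zero_testing_lb_schatten}.

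\textbf{The fallback does not give the stated bound either.} Your Markov counting shows only that a fraction $\eps^p/2$ of eigenvalues exceed $\eps/2^{1/p}$. That would call for $O(1/\eps^p)$ repetitions, not $O(d^{1-1/p})$.
\begin{itemize}
\item The cheaper branch is the second one exactly when $\eps\le d^{-1/p}$, not when $\eps\lesssim d^{-(1-1/p)}$; the two conditions differ for $p>2$.
\item In that regime $\eps^p d\le 1$, so the count guarantees only a single large eigenvalue.
\item Concretely, take the example above with $\lambda=\eps(d/2)^{1/p}$, so that $\|H\|_p/d^{1/p}=\eps$ and $\|H\|\le 1$. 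Each shot detects with probability $O(1/d)$, so $\Omega(d)$ shots are needed.
\item At time $1/\eps$ per shot, the total is $d/\eps$, a factor $d^{1/p}$ too large for every $p<\infty$.
\end{itemize}

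\textbf{The missing idea is the right time scale.} Resolving eigenvalues of size $\eps$ is overkill, because the hard instances have their nontrivial spectrum at scale $\eps d^{1/p}$. The paper sets $B=\min\{d,\eps^{-p}\}$ and $\tau=1/(\eps B^{1/p})$, and draws $\bm t\sim\mathrm{Unif}([0,4\tau])$. The rejection probability is then $1-\E\,\sin(4\tau(\bm X-\bm Y))/(4\tau(\bm X-\bm Y))$, and there are two cases:
\begin{itemize}
\item If $\|H\|\ge 1/\tau$ (which can only happen when $B=d$), tracelessness gives an eigenvalue at distance at least $1/(2\tau)$ from at least half the spectrum. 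Using $1-\sin z/z\ge 1/2$ for $|z|\ge 2$, the rejection probability is at least $1/(4d)$.
\item If $\|H\|\le 1/\tau$, the small-angle bound gives rejection probability at least $\tau^2\E\bm X^2/4\ge\tau^p\E|\bm X|^p/4\ge(\eps\tau)^p/4=1/(4B)$.
\end{itemize}
Then $O(B)$ shots of length $O(\tau)$ cost $O(B^{1-1/p}/\eps)$, which equals the claimed minimum. The randomization over $t$ is what rules out fine-tuned cancellations once $\tau\|H\|>1$; a fixed long $t$ would not do this.
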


By the Trotterization argument in~\Cref{sec:ham_cert_w_time_reversal}, it suffices to consider $H_0=0$. Our algorithm is once again based on Bell sampling. We apply  $e^{-i H t} \otimes I$ and perform the Bell measurement $\{\Pi_\Phi, I - \Pi_\Phi\}$ for $\Pi_\Phi = |\Phi\rangle\hspace{-.07cm}\langle \Phi |$. If $H = 0$, this measurement always returns $\Pi_\Phi$ (accepts). For any $H$, the measurement returns $I - \Pi_\Phi$ (rejects) with probability $1 - |\tr(e^{-iHt})|^2/d^2$. We randomly choose $0 \le t \le \tau$ to avoid fine-tuned cancellations in the spectrum of $H$ and show that, in general, the rejection probability of the Bell measurement is at least half the fraction of eigenvalue pairs with difference at least $\sim 1/\tau$. Therefore, by choosing longer evolution times $\tau$, we can probe smaller eigenvalue gaps in $H$. In particular, we show it suffices to take a single choice of $\tau = O(\max\{1/(\eps d^{1/p}), 1\})$ and repeat the Bell measurement $O(\min\{d,1/\eps^p\})$ times. 

A complete description of our zero testing algorithm is given in~\Cref{alg:forward_zero_testing}. To prove its correctness, we begin by estimating the rejection probability of a single Bell measurement.

\begin{samepage}
\SetKwComment{Comment}{/* }{ */}
\SetKwInOut{Input}{Input}
\SetKwInOut{Output}{Output}
\SetKwRepeat{Repeat}{repeat}{until}
\begin{algorithm}[h!]
\caption{Zero testing without time reversal}\label{alg:forward_zero_testing}
\LinesNumbered
\Input{Black box access to $U(t) = \exp(-i H t)$ for $t \ge 0$, $\tr(H)=0$, $\|H\| \le 1$;\\
Parameters $\eps \in (0,1)$, $p \in [2,\infty].$}
Set $B \gets \min\{d, 1/\eps^p\}$, $\tau \gets 1/(\eps B^{1/p})$\;
\For{$N = 1, \dots, \lceil 20 B \rceil$}
    { 
        Sample $\bm{t} \sim \mathrm{Unif}([0,4\tau])$\;
        Apply $e^{-i H \bm{t}} \otimes I$ to the Bell state $\ket{\Phi}$\;
        Measure $\{\Pi_\Phi, I - \Pi_\Phi\}$ for $\Pi_\Phi = |\Phi\rangle\hspace{-.07cm}\langle \Phi |$\;
        \If{$I - \Pi_\Phi$}
        {
            \Return reject
        }
    }
\Return accept.
\end{algorithm}
\end{samepage}

\begin{lemma}[Bell measurement]\label{lem:forward_overlap_test} Let $H \in \C^{d \times d}$ with $\tr(H)=0$ and $\|H\|\le1$ and let $\tau > 0$. Draw $\bm{t} \sim \mathrm{Unif}([0,4\tau])$ and perform a Bell measurement $\{\Pi_\Phi, I - \Pi_\Phi\}$ of the state $(e^{-i H \bm{t}} \otimes I) \ket{\Phi}$, where $\Pi_\Phi = |\Phi\rangle\hspace{-.07cm}\langle \Phi |$. The probability $p_{\mathrm{rej}}(\tau)$ of measuring $I - \Pi_\Phi$ is at least
\begin{align}\label{eq:forward_overlap_tail}
    p_{\mathrm{rej}}(\tau) \ge \frac{1}{2} \Pr[| \bm{X} - \bm{Y}| \ge 1/(2 \tau)],
\end{align}
where $\bm{X},\bm{Y}$ are independent uniformly random eigenvalues of $H$. Moreover, if $\|H \| \le 1/\tau$, then
\begin{align}\label{eq:forward_overlap_moment} 
    p_{\mathrm{rej}}(\tau) \ge (\tau^2/4) \E \bm{X}^2.
\end{align}
\end{lemma}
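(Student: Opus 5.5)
Write $H=\sum_k \lambda_k\ketbra{k}{k}$. The overlap of $(e^{-iHt}\otimes I)\ket{\Phi}$ with $\ket{\Phi}$ is $\tr(e^{-iHt})/d=\E e^{-i\bm{X}t}$. Hence for fixed $t$ the rejection probability is
\begin{align}
  1-\big|\E e^{-i\bm{X}t}\big|^2=\E\big(1-\cos((\bm{X}-\bm{Y})t)\big),
\end{align}
with $\bm{X},\bm{Y}$ independent uniform eigenvalues. Averaging over $\bm{t}\sim\mathrm{Unif}([0,4\tau])$ and exchanging expectations, I obtain
\begin{align}
  p_{\mathrm{rej}}(\tau)=\E_{\bm{X},\bm{Y}}\,g\big((\bm{X}-\bm{Y})\cdot 4\tau\big),\qquad g(u)\coloneq 1-\frac{\sin u}{u},
\end{align}
with $g(0)=0$. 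Both bounds then reduce to elementary pointwise lower bounds on $g$, because $g\ge 0$ everywhere (since $|\sin u|\le |u|$). So I may discard any event and integrate only over the remaining one.

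For \cref{eq:forward_overlap_tail}, I restrict to the event $|\bm{X}-\bm{Y}|\ge 1/(2\tau)$, where $|u|=4\tau|\bm{X}-\bm{Y}|\ge 2$. On this event $|\sin u/u|\le 1/|u|\le 1/2$, so $g(u)\ge 1/2$. This gives $p_{\mathrm{rej}}\ge \tfrac12\Pr[|\bm{X}-\bm{Y}|\ge 1/(2\tau)]$.

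For \cref{eq:forward_overlap_moment}, the assumption $\|H\|\le 1/\tau$ gives $|\bm{X}-\bm{Y}|\le 2/\tau$, so $|u|\le 8$. I need a quadratic lower bound $g(u)\ge c\,u^2$ valid on $|u|\le 8$. Near zero the Taylor expansion gives $g(u)\approx u^2/6$. On the whole interval, $g(u)/u^2$ is decreasing in $|u|$ until $g$ saturates, so the worst case is the endpoint, where $g(8)/64\approx(1-\sin 8/8)/64\approx 0.0139$. This yields $g(u)\ge u^2/72$ on $|u|\le 8$. Then
\begin{align}
  p_{\mathrm{rej}}\ge \frac{16\tau^2}{72}\,\E(\bm{X}-\bm{Y})^2=\frac{16\tau^2}{72}\cdot 2\,\E\bm{X}^2\ge \frac{\tau^2}{4}\,\E\bm{X}^2,
\end{align}
where the middle equality uses $\tr H=0$, i.e.\ $\E\bm{X}=0$.

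The only delicate step is certifying $g(u)\ge u^2/72$ on $[-8,8]$. I would prove it as follows. On $|u|\le 2$, use $\sin u\le u-u^3/6+u^5/120$, which gives $g(u)\ge u^2/6-u^4/120\ge u^2/10$. On $2\le|u|\le 8$, use $g(u)\ge 1-1/|u|\ge 1/2\ge u^2/128$. These constants give only $\tfrac{16\cdot 2}{128}\tau^2=\tau^2/4$, which is exactly enough.
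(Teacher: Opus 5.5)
Your proposal is correct and matches the paper's proof step for step. Both derive $p_{\mathrm{rej}}(\tau)=\E\bigl[1-\sin(4\tau(\bm{X}-\bm{Y}))/(4\tau(\bm{X}-\bm{Y}))\bigr]$, use $1-\sin u/u\ge 1/2$ for $|u|\ge 2$, and use $1-\sin u/u\ge u^2/128$ on $|u|\le 8$, which together with $\E(\bm{X}-\bm{Y})^2=2\E\bm{X}^2$ gives exactly $\tau^2/4$.

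One correction: your intermediate claim $g(u)\ge u^2/72$ on $[-8,8]$ is false. Indeed $g(8)\approx 0.876<64/72$, so $g(8)/64\approx 0.0137$, not $0.0139$, and the monotonicity heuristic behind it is unjustified. You never need it, though, because your piecewise certification already gives $u^2/128$, and that suffices.
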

\begin{proof}
    For a fixed evolution time $t$, the probability of measuring $\Pi_\Phi$ is
    \begin{align}
        \left|\bra{\Phi}(e^{-iHt}\otimes I)\ket{\Phi}\right|^2
        =\left|\frac{\tr(e^{-iHt})}{d}\right|^2
        =\E\cos((\bm{X}-\bm{Y})t).
    \end{align}
    Averaging over $\bm{t} \sim \mathrm{Unif}([0,4\tau])$ therefore yields
    \begin{align}
        p_{\mathrm{rej}}(\tau) = \E \int_0^1 1 - \cos(4 \tau (\bm{X}-\bm{Y}) u)~\mathrm{d}u = 1 - \E \frac{\sin(4 \tau (\bm{X} - \bm{Y}))}{4 \tau (\bm{X} - \bm{Y})}.
    \end{align}
    $1-\sin(z)/z \ge 0$ for all real $z$ and $1-\sin(z)/z \ge 1/2$ for $|z| \ge 2$; taking the expectation value of the above implies~\cref{eq:forward_overlap_tail}. Finally, for $|z| \le 8$, we have $1-\sin(z)/z \ge z^2/128$; since $\|H\| \le 1/\tau$ implies $|\bm{X} - \bm{Y}| \le 2/\tau$, we have $ p_{\mathrm{rej}}(\tau) \ge (\tau^2/8) \E |\bm{X} -\bm{Y}|^2 = (\tau^2/4) \E \bm{X}^2$, where the final step follows because $H$ is traceless. This proves~\cref{eq:forward_overlap_moment}.
\end{proof}

\begin{proof}[Proof of \Cref{thm:zero_testing_ub_schatten}]
    We prove that~\Cref{alg:forward_zero_testing} solves Hamiltonian zero testing (Hamiltonian certification with $H_0 = 0$) with probability greater than $99/100$;~\Cref{thm:zero_testing_ub_schatten} follows from the Trotterization argument above. If $H = 0$, then~\Cref{alg:forward_zero_testing} accepts with certainty, so assume $\|H\|_p/d^{1/p} \ge \eps$. We again let $\bm{X}, \bm{Y}$ denote uniformly random eigenvalues of $H$ and let $p_{\mathrm{rej}}(\tau)$ denote the rejection probability of the Bell measurement, as in~\Cref{lem:forward_overlap_test}. The probability that the algorithm accepts is $(1 - p_{\mathrm{rej}}(\tau))^{N} \le \mathrm{exp}(-N p_{\mathrm{rej}}(\tau))$; we will show that $N p_{\mathrm{rej}}(\tau) \ge 5$, which proves the result.

    First, consider the case that $p = \infty$. Then $B = d$ and $\| H \| \ge 1/\tau$. Since $H$ is traceless, it has at least one eigenvalue which is at least distance $1/(2\tau)$ away from at least $d/2$ eigenvalues. Therefore
    \begin{align}
        \Pr[|\bm{X} - \bm{Y}| \ge 1/(2 \tau)] \ge \frac{1}{2d} = \frac{1}{2B}.
    \end{align}
    By~\Cref{lem:forward_overlap_test}, we have $p_{\mathrm{rej}}(\tau) \ge 1/(4B)$ and $N p_{\mathrm{rej}}(\tau) \ge 5$, as desired. Next, consider the case that $p \in [2,\infty)$. If $B=d$ and $\|H \| \ge 1/\tau$, then the analysis of the previous case applies. Else, we claim that $\|H \| \le 1/\tau$; dividing the remaining cases, either $\|H \| \le 1/\tau$ by assumption, or $B \neq d$, in which case $\tau = 1$ and the claim follows from the assumed bound on the operator norm of $H$. Hence, by~\Cref{lem:forward_overlap_test}, we have
    \begin{align}
        p_{\mathrm{rej}}(\tau) \ge \frac{\E |\tau \bm{X}|^2}{4} \ge \frac{\E |\tau \bm{X}|^p}{4} \ge \frac{(\eps \tau)^p}{4} = \frac{1}{4B},
    \end{align}
    and $N p_{\mathrm{rej}}(\tau) \ge 5$ again, as desired.

    Finally, the total evolution time is at most $O(B \tau)$, which upon substituting the expressions for $B$ and $\tau$ yields the desired result.
\end{proof}

\section*{Acknowledgments}
We thank Soonwon Choi, Isaac Chuang, Matthew Ding, Eddie Farhi, Aram Harrow, John Kallaugher, Daniel Liang, Tudor Manole, Seyoon Ragavan, Ewin Tang, and Alexander Zlokapa for helpful discussions. 
We also thank Francisco Escudero Gutiérrez, Junseo Lee, and Sebastian Zur for sharing their draft~\cite{gutierrez2026hamiltonian} with us to coordinate arXiv posting.
RRA was supported by the Simons Foundation (MP-SIP-00001553, AWH) and by the U.S. Department of Energy, Office of Science, National Quantum Information Science Research Centers, Co-design Center for Quantum Advantage (DE-SC0012704). YC was supported by the QuICS Hartree Postdoctoral Fellowship.  TG was supported by ERC Starting Grant 101163189 and UKRI Future Leaders Fellowship MR/X023583/1. AL was supported by the Simons Foundation (MP-SIP-00001553, AWH) and the NSF (PHY-2325080).

\paragraph{AI disclosure.} All ideas and writing are due to the authors of this work. GPT-6 and Claude Opus 5.5 were used for proofreading the manuscript and suggesting minor edits.

\printbibliography
\appendix

\section{Operational meaning of normalized Frobenius and operator norms}\label{app:op_meaning}
In this section, we briefly review the natural operational interpretations of the normalized Frobenius norm and operator norms for Hamiltonians in the context of learning and testing, see also~\cite[Appendix A]{Bluhm_2026}. Suppose we are given two traceless Hamiltonians $H_0,H_1\in \C^{d\times d}$, and our goal is to distinguish them by evolving forwards in time. If we let $\Delta\coloneq H_0-H_1$ and $\ket{\psi}\in\C^d$ be some fixed initial state, then one may verify
\begin{align}\label{eq:operational-interpretation-rate-of-change}
    \mathfrak{R}\coloneq \lim_{t\to 0} \frac{\sqrt{1-F^2}}{|t|} = \sqrt{\Var_{\ket{\psi}}[\Delta]},
\end{align}
where $F=|\langle \psi| e^{iH_0t}e^{-iH_1t}|\psi\rangle|$ is the fidelity of the time-evolved states after time $t$, and $\Var_{\ket{\psi}}[\Delta] \coloneq \langle\psi|\Delta^2|\psi\rangle-\langle \psi | \Delta | \psi\rangle^2$ is the variance of the operator $\Delta$ with respect to the state $\ket{\psi}$. Note that the quantity $\sqrt{1-F^2}$ in the numerator is just the trace distance. So, \cref{eq:operational-interpretation-rate-of-change} says that the standard deviation $(\Var_{\ket{\psi}}[\Delta])^{1/2}$ is precisely the rate of change of the trace distance between the time-evolved states at time $t=0$.

\paragraph{Normalized Frobenius norm.} If we pick $\ket{\bm{\psi}}$ uniformly at random with respect to the Haar measure, then we can ask for the expectation value of the random variable $\mathfrak{R}$. This is tightly characterized by the normalized Frobenius norm:
\begin{align}
    \frac{1}{\sqrt{2}}\frac{\sqrt{d}\lVert \Delta\rVert_2}{d+1}\;  \leq\; \mathbb{E}_{| \bm{\psi} \rangle} \mathfrak{R}\; \leq\; \frac{\lVert \Delta\rVert_2}{\sqrt{d+1}}\, ,
\end{align}
where the upper bound follows from Jensen's inequality and $\mathbb{E}_{| \bm{\psi} \rangle} \text{Var}_{\ket{\bm{\psi}}}[\Delta]=\lVert\Delta\rVert_2^2/(d+1)$, and the lower bound follows from H\"older's inequality to get $\mathbb{E}\bm{X}\leq (\mathbb{E}\sqrt{\bm{X}})^{2/3}(\mathbb{E}\bm{X}^2)^{1/3}$, for $\bm{X}=\text{Var}_{\ket{\bm{\psi}}}[\Delta]$. Hence, the normalized Frobenius norm of the difference Hamiltonian $\Delta$ characterizes the average rate of change of distinguishability.

\paragraph{Operator norm.} 
If we pick a state $\ket{\psi}$ that maximizes $ \mathfrak{R}$, then we have
\begin{align}
    \frac{\norm{\Delta}}{2}\; \leq\; \max_{\ket{\psi}}\  \mathfrak{R}\; \leq\; \norm{\Delta} \, .
\end{align}
The upper bound is straightforward, and the lower bound follows by choosing $\ket{\psi}$ to be an equal superposition of the eigenvectors corresponding to the largest and smallest eigenvalues of $\Delta$. Hence, the operator norm of the difference Hamiltonian $\Delta$ characterizes the worst-case rate of change of distinguishability.

\section{Proof of anchored Efron-Stein inequality}\label{sec:proof_of_anchored_efron_stein}

In this section, we prove the anchored Efron-Stein inequality, \Cref{lem:anc_efron_stein}. We begin by recalling the \emph{Efron-Stein decomposition}, a generalization of the Fourier transform to general product domains. While many expositions on the Efron-Stein decomposition exist (see, e.g.,~\cite[Chapter~8]{odonnell2021analysisbooleanfunctions}), here we present a self-contained proof of the Efron-Stein decomposition and several of its key properties, which will be used extensively in our proof of the anchored Efron-Stein inequality in~\Cref{lem:anc_efron_stein}. 

We present the Efron-Stein decomposition for functions $f: \R^n \to \C^m$, though it exists in more general settings. For a probability measure $\mu$ on $\R$, we can define a natural inner product on the space of functions from $\R^n$ to $\C^m$ by $\langle f, g \rangle = \E_{\bm{x} \sim \mu^{\otimes n}} \langle f(\bm{x}), g(\bm{x})\rangle$, where $ \langle f(\bm{x}), g(\bm{x})\rangle$ denotes the Euclidean inner product between $f(\bm{x})$ and $g(\bm{x})$. 
The Efron-Stein decomposition is a particular orthogonal decomposition with respect to this inner product; it has several useful properties, which we present below:

\begin{lemma}[Efron-Stein decomposition]\label{lem:efron_stein_dec} Let $f: \R^n \to \C^m$ be a function and let $\mu$ be a probability measure on $\R$. There exists a decomposition
\begin{align}
    f = \sum_{S \subseteq [n] } f_S
\end{align}
of $f$ into functions $f_S: \R^n \to \C^m$ which satisfies the following properties:
\begin{enumerate}
    \item $f_S(x)$ depends only on the coordinates $x_i$ of $x$ such that $i \in S$. \label[property]{prop:efron_stein_dec_1}

    \item $f_{\varnothing}$ is a constant function, equal to $\E_{\bm{x} \sim \mu^{\otimes n}} f(\bm{x})$. \label[property]{prop:efron_stein_dec_2}

    \item For $S \neq S'$, $f_S$ and $f_{S'}$ are orthogonal. \label[property]{prop:efron_stein_dec_3}

    \item For $S', S$ such that $S \nsubseteq S'$, if $g: \R^n \to \C^m$ depends only on the coordinates in $S'$, then $g$ and $f_S$ are orthogonal. \label[property]{prop:efron_stein_dec_4}
\end{enumerate}
    
\end{lemma}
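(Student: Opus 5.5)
We construct the decomposition by the standard inclusion--exclusion over conditional expectations. For $T \subseteq [n]$, let $E_T f$ be the function obtained from $f$ by averaging out the coordinates outside $T$:
\[
(E_T f)(x) \coloneq \E_{\bm{y}\sim\mu^{\otimes n}} f(x_T,\bm{y}_{[n]\setminus T}),
\]
where $(x_T,\bm{y}_{[n]\setminus T})$ keeps the coordinates of $x$ in $T$ and uses those of $\bm{y}$ elsewhere. We then define
\[
f_S \coloneq \sum_{T\subseteq S}(-1)^{|S|-|T|}E_T f .
\]
Möbius inversion on the subset lattice gives $\sum_{S\subseteq U} f_S = E_U f$ for every $U$. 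Taking $U=[n]$ and using $E_{[n]}f=f$ yields the decomposition $f=\sum_S f_S$.

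Properties 1 and 2 follow directly from the construction. Each $E_T f$ with $T\subseteq S$ depends only on coordinates in $T\subseteq S$, which gives Property 1. Since $f_\varnothing = E_\varnothing f = \E f$, Property 2 holds. Throughout, we assume $f$ is square-integrable componentwise so that all expectations exist; this is the setting of interest, since the paper applies the decomposition to bounded state-valued functions.

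The key intermediate fact is the following: if $i\in S$, then $\E_{\bm{x}_i\sim\mu} f_S(x)=0$ for every fixed choice of the other coordinates. To see this, pair each $T\subseteq S\setminus\{i\}$ with $T\cup\{i\}$. Because the coordinates are independent, averaging out $x_i$ gives $\E_{\bm{x}_i} E_{T\cup\{i\}}f = E_T f$. The two members of each pair carry opposite signs in the definition of $f_S$, so they cancel.

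With this fact, Property 4 is a Fubini argument. Let $g$ depend only on coordinates in $S'$, and pick $i\in S\setminus S'$. Write
\[
\langle g, f_S\rangle = \E \big\langle g(\bm{x}),\, \E_{\bm{x}_i} f_S(\bm{x})\big\rangle .
\]
This is valid because $g$ does not depend on $x_i$ and the product measure lets us integrate $x_i$ first. The inner average vanishes by the fact above, so $\langle g,f_S\rangle=0$.

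Property 3 then follows from Property 4. If $S\neq S'$, then without loss of generality $S\nsubseteq S'$. Since $f_{S'}$ depends only on coordinates in $S'$ by Property 1, Property 4 applied with $g=f_{S'}$ gives orthogonality.

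I do not expect a serious obstacle. The one step needing care is the cancellation lemma: we must check that $E_{T\cup\{i\}}$ followed by averaging over $x_i$ really equals $E_T$. This uses the product structure $\mu^{\otimes n}$ and Fubini for vector-valued (complex) integrands, which we apply componentwise.
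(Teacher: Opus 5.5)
Your proof is correct and is essentially the paper's. Your $E_T$ equals $\prod_{i\notin T}E_i$ in the paper's notation, so your formula $f_S=\sum_{T\subseteq S}(-1)^{|S|-|T|}E_Tf$ is just the expansion of the paper's projector $\Pi_S=\prod_{i\in S}(I-E_i)\prod_{i'\notin S}E_{i'}$ applied to $f$. Likewise, your M\"obius inversion plays the role of $\prod_i(L_i+E_i)=I$, and your cancellation lemma plus Fubini is the paper's $E_iL_i=0$ combined with $g=E_ig$. The only variation is that you derive Property 3 from Properties 1 and 4 instead of from the mutual orthogonality of the $\Pi_S$, which works equally well.
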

\begin{proof}
    We explicitly construct a decomposition satisfying all the desired properties. We begin by defining several operators on the space of functions from $\R^n$ to $\C^m$ which will be used to define this decomposition. For $i \in [n]$, let $E_i$ denote the operator which averages the $i$-th coordinate of the input over the distribution $\mu$. Let $L_i = I - E_i$, where $I$ denotes the identity operator. Note that all the $E_i$ (and therefore all the $L_i$) are projectors, because a function no longer depends on the $i$-th coordinate of the input after averaging once. Further, all the $E_i$ and $L_{i'}$ are mutually commuting, because the order in which $x_i$ and $x_{i'}$ are averaged over does not matter. Therefore, for any $S \subseteq [n]$, we can unambiguously define the projector
    \begin{align}
        \Pi_S = \prod_{i \in S} L_i \prod_{i' \notin S} E_{i'}.
    \end{align}
    Note that $\Pi_S$ and $\Pi_{S'}$ are orthogonal for any $S \neq S'$, since $E_iL_i = 0$ for all $i \in [n]$. 
    We claim that $\sum_{S \subseteq [n]} \Pi_S = I$. To see this, let $1_S \in \{0,1\}^n$ denote the vector with $i$-th coordinate $1_{S,i} = 1$ if and only if $i \in S$. Then we have
    \begin{align}
        \sum_{S \subseteq [n]} \Pi_S &= \sum_{1_S \in \{0,1\}^n} \prod_{i=1}^n (1_{S,i} I + (-1)^{1_{S,i}} E_i)
        \\
        &= \prod_{i=1}^n \sum_{1_{S,i} \in \{0,1\}} (1_{S,i} I + (-1)^{1_{S,i}} E_i)
        \\
        &= I.
    \end{align}
    Defining $f_S = \Pi_S f$, it follows that $f = \sum_{S \subseteq [n] } f_S$, as desired. \Cref{prop:efron_stein_dec_1} holds because $f_S$ is averaged over all coordinates $i' \notin S$ of the input. \Cref{prop:efron_stein_dec_2} holds by construction, since $\Pi_{\varnothing}$ averages over all coordinates of the input.
    \Cref{prop:efron_stein_dec_3} holds because $\Pi_S$ and $\Pi_{S'}$ are orthogonal projectors for $S \neq S'$. Finally, to prove \Cref{prop:efron_stein_dec_4}, let $i \in S \setminus S'$. Then $g = E_i g$, so $E_i L_i = 0$ implies that $g$ and $f_S$ are orthogonal.
\end{proof}

We now utilize the Efron-Stein decomposition to prove \Cref{lem:anc_efron_stein}. 

\begin{proof}[Proof of \Cref{lem:anc_efron_stein}] Let $f = \sum_{S \subseteq [n]} f_S$ denote the Efron-Stein decomposition of $f$. By~\Cref{prop:efron_stein_dec_3}, we have
\begin{align}
    \E_{\bm{x}\sim \mu^{\otimes n}}{\big\lVert f(\bm{x})\big\rVert}_2^2 &= \sum_{S \subseteq [n]} \E_{\bm{x}\sim \mu^{\otimes n}}{\big\lVert f_S(\bm{x})\big\rVert}_2^2 
    \\
    &= \E_{\bm{x}\sim \mu^{\otimes n}}{\big\lVert f_{\varnothing}(\bm{x})\big\rVert}_2^2 + \sum_{S \neq \varnothing} \E_{\bm{x}\sim \mu^{\otimes n}}{\big\lVert f_S(\bm{x})\big\rVert}_2^2 
    \\
    &= {\big\lVert \E_{\bm{x} \sim \mu^{\otimes n}} f(\bm{x}) \big\rVert}_2^2 + \sum_{S \neq \varnothing} \E_{\bm{x}\sim \mu^{\otimes n}}{\big\lVert f_S(\bm{x})\big\rVert}_2^2,
\end{align}
where in the final line, we have used~\Cref{prop:efron_stein_dec_2}. It remains to show that $\sum_{S \neq \varnothing} \E_{\bm{x}\sim \mu^{\otimes n}}{\big\lVert f_S(\bm{x})\big\rVert}_2^2$ is at most the right-hand side of~\cref{eq:anc_efron_stein}. We first have that
\begin{align}
    \sum_{S \neq \varnothing} \E_{\bm{x}\sim \mu^{\otimes n}}{\big\lVert f_S(\bm{x})\big\rVert}_2^2 &\le \sum_{S \subseteq [n]} |S| \E_{\bm{x}\sim \mu^{\otimes n}}{\big\lVert f_S(\bm{x})\big\rVert}_2^2 
    \\
    &= \sum_{i=1}^n \sum_{S \ni i} \E_{\bm{x}\sim \mu^{\otimes n}}{\big\lVert f_S(\bm{x})\big\rVert}_2^2
    \\
    &\le \sum_{i=1}^n \sum_{S \ni i} \E_{\bm{x}\sim \mu^{\otimes n}}{\big\lVert f_S(\bm{x})\big\rVert}_2^2 + \E_{\bm{x}\sim \mu^{\otimes n}}{\big\lVert f_S(\bm{x}^{(i \to 0)})\big\rVert}_2^2. \label{eq:matches_rhs}
\end{align}
For $i \in [n]$, define $Z_i$ as the operator which sets the $i$-th coordinate of a function's input to zero, so that $Z_if(x) = f(x^{(i \to 0)})$. Then the right-hand side of~\cref{eq:anc_efron_stein} is $\sum_{i=1}^n \E_{\bm{x}\sim \mu^{\otimes n}}\big\lVert f(\bm{x}) - Z_i f(\bm{x}) {\big\rVert}_2^2\,$. Using the Efron-Stein decomposition of $f$, we have
\begin{align}
    f - Z_i f = \sum_{S \subseteq [n]} f_S - Z_i f_S  =  \sum_{S \ni i} f_S - Z_i f_S,
\end{align}
where we have used~\Cref{prop:efron_stein_dec_1}, which implies that $Z_i f_S = f_S$ for $i \notin S$. We claim that all other terms in this decomposition are orthogonal. First, for $S \neq S'$, $f_S$ and $f_{S'}$ are orthogonal by~\Cref{prop:efron_stein_dec_3}. Second, for any $S,S'$ which both contain $i$, $f_S$ and $Z_i f_{S'}$ are orthogonal by~\Cref{prop:efron_stein_dec_4}, since $S \nsubseteq S' \setminus \{i\}$. Finally, for any $S \neq S'$ which both contain $i$, $Z_i f_S$ and $Z_i f_{S'}$ are again orthogonal by~\Cref{prop:efron_stein_dec_4}, since either $S \setminus \{i\} \nsubseteq S' \setminus \{i\}$ or vice versa. Thus,
\begin{align}
    \sum_{i=1}^n \E_{\bm{x}\sim \mu^{\otimes n}}\big\lVert f(\bm{x}) - Z_i f(\bm{x}) {\big\rVert}_2^2\, = \sum_{i=1}^n \sum_{S \ni i} \E_{\bm{x}\sim \mu^{\otimes n}}\big\lVert f_S(\bm{x}) {\big\rVert}_2^2 + \E_{\bm{x}\sim \mu^{\otimes n}}\big\lVert Z_i f_S(\bm{x}) {\big\rVert}_2^2,
\end{align}
which equals~\cref{eq:matches_rhs}, completing the proof.
\end{proof}

\section{Proof of reductions in \texorpdfstring{\Cref{lem:small_angle_reductions,lem:large_angle_reduction}}{Lemmas 2.3 and 2.4}}\label{sec:proof_of_reductions}
In this section, we prove each of the reductions appearing in \Cref{lem:small_angle_reductions,lem:large_angle_reduction}, in order. Recall the setting in \Cref{problem:two_point}, where we are given a random Hamiltonian $\bm{H}$ which is equal to zero (null) given $\bm{b}=0$ and equal to $\bm{H}_1 - \tr(\bm{H}_1) I/d$ (alternative) if $\bm{b}=1$. (See \Cref{problem:two_point} for the full details.)

\subsection{Hamiltonian certification}\label{sec:zero_testing_proof}
Here, we prove the first reduction,~\Cref{item:zero_testing_reduction} in \Cref{lem:small_angle_reductions}. The claim is a straightforward consequence of:
\begin{lemma}[Alternative is far from zero]\label{lem:cert_red}
  Let $\eps_0\in (0,1/10]$ and consider the setting in \Cref{problem:two_point}. Choose\footnote{One may verify that this is a valid choice of parameters in \Cref{problem:two_point} for every such $\eps_0$ and $d\geq 100$ since $100\leq \deff\leq d$ and $0 \leq \sigma <1/2$ both hold.} parameters as in \Cref{lem:small_angle_reductions}; namely, $\deff=\min\{d,\lceil 1/\eps_0^2\rceil\}$, $\mathsf{p}=50/\deff$, $\sigma=\eps_0\sqrt{\deff}/4$, and $\theta_0=0$. Conditioned on $\bm{b}=1$ we have $\tfrac{1}{\sqrt{d}}{\norm{\bm{H}}}_{2}\geq \eps_0/16$ with probability at least $4/5$.
\end{lemma}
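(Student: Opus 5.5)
We need to show $\frac1d\|\bm H\|_2^2 \ge \eps_0^2/256$ with probability at least $4/5$, where $\bm H=\bm H_1-\tr(\bm H_1)I/d$. The plan is a second-moment (Chebyshev or Paley–Zygmund) argument on the block entries $\bm h_j=\bm\beta_j\bm\theta_j$. Write $r=\lfloor d/\deff\rfloor$, so $r\ge d/(2\deff)$ since $\deff\le d$. The centered Hamiltonian has Frobenius norm
\begin{align}
\|\bm H\|_2^2 = r\sum_{j}\bm h_j^2 - \frac{r^2}{d}\Big(\sum_j \bm h_j\Big)^2 ,
\end{align}
which is the variance of the diagonal entries, including the $d-r\deff$ zeros. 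I will lower bound the first term and upper bound the subtracted term, each with high probability.

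\textbf{Step 1 (first term).} Let $\bm N=\sum_j\bm\beta_j$, which is $\mathrm{Bin}(\deff,50/\deff)$ with mean $50$ and variance at most $50$. By Chebyshev, $\bm N\ge 25$ with probability at least $1-50/625\ge 0.92$. Conditioned on the support, the $\bm\theta_j^2$ are i.i.d.\ with mean $\sigma^2/3$ and variance at most $\sigma^4/5$. Hence $\sum_{j:\bm\beta_j=1}\bm\theta_j^2\ge \bm N\sigma^2/6$ fails with probability at most roughly $(\sigma^4/5)\bm N/(\bm N\sigma^2/6)^2 = 36/(5\bm N)\le 0.3$ when $\bm N\ge 25$.

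This failure bound is too weak, so I will refine it. Use $\bm\theta^2\ge \sigma^2/4$ whenever $|\bm\theta|\ge\sigma/2$, which has probability $1/2$. Then a binomial count shows that at least $\bm N/4$ of the active entries satisfy $|\bm\theta|\ge\sigma/2$, except with probability $\le 4/\bm N\le 0.16$ by Chebyshev. Alternatively, apply Chebyshev to $\sum_j\bm h_j^2$ jointly. That sum has mean $50\sigma^2/3$ and variance at most $\deff\,\mathsf p\,\E\bm\theta^4\le 10\sigma^4$, so
\begin{align}
\Pr\Big[\sum_j\bm h_j^2 < 25\sigma^2/3\Big]\le \frac{10}{(25/3)^2}<0.15 .
\end{align}
This joint version is the cleaner route.

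\textbf{Step 2 (trace term).} Since $\theta_0=0$, the sum $\sum_j\bm h_j$ has mean $0$ and variance $50\sigma^2/3$. By Chebyshev, $(\sum_j\bm h_j)^2\le 50\sigma^2/3\cdot 20$ with probability at least $0.95$. The subtracted term is then at most $(r^2/d)\cdot 334\sigma^2$, while the main term is at least $r\cdot 25\sigma^2/3$. Their ratio is $r\cdot 40/d\le 40/\deff\le 0.4$ because $\deff\ge 100$. So $\|\bm H\|_2^2\ge 0.6\cdot r\cdot 25\sigma^2/3\ge 5r\sigma^2$.

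\textbf{Step 3 (conclude).} With probability at least $1-0.15-0.05=0.8$,
\begin{align}
\frac1d\|\bm H\|_2^2\ge \frac{5r\sigma^2}{d}\ge \frac{5\sigma^2}{2\deff}=\frac{5\eps_0^2}{32},
\end{align}
using $\sigma^2=\eps_0^2\deff/16$. This comfortably exceeds $\eps_0^2/256$. The main obstacle is the constant bookkeeping needed to land exactly at $4/5$. Chebyshev margins are tight here, so I may sharpen Step 1 with Paley–Zygmund or lower the threshold to $\sum_j\bm h_j^2\ge 5\sigma^2$; there is ample slack against the target $\eps_0/16$ to do so.
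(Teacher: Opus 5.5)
Your proof is correct, and it takes a different route from the paper's. You write $\|\bm H\|_2^2 = r\sum_j \bm h_j^2 - \frac{r^2}{d}\bigl(\sum_j \bm h_j\bigr)^2$ exactly and bound both terms with Chebyshev. The joint bound on $\sum_j\bm h_j^2$ fails with probability $90/625<0.15$, and the trace term fails with probability $1/20$. Using $\deff\ge 100$, the trace correction is at most a $0.4$ fraction of the main term. These numbers already close the argument, so you can drop the abandoned conditional argument in Step 1 and the closing remark about sharpening.

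The paper instead uses a shift-invariant extremal argument. With probability at least $1-2(1-\mathsf p/4)^{\deff}\ge 1-2e^{-25/2}$, some block has $\bm h_j\ge\sigma/2$ and another has $\bm h_{j'}\le-\sigma/2$. Two eigenvalues separated by $\sigma$ force a block eigenvalue of $\bm H$ of magnitude at least $\sigma/2$, whatever the trace shift. That eigenvalue has multiplicity $\lfloor d/\deff\rfloor\ge d/(2\deff)$, which gives $\|\bm H\|_2/\sqrt d\ge \sigma/\sqrt{8\deff}\ge\eps_0/16$.

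The two routes trade off as follows:
\begin{itemize}
\item The paper never estimates the trace term, and its failure probability is exponentially small, but its constant is weaker.
\item Your second-moment route gives the stronger bound $\|\bm H\|_2/\sqrt d\ge\sqrt{5/32}\,\eps_0$. It also parallels the Chebyshev argument the paper uses for locality testing (\Cref{lem:alternative_far_from_k_local}).
\item Your failure probability ($\approx 0.194$) sits just under the $1/5$ threshold, and your argument depends on exact moment computations and on $\deff\ge 100$.
\end{itemize}
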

\begin{proof}
    Condition on $\bm{b}=1$.
    We will prove that, with probability at least $4/5$, the Hamiltonian $\bm{H}$ has some pair of eigenvalues separated by at least $\sigma$. Suppose this is true; then $\bm{H}$ has at least one eigenvalue with absolute value at least $\sigma/2$. Since the eigenvalues of $\bm{H}$ come in blocks of rank $\floor{d/\deff}$, $\bm{H}$ in fact must have at least $\floor{d/\deff} \ge d/(2 \deff)$ eigenvalues of this magnitude. The normalized Frobenius norm of $\bm{H}$ is therefore
    \begin{align}
        \frac{1}{\sqrt{d}} {\| \bm{H} \|}_2 \ge \frac{1}{\sqrt{d}} \bigg(\bigg(\frac{\sigma}{2}\bigg)^2 \frac{d}{2 \deff} \bigg)^{1/2} = \frac{\sigma}{\sqrt{8\deff}} \ge \frac{\eps_0}{16},
    \end{align}
    as desired.

    We now prove that $\bm{H}$ has at least one pair of eigenvalues separated by at least $\sigma$ with probability at least $4/5$. It suffices to prove the claim for $\bm{H}_1$, since the eigenvalue gaps of $\bm{H}$ and $\bm{H}_1$ are the same. For every $j \in [\deff]$, we have that $\Pr[\bm{h}_j \ge \sigma/2] = \Pr[\bm{h}_j \le -\sigma/2] = \mathsf{p}/4$. By union bound, the probability that either $\bm{h}_j\geq -\sigma/2$ for every $j\in[\deff]$ or $\bm{h}_j\leq \sigma/2$ for every $j\in [\deff]$ is at most $2(1-\mathsf{p}/4)^\deff \leq 2 e^{-25/2} < 1/5$.
\end{proof}

\subsection{Hamiltonian locality testing}\label{sec:locality_red_proof}
The proof of \Cref{item:locality_testing_reduction} in \Cref{lem:small_angle_reductions} is a straightforward consequence of the following lemma.

\begin{lemma}[Alternative is far from $k$-local]\label{lem:alternative_far_from_k_local}
  Let $k \geq 0$ be a fixed integer, let $r_0(k)\coloneqq 4k+16$, let $\eps_0\in (0,2^{-r_0(k)/2}]$, and consider the setting in \Cref{problem:two_point}. Suppose the system comprises $n$ qubits for some integer $n\geq r_0(k)$, so that $d=2^n$. Let $r=\min\{n, \lceil\log_2 (1/\eps_0^2)\rceil\}$, and note that $r\geq r_0(k)$ by the assumptions on $n$ and $\eps_0$. Choose parameters as in \Cref{lem:small_angle_reductions}; namely, $\deff=2^r$, $\mathsf{p}=50/2^r$, $\sigma=\eps_0 2^{r/2-2}$, and $\theta_0=0$. Conditioned on $\bm{b}=1$, we have $\bm{H}$ is $\eps_0/10$-far in normalized Frobenius norm from every $k$-local Hamiltonian with probability at least $4/5$.
\end{lemma}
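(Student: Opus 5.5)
The plan is to use the Pauli decomposition. The distance from $\bm{H}$ to the nearest $k$-local Hamiltonian in normalized Frobenius norm equals the normalized Frobenius norm of the projection of $\bm{H}$ onto Pauli strings of weight greater than $k$. Write $P_{>k}$ for this orthogonal projection in Hilbert--Schmidt inner product. Then
\begin{align}
\min_{H_{\mathrm{loc}}}\tfrac{1}{d}\|\bm{H}-H_{\mathrm{loc}}\|_2^2=\tfrac{1}{d}\|\bm{H}\|_2^2-\tfrac{1}{d}\|P_{\le k}\bm{H}\|_2^2 .
\end{align}
It therefore suffices to lower bound $\tfrac1d\|\bm{H}\|_2^2$ and upper bound the low-weight Pauli mass.

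We choose the block projectors concretely. Take $\Pi_j=\ketbra{j}{j}\otimes I_{2^{n-r}}$, with $j\in\{0,1\}^r$ indexing the computational basis of the first $r$ qubits. Then $\bm{H}_1=D\otimes I$ with $D=\diag(\bm{h})$ on $r$ qubits, and the Pauli expansion of $\bm{H}$ uses only $Z$-strings supported on the first $r$ qubits. The coefficient of $Z_S$ is the Walsh coefficient $\hat h(S)=2^{-r}\sum_j \bm{h}_j(-1)^{S\cdot j}$, and $S=\varnothing$ is removed by centering. By Parseval,
\begin{align}
\tfrac1d\|\bm{H}\|_2^2=\sum_{S\ne\varnothing}\hat h(S)^2,\qquad \tfrac1d\|P_{\le k}\bm{H}\|_2^2=\sum_{1\le|S|\le k}\hat h(S)^2 .
\end{align}
The number of nonempty $S$ with $|S|\le k$ is at most $N_k\le r^k$, which is tiny compared to $2^r$ since $r\ge 4k+16$.

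The main step is to show the total mass is large while each low-degree coefficient is small. For the total, $\E\sum_{S}\hat h(S)^2=2^{-r}\sum_j\E \bm{h}_j^2=\mathsf p\sigma^2/3$. With high probability $\tfrac1d\|\bm{H}\|_2^2$ is of this order; we could instead import the eigenvalue-gap argument of \Cref{lem:cert_red}, which gives $\tfrac1d\|\bm{H}\|_2\ge\eps_0/16$ with probability at least $4/5$ (perhaps with slack, adjusting constants). For each fixed $S\neq\varnothing$, $\hat h(S)$ has mean zero because $\bm{\theta}$ is symmetric. Its variance is $2^{-r}\E\bm{h}^2=2^{-r}\mathsf p\sigma^2/3$. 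Hence
\begin{align}
\E\sum_{1\le|S|\le k}\hat h(S)^2\le r^k 2^{-r}\,\mathsf p\sigma^2/3 .
\end{align}
Substituting $\mathsf p\sigma^2=50\eps_0^2/16$, Markov's inequality shows this is at most $(\eps_0/20)^2$ except with small probability, provided $r^k2^{-r}$ is small. This holds for $r\ge 4k+16$: for instance $r^k\le 2^{r/2}$ when $r\ge 4k+16$, since $k\log_2 r\le r/2$ in that range.

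Finally, intersect the events by a union bound. We get $\tfrac1d\|\bm{H}-H_{\mathrm{loc}}\|_2^2\ge(\eps_0/16)^2-(\eps_0/20)^2$, which should exceed $(\eps_0/10)^2$. The main obstacle is this constant bookkeeping: $(\eps_0/16)^2$ is already smaller than $(\eps_0/10)^2$. So the lower bound on total mass must be sharpened. We would aim for a concentration bound giving $\tfrac1d\|\bm{H}\|_2^2\ge c\,\mathsf p\sigma^2$ with $c$ close to $1/3$, so that the total is at least about $\eps_0^2$. A Chebyshev bound on $\sum_j\bm{h}_j^2$ should do this, since about $50$ nonzero terms are expected. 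The centering correction is only $(\sum_j\bm{h}_j/2^r)^2$, which is negligible. The failure probability from Chebyshev plus the Markov step must total at most $1/5$; the constants $50$ and $r_0(k)$ leave room for this.
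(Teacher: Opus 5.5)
Your plan follows the paper's proof closely. It uses the same Walsh/$Z$-string expansion of the block-diagonal $\bm{H}_1$, Markov's inequality on the low-weight mass via $\E\hat h(S)^2=2^{-r}\mathsf{p}\sigma^2/3$, and Chebyshev on $2^{-r}\sum_j\bm{h}_j^2$, which you rightly adopt after noting that \Cref{lem:cert_red} is too lossy. The paper avoids your centering correction by keeping $S=\varnothing$ in both the total and the low-weight sum. The constant bookkeeping you were worried about does close: with Chebyshev deviation $\mathsf{p}\sigma^2/4$ and Markov threshold $\mathsf{p}\sigma^2/24$, one gets $\Delta^2\ge \mathsf{p}\sigma^2/24=25\eps_0^2/192$, with failure probability at most $16/250+8N_k(r)/2^r$.

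The genuine gap is your count of low-weight sets. The claim that $r^k\le 2^{r/2}$ whenever $r\ge 4k+16$ is false: $k=4$, $r=32$ gives $32^4=2^{20}>2^{16}$. More seriously, the bound $N_k\le r^k$ cannot make $r^k2^{-r}$ small uniformly at the threshold $r=r_0(k)=4k+16$ fixed in the statement. For $k=10$, $r=56$, one has $56^{10}/2^{56}\approx 4$, so your Markov step gives nothing. As written, your argument proves the lemma only with a larger threshold, roughly $r_0(k)\gtrsim k\log k$.

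The fix is to bound the binomial tail rather than use a polynomial count:
$N_k(r)/2^r=\Pr[\mathrm{Bin}(r,1/2)\le k]\le \exp(-2(r/2-k)^2/r)\le e^{-8}$.
This uses $r/2-k\ge r/4+4$ together with $r/8+32/r\ge 4$, and it is exactly what the paper does. Alternatively, $\sum_{j\le k}\binom{r}{j}\le (er/k)^k$ also suffices at $r\ge 4k+16$.
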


\begin{proof}
To verify the claim about the distance from the set of $k$-local Hamiltonians, first note that\footnote{Here, we pick diagonal projectors in a suitable manner based on the ordering of the computational basis states on $n$ qubits.}
\begin{align}
  \bm{H}_1 &= \sum_{x\in \{0,1\}^n} \bm{h}_{j(x)}\ketbra{x}{x}\\
  &= \sum_{x\in \{0,1\}^n} \bm{h}_{j(x)}\Bigg(\frac{1}{2^n}\sum_{y\in \{0,1\}^n} (-1)^{y\cdot x} \sigma_Z(y)\Bigg)\\
  &= \sum_{y\in \{0,1\}^n} \hat{\bm{h}}_y\sigma_Z(y)
\end{align}
where $j(x)\in [\deff]$ is the integer representation of the first $r$ bits of $x$ for each $x\in \{0,1\}^n$, $\sigma_Z(y)\coloneq\prod_{j\in [n]}Z_j^{y_j}$ for $Z_j$ the Pauli-Z operator on the $j$th qubit, and
\begin{align}
  \hat{\bm{h}}_y\coloneq \frac{1}{2^n}\sum_{x\in \{0,1\}^n}(-1)^{x\cdot y}\bm{h}_{j(x)}\quad \textnormal{for every}\ y\in \{0,1\}^n.
\end{align}
Hence, conditioned on $\bm{b}=1$ we have $\bm{H}=\sum_{y\in \{0,1\}^n\backslash \{0^n\}}\hat{\bm{h}}_y\sigma_Z(y)$ and therefore the square of the distance to the closest $k$-local Hamiltonian in normalized Frobenius norm is equal to
\begin{align}
  \Delta^2\coloneq \frac{1}{d}{\left\lVert\sum_{y\in \{0,1\}^n\colon |y|\geq k+1} \hat{\bm{h}}_y\sigma_Z(y)\right\rVert}_{2}^2 &= \sum_{y\in \{0,1\}^n\colon |y|\geq k+1}\hat{\bm{h}}_y^2.
\end{align}
Next, using the fact that $j(x)$ depends only on the first $r$ bits, one may verify that $\hat{\bm{h}}_y=0$ unless the final $n-r$ bits of $y$ are equal to zero, in which case we have $y=(y^\prime,0^{n-r})$ for some $y^\prime\in \{0,1\}^r$ and
\begin{align}
  \hat{\bm{h}}_{(y^\prime,0^{n-r})} = \frac{1}{2^r} \sum_{x^\prime\in \{0,1\}^{r}}(-1)^{x^\prime\cdot y^\prime}\bm{h}_{j(x^\prime, 0^{n-r})}.
\end{align}
We thus abuse notation slightly and write $\hat{\bm{h}}_{y^\prime} \coloneq \hat{\bm{h}}_{(y^\prime, 0^{n-r})}$ and $\bm{h}_{x^\prime} \coloneq \bm{h}_{j(x^\prime, 0^{n-r})}$ for any $x^\prime,y^\prime\in\{0,1\}^r$, and we may view $\bm{h},\hat{\bm{h}}\in \R^{\{0,1\}^r}$ as random vectors in $2^r$ dimensions, in the obvious way. With this notation,
\begin{align}\label{eq:frob_norm_expr_2}
  \Delta^2 = {\lVert\hat{\bm{h}}\rVert}_2^2 - {\lVert\hat{\bm{h}}_{\leq k}\rVert}_2^2
\end{align}
where $\hat{\bm{h}}_{\leq k}\coloneq \sum_{y \in \{0,1\}^r\colon |y|\leq k} \hat{\bm{h}}_{y}|y\rangle$. We will show the first term on the right-hand side is sufficiently large and the second term being subtracted is sufficiently small, with high probability. We begin with the second term, whose expectation is
\begin{align}
  \E {\lVert \hat{\bm{h}}_{\leq k}\rVert}_2^2 &= \sum_{\substack{y\in \{0,1\}^r\\ |y|\leq k}}\E|\hat{\bm{h}}_y|^2\\
  &= \frac{1}{2^{2r}}\sum_{\substack{y\in \{0,1\}^r\\ |y|\leq k}}\sum_{x,x^\prime\in \{0,1\}^r}(-1)^{y\cdot(x+x^\prime)}\E \bm{h}_x \bm{h}_{x^\prime}\\
  &= \frac{N_k(r)}{2^{r}}\E\bm{h}_{0^n}^2\\
  &= \frac{N_k(r)}{2^{r}}\cdot\frac{\mathsf{p}\sigma^2}{3},
\end{align}
where $N_k(r)\coloneqq \sum_{j=0}^{k}\binom{r}{j}$ is the number of $y\in\{0,1\}^r$ with $|y|\leq k$.
Here, the third line follows from the fact that $\bm{h}_x$ and $\bm{h}_{x^\prime}$ are centered and independent for $x\neq x^\prime$.
By Markov's inequality, for any $a_0 > 0$ we have
\begin{align}
  \Pr[{\lVert \hat{\bm{h}}_{\leq k}\rVert}_2^2\geq a_0]\leq \frac{N_k(r)}{2^r}\cdot\frac{\mathsf{p}\sigma^2}{3a_0}.\label{eq:markov_tail_bound_2}
\end{align}
Next, consider the first term in \cref{eq:frob_norm_expr_2},
\begin{align}
  {\lVert\hat{\bm{h}}\rVert}_2^2 &= \frac{1}{2^r} {\lVert \bm{h}\rVert}^2_2 = \frac{1}{2^r}\sum_{x\in \{0,1\}^r}\bm{h}_x^2.
\end{align}
Set $\bm{X}_{x}\coloneq\bm{h}_x^2-\frac{\mathsf{p}\sigma^2}{3}$ for every $x\in \{0,1\}^r$. Then the $\bm{X}_x$ are iid and centered, with
\begin{align}
  \E \bm{X}_{0^r}^2 = \E \bm{h}_{0^r}^4-\left(\frac{\mathsf{p}\sigma^2}{3}\right)^2 \leq \frac{\mathsf{p}\sigma^4}{5}.
\end{align}
Since ${\lVert\hat{\bm{h}}\rVert}_2^2-\frac{\mathsf{p}\sigma^2}{3}=\frac{1}{2^r}\sum_{x}\bm{X}_x$, Chebyshev's inequality gives, for any $a_1>0$,
\begin{align}
  \Pr\left[{\lVert \hat{\bm{h}}\rVert}_2^2\leq \frac{\mathsf{p}\sigma^2}{3}-a_1\right]
  \leq \frac{1}{2^{2r}a_1^2}\sum_{x\in\{0,1\}^r}\E\bm{X}_x^2
  \leq \frac{\mathsf{p}\sigma^4}{5\cdot 2^r a_1^2}.\label{eq:chebyshev_tail_bound_2}
\end{align}
Putting $a_1=\mathsf{p}\sigma^2/4$, the right-hand side of \cref{eq:chebyshev_tail_bound_2} is equal to $\frac{16}{5\cdot 2^r\mathsf{p}}=\frac{16}{250}\leq\frac{1}{10}$,
and putting $a_0=\mathsf{p}\sigma^2/24$ the right-hand side of \cref{eq:markov_tail_bound_2} is equal to $8N_k(r)/2^r$. To bound this, let $\bm{S}\sim\mathrm{Bin}(r,1/2)$, so that $N_k(r)/2^r=\Pr[\bm{S}\leq k]$. Since $r\geq r_0(k)=4k+16$, we have $k\leq (r-16)/4$, and hence $r/2-k\geq r/4+4>0$. Hoeffding's inequality then gives
\begin{align}
  \frac{N_k(r)}{2^r} &= \Pr\left[\bm{S}\leq \frac{r}{2}-\left(\frac{r}{2}-k\right)\right]\\
  &\leq \exp\left(-\frac{2(r/2-k)^2}{r}\right)\\
  &\leq \exp\left(-\frac{2(r/4+4)^2}{r}\right)\\
  &= \exp\left(-\frac{r}{8}-4-\frac{32}{r}\right)\\
  &\leq e^{-8},
\end{align}
where the last step uses $\frac{r}{8}+\frac{32}{r}\geq 4$ (AM--GM). Therefore
\begin{align}
  \frac{8N_k(r)}{2^r}\leq 8e^{-8}\leq \frac{1}{16}.
\end{align}
Thus, except with probability at most $\tfrac{1}{10}+\tfrac{1}{16}\leq \tfrac{1}{5}$ we have
\begin{align}
  \Delta^2 \geq \frac{\mathsf{p}\sigma^2}{3}-\frac{\mathsf{p}\sigma^2}{4}-\frac{\mathsf{p}\sigma^2}{24}
  = \frac{\mathsf{p}\sigma^2}{24}
  = \frac{25\eps_0^2}{192}
  \geq \frac{\eps_0^2}{100}.
\end{align}
This completes the proof.
\end{proof}

\subsection{Amplitude estimation}\label{sec:proof_of_reductions_ae}
In this section, we prove the reduction to amplitude estimation, which is \Cref{item:ae_reduction} of \Cref{lem:small_angle_reductions}. We first observe a lower bound on the transition probability after time-evolving some fixed initial state. Here, we let $\Var_{\ket{\psi}}[H] \coloneq \langle\psi|H^2|\psi\rangle-\langle \psi | H | \psi\rangle^2$ be the variance of an observable $H$ with respect to the state $\ket{\psi}$.
\begin{lemma}\label{lem:traceless_ham_transition}
    Let $H\in \C^{d\times d}$ be a traceless Hamiltonian such that $\norm{H}\leq 1$. It holds that
    \begin{align}
        |\langle \psi|e^{-i \pi H/2}|\psi\rangle|^2\leq 1 - \Var_{\ket{\psi}}[H]
    \end{align}
    for any state $\ket{\psi}\in \C^d$.
\end{lemma}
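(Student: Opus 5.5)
The plan is to diagonalize $H$ and reduce the claim to a scalar inequality about a probability distribution on $[-1,1]$. Write $H=\sum_k \lambda_k \ketbra{k}{k}$ with $\lambda_k\in[-1,1]$ (using $\norm{H}\le 1$), and set $p_k\coloneq |\langle k|\psi\rangle|^2$. Let $\bm{X},\bm{Y}$ be independent random variables, each equal to $\lambda_k$ with probability $p_k$. Then
\begin{align}
  |\langle \psi|e^{-i\pi H/2}|\psi\rangle|^2 = \Big|\E e^{-i\pi \bm{X}/2}\Big|^2 = \E\, e^{-i\pi(\bm{X}-\bm{Y})/2} = \E \cos\big(\pi(\bm{X}-\bm{Y})/2\big),
\end{align}
where the last step uses that $\bm{X}-\bm{Y}$ is symmetric, so the imaginary part vanishes. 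On the other side, the standard identity $\Var_{\ket{\psi}}[H]=\Var[\bm{X}]=\tfrac12\E(\bm{X}-\bm{Y})^2$ holds. The tracelessness assumption is not needed for this step.

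It therefore suffices to show the pointwise bound
\begin{align}
  1-\cos(\pi z/2)\ge z^2/2 \qquad \text{for all } z\in[-2,2],
\end{align}
since $|\bm{X}-\bm{Y}|\le 2$. Taking expectations of this bound gives exactly
\begin{align}
  1-|\langle \psi|e^{-i\pi H/2}|\psi\rangle|^2\ge \Var_{\ket{\psi}}[H].
\end{align}

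To prove the scalar bound, substitute $u=|z|/2\in[0,1]$. Using $1-\cos(\pi u)=2\sin^2(\pi u/2)$, the bound becomes $\sin^2(\pi u/2)\ge u^2$, that is, $\sin(\pi u/2)\ge u$ on $[0,1]$. This is Jordan's inequality. It follows from concavity of $\sin(\pi u/2)$ on $[0,1]$ together with equality at the endpoints $u=0$ and $u=1$.

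The only delicate point is getting the constant right. The constant in the scalar bound is tight at $z=\pm2$, so a Taylor bound such as $1-\cos w\ge w^2/2-w^4/24$ would be too lossy. That is why I would use the exact half-angle identity followed by concavity, rather than a series expansion.
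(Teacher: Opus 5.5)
Your proof is correct and matches the paper's argument: diagonalize $H$, write the squared overlap as $\sum_{j,k}p_jp_k\cos(\pi(E_j-E_k)/2)$ (your independent copies $X,Y$ just repackage this double sum), apply $\cos(\pi x/2)\le 1-x^2/2$ on $[-2,2]$, and recognize the variance. Your half-angle and Jordan-inequality justification of that scalar bound, which the paper only asserts, is a correct addition, and your eigenvalue range $[-1,1]$ is the right one to guarantee $|E_j-E_k|\le 2$ (the paper's stated range $0\le E_j\le 1$ is a slip).
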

\begin{proof}
    Let $\{\ket{j}\}_{j=1}^d$ be an orthonormal eigenbasis of $H$ such that $H=\sum_{j=1}^d E_j \ketbra{j}{j}$ for some real $0\leq E_j\leq 1$. We have
    \begin{align}
         |\langle \psi|e^{-i \pi H/2}|\psi\rangle|^2 &= \sum_{j,k=1}^d p_j p_k e^{-i \pi (E_j-E_k)/2} = \sum_{j,k=1}^d p_j p_k \cos(\pi (E_j-E_k)/2)\, ,
    \end{align}
    where $p_j\coloneq |\langle j | \psi\rangle|^2$ for each $j\in [d]$. The right-hand side of the above is at most
    \begin{align}
        \sum_{j,k=1}^d p_j p_k \cos(\pi (E_j-E_k)/2) &\leq \sum_{j,k=1}^d p_j p_k (1- (E_j-E_k)^2/2)\\
        &= 1 - \sum_j p_j E_j^2 + \Big(\sum_{j}p_jE_j\Big)^2\\
        &= 1 - \langle \psi| H^2|\psi\rangle + \langle \psi | H|\psi\rangle^2\\
        &= \vphantom{\sum_j^d}1 - \Var_{\psi}[H]\, ,
    \end{align}
    where we made use of the inequality $\cos(\pi x / 2)\leq 1-x^2/2$ for all $x\in [-2,2]$, and the second line follows from $\sum_jp_j = 1$. 
\end{proof}
This proves the reduction, since taking $\ket{\psi}=\ket{s}\coloneq \frac{1}{\sqrt{d}}\sum_{j=1}^d\ket{j}$ in the above we have
\begin{align}
    \Var_{\ket{s}}[\bm{H}] = \langle s | \bm{H}^2 |s\rangle = \frac{\norm*{\bm{H}}_2^2}{d}
\end{align}
where in the first equality we used the fact that $\bm{H}$ is traceless. And we have already shown in \Cref{lem:cert_red} that $\norm*{\bm{H}}_2^2/d\geq (\eps_0/16)^2$ with constant probability of success, conditioned on $\bm{b}=1$. 

The reduction from amplitude amplification proceeds similarly, with one small difference. If $H = 0$, then since the state $e^{-i H t} \ket{s}$ has no support in the image of $I - \ketbra{s}{s}$, the marked amplitude $a = 0$, and \emph{any} state is a valid solution to amplitude amplification. This pathological case can be repaired by giving the null instance a small nonzero good component, after which the logic of the reduction is unchanged.

\subsection{Search with coherent errors}
The final reduction is to unstructured search with coherent errors, as described in \Cref{problem:coherent_error}
\begin{proof}[Proof of \Cref{lem:large_angle_reduction}]
    Consider the setting in \Cref{problem:two_point} with $\deff=d$, $\mathsf{p}=50/\deff$, $\theta_0=\pi$, and any fixed $\sigma\in [0,1/2)$. We will show there exists an algorithm which solves this problem using one call to a successful algorithm for unstructured search with coherent errors, \Cref{problem:coherent_error}. First, note that in the alternative event $\bm{b}=1$ we have $\bm{H}=\bm{H}_1-\tr(\bm{H}_1)I/d$ where $\bm{H}_1$ is a diagonal Hamiltonian. Let $\bm{S}\subseteq [d]$ denote the subset of random indices corresponding to the nonzero entries of $\bm{H}_1$, and note that, by a Chernoff bound, $\bm{S}$ satisfies $20\leq |\bm{S}| \leq 80$ with probability at least $0.99$. Moreover, each of the diagonal entries is in the interval $[\pi-\sigma,\pi+\sigma]$. We can therefore implement a valid oracle for \Cref{problem:coherent_error} (up to an irrelevant global phase) by evolving forwards in time under $\bm{H}$ to obtain $e^{-i\bm{H}}$, and each call to this oracle adds one to the total evolution time. Solving \Cref{problem:coherent_error} in this case outputs an element in $\bm{S}$ with probability at least $0.99$. By contrast, if $\bm{b}=0$, then time evolution under $\bm{H}$ is identity with certainty, and solving \Cref{problem:coherent_error} with the oracle outputs the empty set with probability at least $0.99$. By union bound, the failure probability for solving the block distinguishing task from \Cref{lem:large_angle_reduction} in this manner is at most $0.02$.
\end{proof}

\end{document}